\documentclass[10pt,journal,compsoc]{IEEEtran}
\IEEEoverridecommandlockouts
\usepackage{amsmath,amssymb,amsfonts,amsthm,mathtools}
\usepackage{graphicx}
\usepackage[nocompress]{cite}
\usepackage{url}
\usepackage{booktabs,multirow,array,makecell}
\usepackage{algorithm}
\usepackage{algpseudocode}
\usepackage{xcolor}
\usepackage{tikz}
\usetikzlibrary{arrows.meta,positioning,shapes,fit,calc,backgrounds}
\usepackage[hidelinks]{hyperref}
\usepackage[capitalize,nameinlink]{cleveref}
\usepackage{balance}
\usepackage[final]{microtype}

\newtheoremstyle{tplain}{3pt}{3pt}{\itshape}{}{\bfseries}{.}{.5em}{}
\newtheoremstyle{tdef}{3pt}{3pt}{\normalfont}{}{\bfseries}{.}{.5em}{}
\theoremstyle{tplain}
\newtheorem{theorem}{Theorem}
\newtheorem{lemma}[theorem]{Lemma}
\newtheorem{proposition}[theorem]{Proposition}
\newtheorem{corollary}[theorem]{Corollary}
\theoremstyle{tdef}
\newtheorem{definition}{Definition}
\newtheorem{assumption}{Assumption}

\theoremstyle{remark}

\crefname{assumption}{Assumption}{Assumptions}

\newcommand{\supp}[1]{Appendix~#1}
\newcommand{\negl}{\mathsf{negl}}
\newcommand{\Adv}{\mathsf{Adv}}
\newcommand{\Ws}{W_{\mathsf{s}}}
\newcommand{\Wp}{W_{\mathsf{p}}}
\newcommand{\Rsig}{R_{\mathsf{sig}}}
\newcommand{\Ratt}{R_{\mathsf{att}}}
\newcommand{\kr}{k_{\mathsf{rev}}}
\newcommand{\cert}{\mathsf{cert}}
\newcommand{\Vault}{\mathsf{Vault}}
\newcommand{\mldsa}{\textsf{ML-DSA}}
\newcommand{\Hash}{\mathsf{H}}
\newcommand{\RO}{\rho}
\newcommand{\cS}{\mathfrak{S}}

\makeatletter\renewenvironment{proof}[1][\proofname]{\par\pushQED{\qed}\normalfont\topsep2pt\relax\trivlist\item[\hskip\labelsep\itshape#1\@addpunct{.}]\ignorespaces}{\popQED\endtrivlist\@endpefalse}\makeatother
\def\IEEEbibitemsep{0pt plus .5pt}
\newenvironment{restate}[1]{\par\smallskip\noindent\textbf{#1.}\ \itshape\ignorespaces}{\par\smallskip}
\begin{document}
\title{PQ-Vault: Provably Quantum-Safe Asset Custody for Unmodified Ethereum}

\author{Hassan~Khodaiemehr,~\IEEEmembership{Member,~IEEE,}
        Khadijeh~Bagheri,
        and~Chen~Feng,~\IEEEmembership{Member,~IEEE}%
\IEEEcompsocitemizethanks{\IEEEcompsocthanksitem H. Khodaiemehr, K. Bagheri, and C. Feng are with the School of Engineering, University of British Columbia (Okanagan Campus), Kelowna, BC, Canada.\protect\\
E-mail: \{hassan.khodaiemehr, khadijeh.bagheri, chen.feng\}@ubc.ca
\IEEEcompsocthanksitem This work has been submitted to the IEEE for possible publication. Copyright may be transferred without notice, after which this version may no longer be accessible.}}

\markboth{IEEE Transactions on Dependable and Secure Computing}%
{Khodaiemehr \MakeLowercase{\textit{et al.}}: PQ-Vault: Provably Quantum-Safe Asset Custody for Unmodified Ethereum}

\IEEEtitleabstractindextext{%
\begin{abstract}
A quantum computer could recover the key of any Ethereum account whose public key appears on chain. We prove that no sidechain or bridge that leaves Ethereum's rules unchanged can protect such accounts: their funds must move into a contract whose release condition relies only on quantum-safe primitives. We design \emph{PQ-Vault}, a custody system for unmodified Ethereum combining this vault, a quantum-safe proof-of-stake sidechain whose checkpoints carry compact certificates, and an optimistic dispute protocol. We give the first security proof of compact certificates against quantum attackers and a matching attack: with classically chosen parameters, the guarantee falls from 128 to about 60 bits, and a simple rule restores it. We show that a common seed-update rule in hash-chain proof of stake lets an attacker steer chosen slots and profit from splitting its stake, and give an election whose steering gain and inclusion delay are provably bounded against quantum attackers. Since compressing certificates with a pairing-based SNARK breaks quantum security, we use fraud proofs that need a single signature check. For 65,536 validators, a certificate is 0.8 MB, 260 times smaller than all signatures, and verifies in about 20 ms; the vault accepts a checkpoint for about 420,000 gas.
\end{abstract}

\begin{IEEEkeywords}
Post-quantum cryptography, blockchain, Ethereum, sidechain, bridge, proof of stake, compact certificates, quantum random oracle model, ML-DSA, Fiat--Shamir.
\end{IEEEkeywords}}

\maketitle
\IEEEdisplaynontitleabstractindextext
\IEEEpeerreviewmaketitle

\section{Introduction}\label{sec:intro}

\IEEEPARstart{E}{thereum} authenticates every externally owned account (EOA) with ECDSA over secp256k1, and its consensus layer relies on BLS signatures over pairing-friendly curves. Both are broken in polynomial time by Shor's algorithm~\cite{shor1997}. Once a cryptographically relevant quantum computer (CRQC) exists, which we call \emph{Q-day}, any account whose public key has appeared on chain can be emptied by an adversary that recovers the secret key and signs a valid transaction~\cite{aggarwal2018quantum,stewart2018committing}. Users and applications therefore need a way to hold value \emph{today} that stays safe after Q-day, using post-quantum (PQ) cryptography, without waiting for changes to Layer~1 (L1).

Four lines of work address this problem, and each contributes an essential piece. \emph{Protocol migration}, through an emergency hard fork that freezes ECDSA and recovers accounts~\cite{buterin2024emergency}, PQ precompiles~\cite{eip8051} and hash-based multi-signatures for consensus~\cite{drake2025hashsig}, is the definitive long-term solution, but it depends on community coordination and an uncertain timeline. \emph{PQ ledgers}, such as QRL~\cite{QRL}, Abelian~\cite{abelian_website} and permissioned designs~\cite{Refblock14,cryptoeprint:2020/205}, show that PQ consensus and PQ accounts work in production; they secure their own assets rather than those on Ethereum. \emph{PQ smart accounts}, built on Solidity verifiers for ML-DSA and Falcon~\cite{zknox_dilithium,zknox_falcon}, already run on unmodified Ethereum; each operation pays for an on-chain PQ verification, millions of gas with today's verifiers. \emph{Sidechains, bridges and light-client certificates}~\cite{gazi2019pos,micali2020compact,algorand_stateproofs,chaidos2021mithril,xie2022zkbridge} make cross-chain verification succinct and amortize it over many users, and PQ sidechain designs for Ethereum~\cite{myeprint} combine them with PQ signatures and hash-based proofs~\cite{Virgo_conf}. Their security arguments, however, were developed for classical adversaries. To our knowledge, no existing approach yet offers a custody architecture that runs on unmodified Ethereum today, keeps every check on the release path PQ, is proven secure against quantum adversaries end to end, and amortizes its cost across users.

Building such an architecture raises four questions that the classical setting hides. \emph{What can an external layer protect at all?} A PQ proof does not change which transactions L1 accepts, so a quantum adversary spends any asset that an ECDSA signature can spend (\cref{prop:impossible}). \emph{Are succinct certificates still sound?} Compact certificates~\cite{micali2020compact} are proven in the classical random-oracle model, where grinding with $q$ queries gains a factor $q$; Grover's algorithm~\cite{grover1996} gains a factor of order $q^2$, and classically chosen parameters keep only about $60$ of $128$ bits (\cref{cor:loss}). \emph{Is PQ leader election fair?} In hash-chain proof of stake (PoS)~\cite{QRL}, the next seed depends on who proposes, so a leader can steer it by skipping, and splitting stake pays off (\cref{prop:steer,cor:qrl}); a fix must also hold against superposition queries. \emph{Can verification on L1 be cheap without losing PQ security?} A pairing-based SNARK saves gas but puts a quantum-breakable assumption on the path that L1 executes (\cref{prop:snark}).

We present \emph{PQ-Vault}, a custody architecture for unmodified Ethereum that answers these questions, each with a proof and an evaluation (\cref{tab:contrib}, \cref{fig:arch}): a vault released only by ML-DSA and hash checks, which escrow under a PQ predicate makes necessary (C1); a quantum random-oracle model (QROM) analysis and reveal rule for compact certificates (C2); a steering-resistant, Sybil-proof epoch seed for stake-weighted election, which leaves only an explicitly bounded choice among seeds (C3); and single-reveal fraud proofs in place of SNARK compression (C4). PQ-Vault needs no change to L1; one checkpoint per epoch, accepted for $4.2\times10^5$ gas, serves all users; and the evaluation (C5), with real certificates for up to $65{,}536$ validators and a Solidity vault, is released as an open artifact~\cite{pqvault_artifact}. The appendices contain the full proofs, the complete dispute protocol, and further details of the design and evaluation.

\begin{table}[t]
\centering
\caption{Contributions of this paper.}
\label{tab:contrib}
\scriptsize
\setlength{\tabcolsep}{2.5pt}
\renewcommand{\arraystretch}{1.25}
\begin{tabular}{@{}>{\raggedright\arraybackslash}p{0.2\columnwidth}>{\raggedright\arraybackslash}p{0.77\columnwidth}@{}}
\toprule
Contribution & Key results\\
\midrule
\textbf{C1} Custody (\cref{sec:model,sec:design}) & No mechanism that leaves L1 unchanged protects ECDSA-controlled assets (\cref{prop:impossible}); escrow under a PQ predicate is necessary (\cref{cor:vault}); vault with a PQ-vetoable identity binding (\cref{prop:binding}).\\
\hline
\textbf{C2} Certificates (\cref{sec:cert}) & QROM bound $12t^2(\Wp/\Ws)^{\kr}+O(t^3/2^h)+2\Adv^{\mathsf{uf}}_K$ with explicit constants (\cref{thm:qrom}); Grover attack: the $q^2$ factor is necessary (\cref{thm:attack}); classical parameters: ${\approx}60$ proven bits, attack at $2^{-81}$ (\cref{cor:loss}); reveal rule with $\lambda+2\log_2 q+4$ and a security budget.\\
\hline
\textbf{C3} Leader election (\cref{sec:consensus}) & Exponential race (\cref{lem:race}); steering gain $m\alpha/(1+(m-1)\alpha)$ (\cref{prop:steer,cor:qrl}); commit--reveal epoch seeds with random reveal order ($\Pr[c\ge x]\le\mu\alpha^x$) and fallback proposers; share, capture and liveness bounds (\cref{thm:schedule,prop:late,lem:incl}) against mildly adaptive quantum adversaries, error below $2^{-178}$ plus a per-epoch failure below $2^{-40}$ at $\alpha=0.3$ (\cref{thm:fresh,cor:real}).\\
\hline
\textbf{C4} L1 acceptance (\cref{sec:opf}) & Groth16 wrapper is not PQ (\cref{prop:snark}); dispute state machine with single-reveal fraud proofs (\cref{thm:opf}); end-to-end custody by induction over checkpoints (\cref{thm:e2e}).\\
\hline
\textbf{C5} Evaluation (\cref{sec:eval}) & Dilithium (r3) certificates for $2^8$--$2^{16}$ validators ($0.8$\,MB, $260\times$ smaller), FIPS~204 cross-check; Falcon and hash-based~\cite{drake2025hashsig} baselines; TCP deployment; simulations of \cref{sec:consensus}; Solidity vault ($4.2\times10^5$ gas per checkpoint).\\
\bottomrule
\end{tabular}
\end{table}

\section{Related Work}\label{sec:related}

\noindent\textbf{Quantum threats and PQ ledgers.} Shor's algorithm breaks ECDSA, EdDSA and BLS, and Grover's algorithm speeds up preimage search~\cite{aggarwal2018quantum}; we adopt the distinction between slow and fast quantum attackers of~\cite{stewart2018committing} (\cref{asm:slow}). Our binding (\cref{sec:binding}) is compatible with the proposed quantum-emergency hard fork~\cite{buterin2024emergency} but does not depend on it, and \cref{sec:model} compares PQ-Vault with PQ smart accounts~\cite{zknox_dilithium,zknox_falcon}. QRL~\cite{QRL}, Abelian and its PQZK bridge~\cite{abelian_website,abelian_bridge}, PQFabric~\cite{Refblock14} and SodsBC~\cite{cryptoeprint:2020/205} secure their own ledgers; to our knowledge, which Ethereum assets an external PQ layer can protect has not been formalized before.

\noindent\textbf{Sidechains, light clients and aggregation.} PoS sidechains were formalized in~\cite{gazi2019pos}; light-client proofs include compact certificates and Algorand state proofs~\cite{micali2020compact,algorand_stateproofs}, Mithril~\cite{chaidos2021mithril}, and SNARK-based bridges such as zkBridge~\cite{xie2022zkbridge}, which rely on pairings~\cite{groth16} or on transparent hash-based proofs~\cite{Virgo_conf,bcs2016}. Lattice-based synchronized multi-signatures~\cite{fleischhacker2023chipmunk} and hash-based multi-signatures for Ethereum~\cite{drake2025hashsig} need synchronized or few-time keys, whereas compact certificates work with an unmodified standard scheme and arbitrary weights (\cref{sec:baselines}). To our knowledge, \cref{thm:qrom,thm:attack} give the first quantum analysis of compact certificates. Like optimistic rollups~\cite{kalodner2018arbitrum}, our dispute game is optimistic; the conjunctive structure of certificate verification makes its fraud proofs one-step.

\noindent\textbf{Leader election and the QROM.} Leader election based on verifiable random functions (VRFs)~\cite{gilad2017algorand,david2018praos} is not PQ unless the VRF is, and lattice VRFs remain costly~\cite{esgin2021lbvrf}; hash chains~\cite{QRL} and commit--reveal beacons are PQ alternatives. Grinding and long-range attacks on PoS are well known~\cite{kiayias2017ouroboros,gazi2018stakebleeding}. In the QROM~\cite{boneh2011qrom}, the soundness error of Fiat--Shamir and of BCS-compiled interactive oracle proofs scales with $t^2\varepsilon$~\cite{don2019fs,chiesa2019qrom}; we instantiate this lifting for compact certificates and show that the factor $q^2$ is necessary.

\Cref{tab:related} positions PQ-Vault against representative systems.

\begin{table}[t]
\centering
\caption{Positioning against related systems. \checkmark: provided; $\circ$: partial or assumed; --: not provided.}
\label{tab:related}
\scriptsize
\setlength{\tabcolsep}{2.5pt}
\resizebox{\columnwidth}{!}{%
\begin{tabular}{@{}lcccccc@{}}
\toprule
 & \makecell{Unmod.\\Ethereum} & \makecell{PQ L1\\accept path} & \makecell{QROM\\proof} & \makecell{Grinding\\analysis} & \makecell{Sublinear\\cert} & \makecell{Impl. +\\gas model}\\
\midrule
QRL / Abelian~\cite{QRL,abelian_website} & -- & \checkmark & -- & $\circ$ & -- & --\\
PQZK bridge~\cite{abelian_bridge} & $\circ$ & -- & -- & -- & -- & --\\
Compact certs~\cite{micali2020compact} & -- & -- & -- & $\circ$ & \checkmark & $\circ$\\
zkBridge~\cite{xie2022zkbridge} & \checkmark & -- & -- & n/a & \checkmark & \checkmark\\
Hash-sig multisig~\cite{drake2025hashsig} & -- & \checkmark & $\circ$ & n/a & $\circ$ & $\circ$\\
Our prior work~\cite{myeprint} & \checkmark & -- & -- & -- & \checkmark & $\circ$\\
\textbf{PQ-Vault (this work)} & \checkmark & \checkmark & \checkmark & \checkmark & \checkmark & \checkmark\\
\bottomrule
\end{tabular}}
\end{table}

\section{Preliminaries}\label{sec:prelim}

\noindent\textbf{Notation.} $\lambda$ is the security parameter. $\Hash$ is a hash function with $h$-bit output (Keccak-256 or SHA3-256, with $h=256$), modeled as a random oracle $\RO$ to which the adversary may make \emph{quantum} (superposition) queries~\cite{boneh2011qrom}. We write $q$ for the adversary's queries and $t\ge q$ for all queries in an experiment, including those of honest parties and verifiers. ML-DSA hashes internally with SHAKE256, whose padding differs from that of Keccak-256 and SHA3-256; we model it as an oracle independent of $\RO$. cSHAKE256 with a dedicated customization string, used for the commitments and seeds of \cref{sec:consensus}, is a third independent oracle. For weights $(w_i)$, $w(S)=\sum_{i\in S}w_i$. Stakes and weights are integers.

\noindent\textbf{Signatures.} We use ML-DSA (FIPS~204~\cite{FIPS204}, formerly CRYSTALS-Dilithium~\cite{dilithium2018}), designed to be strongly unforgeable~\cite[Sec.~3.1]{FIPS204} under MLWE, MSIS and SelfTargetMSIS~\cite[Sec.~4]{dilithium2018}. Our proofs need only existential unforgeability (EUF-CMA), so they hold for hedged and for deterministic signing alike~\cite[Sec.~3.4]{FIPS204}. In the multi-user game with $K$ honest keys, all generated by the challenger, and adaptive corruptions, $\Adv^{\mathsf{uf}}_{K}$ denotes the advantage of forging for an uncorrupted key; $K$ counts every honest key generated in the experiment, and guessing the target key loses at most a factor $K$. Public-key/signature sizes are $1312/2420$, $1952/3309$ and $2592/4627$ bytes for ML-DSA-44/65/87.

\noindent\textbf{Merkle trees.} Trees are domain-separated (prefix $0x00$ for leaves, $0x01$ for internal nodes), padded to $2^d$ leaves, and opened with \emph{compact multiproofs}, which contain the siblings of the minimal subtree spanning the opened set $P$ that are not in that subtree: at most $|P|(d-\lfloor\log_2|P|\rfloor+1)$ hashes, and far fewer for clustered positions, so we report measured sizes.

\noindent\textbf{QROM tools.} We use the database framework of Chiesa, Manohar and Spooner~\cite{chiesa2019qrom}. The oracle is simulated as a compressed oracle~\cite{zhandry2019record}, and measuring it after the algorithm halts yields a \emph{database} $D$ of input--output pairs. A \emph{database property} is a set of databases. Its \emph{instability} $I(P,t)$ is the largest probability, over databases $D$ with fewer than $t$ entries, inputs $x$ not in $D$, and a uniform output $y$, that adding $(x,y)$ to $D$ changes whether the database lies in $P$. The conditional instability $I(P\,|\,\bar Q,t)$ counts only transitions in which neither database lies in a second property $Q$~\cite[Def.~7]{chiesa2019qrom}.

\begin{lemma}[\cite{chiesa2019qrom}, Lemmas~1 and~9]\label{lem:extract}
Let $\mathcal A$ make $t$ quantum queries to a random oracle with $n$-bit outputs, and let $D$ be the database measured after it halts. (i)~For database properties $P,Q$ that do not contain the empty database, $\Pr[D\in P]\le 6t^2\big(I(P\,|\,\bar Q,t)+I(Q,t)\big)$. (ii)~If the final check of a game reads $k$ oracle values, its winning probability $\omega_O$ against the real oracle and the probability $\omega_D$ that the check passes with those values read from $D$ satisfy $\sqrt{\omega_O}\le\sqrt{\omega_D}+\sqrt{k/2^{n}}$.
\end{lemma}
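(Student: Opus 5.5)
Since this is Lemmas~1 and~9 of~\cite{chiesa2019qrom}, the plan is to reproduce their compressed-oracle argument rather than invent a new one. The adversary's interaction with $\RO$ is replaced by Zhandry's purified compressed oracle~\cite{zhandry2019record}. The joint state after $i$ queries is then a superposition over (adversary register, database $D$) with $|D|\le i$, and it is perfectly indistinguishable from the real run. For a database property $P$, let $\Pi_P$ be the projector onto basis states whose database lies in $P$. We track the amplitude $\delta_i=\|\Pi_P|\phi_i\rangle\|$ after $i$ queries. Since the empty database is not in $P$, we have $\delta_0=0$, and $\Pr[D\in P]=\delta_t^2$.

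\textbf{Part (i).} The core step is a per-query progress bound. A single compressed-oracle query acts locally on one database entry: it either adds a fresh pair $(x,y)$ with $y$ close to uniform, modifies an entry, or removes one. The amplitude that crosses from $\bar P$ into $P$ is then bounded by $\sqrt{c\cdot I(P,i)}$ for a small constant $c$. Applying the triangle inequality over $t$ queries gives $\delta_t\le t\sqrt{c\,I(P,t)}$, hence $\Pr[D\in P]\le c\,t^2 I(P,t)$.

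For the conditional version, I would split every transition into $P$ into two kinds:
\begin{itemize}
\item[(a)] transitions in which neither the old nor the new database lies in $Q$, which are charged to $I(P\,|\,\bar Q,t)$;
\item[(b)] transitions that enter or occupy $Q$, which are charged to $I(Q,t)$ by running the same progress argument for $Q$, again using that $Q$ does not contain the empty database.
\end{itemize}
Summing the two amplitude bounds and using $(a+b)^2\le 2a^2+2b^2$ yields the form $6t^2\big(I(P\,|\,\bar Q,t)+I(Q,t)\big)$.

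\textbf{Main obstacle.} I expect the hardest step to be the local-update analysis, namely getting the explicit constant (here $3$ per query, doubled to $6$). The compressed oracle's decompression map is not a clean ``add a uniform entry'' operation. I would first write its action in the Fourier basis, and then bound separately the three cases (insert, change, delete) by the instability, as CMS do.

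\textbf{Part (ii).} The final check reads $k$ oracle values. I would simulate each read as a classical query to the compressed oracle and compare with reading the values directly from the measured database $D$. For inputs already in $D$ the two coincide. For inputs not in $D$, each read differs from the $D$-based evaluation by a state of norm at most $\sqrt{1/2^n}$, since a fresh uniform value is involved. Summing these differences in norm over the $k$ reads, or more tightly combining them orthogonally, bounds the distance between the two final states by $\sqrt{k/2^n}$. Since the square root of an acceptance probability is the norm of a projected state, the triangle inequality then gives $\sqrt{\omega_O}\le\sqrt{\omega_D}+\sqrt{k/2^n}$.
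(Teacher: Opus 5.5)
Your outline follows the same route as the paper, which gives no proof of its own and simply imports Lemmas~1 and~9 of Chiesa--Manohar--Spooner, i.e., the compressed-oracle progress bound for (i) and Zhandry's measure-the-database lemma for (ii). The one step to state precisely is (ii): the per-read loss of $2^{-n/2}$ relies on each read being compared with a value fixed before it (Zhandry's formulation, into which a general predicate check is cast by having the read values output), and $\sqrt{k/2^{n}}$ rather than $k/\sqrt{2^{n}}$ comes from splitting the final state into orthogonal pieces according to the first $x_i$ whose entry is still $\bot$ and applying Cauchy--Schwarz, not from adding per-read error vectors in quadrature.
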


\noindent When inputs of different kinds have different output lengths, instability is computed with the output length of the queried input, and in (ii) $n$ is the smallest output length among the values read. For the property $\mathsf{Col}$ that $D$ contains two inputs with equal $h$-bit outputs, $I(\mathsf{Col},t)<t/2^h$~\cite[Sec.~2.5]{chiesa2019qrom}. If $D\notin\mathsf{Col}$, each root in $D$ determines at most one tree of entries of $D$ below it; a leaf that does not occur in $D$ is treated as $\bot$.

\noindent\textbf{Compact certificates~\cite{micali2020compact}.} Fix an attestor list $((pk_i,w_i))_{i<N}$ committed in a Merkle tree with root $\Ratt$, a message $M$, and a \emph{proven weight} $\Wp$. Let $S$ be the set of attestors that signed $M$, and let $\Ws=w(S)>\Wp$ be the \emph{signed weight}. The prover assigns to each $i\in S$ the interval $[L_i,L_i+w_i)$, with the $L_i$ accumulated over signers in index order. It commits to the leaves $(\sigma_i,L_i)$ in a tree with root $\Rsig$ and derives $\kr$ coins
\begin{equation}\label{eq:coin}
c_j=\RO(\textsc{dom}\,\|\,j\,\|\,\Rsig\,\|\,\Ratt\,\|\,\Ws\,\|\,\Wp\,\|\,N\,\|\,M)\bmod \Ws ,
\end{equation}
for $j\in[\kr]$, using the first 128 output bits. For each coin it reveals the leaf whose interval covers the coin, together with that attestor's leaf in $\Ratt$. The verifier checks every coin, interval, signature, and Merkle opening (\cref{alg:cert}). In the classical ROM, a prover that knows signatures of total weight at most $\Wp$ succeeds with probability at most $q(\Wp/\Ws)^{\kr}$, and~\cite{micali2020compact} accordingly sets $\kr=\lceil(\lambda+\log_2 q)/\log_2(\Ws/\Wp)\rceil$.

\begin{algorithm}[t]
\caption{Certificate verification $\mathsf{Vf}(\Ratt,M,\Wp,\cert)$. The verifier state holds $W$, the number $N$ of positions of the attestor list (zero-weight positions included), and $d=\lceil\log_2 N\rceil$; $\delta$, $\kappa_Q$ and the design query bound $q=2^{64}$ are fixed in \cref{sec:cert}. All quantities are integers, and leaves have fixed-length encodings. Comments mark checks that were missing from, or different in, our prior prototype~\cite{NIROPoK2025}.}\label{alg:cert}
\footnotesize
\begin{algorithmic}[1]
\State parse $\cert=(\Rsig,\Ws,\{(i,pk_i,w_i,\sigma_i,L_i)\}_{i\in P},\pi_{\mathsf{sig}},\pi_{\mathsf{att}})$
\State \textbf{require} $(1+\delta)\Wp\le\Ws\le W$ \Comment{$W,N$ from state}
\State $\kr\gets\lceil(\lambda+2\log_2 q+\kappa_Q)/\log_2(\Ws/\Wp)\rceil$ \Comment{QROM rule}
\State \textbf{require} $P\subseteq[N]$, $1\le|P|\le\kr$ \Comment{query bound}
\For{$i\in P$} \textbf{require} $w_i>0 \wedge L_i+w_i\le \Ws$ \Comment{interval}
\EndFor
\For{$j=0,\dots,\kr-1$}
  \State $c_j\gets$ \cref{eq:coin} with a 128-bit reduction
  \State \textbf{require} $\exists i\in P: L_i\le c_j<L_i+w_i$ \Comment{coverage}
\EndFor
\State \textbf{require} each $i\in P$ covers some $c_j$ \Comment{no unused reveal}
\For{$i\in P$} \textbf{require} $\mldsa.\mathsf{Vf}(pk_i,M,\sigma_i)=1$
\EndFor
\State \textbf{require} $\pi_{\mathsf{sig}}$ opens $\{(\sigma_i,L_i)\}_{i\in P}$ at $P$ under $\Rsig$
\State \textbf{require} $\pi_{\mathsf{att}}$ opens $\{(pk_i,w_i)\}_{i\in P}$ at $P$ under $\Ratt$
\State \Return $1$
\end{algorithmic}
\end{algorithm}

\section{System and Threat Model}\label{sec:model}

\subsection{Entities}
\noindent On unchanged \emph{Ethereum (L1)}, \emph{users} hold an address $\mathit{Addr}_M$ with ECDSA keys $(sk_M,pk_M)$ and ML-DSA keys $(sk_S,pk_S)$; ML-DSA-keyed \emph{validators} stake on the sidechain $\mathcal{S}$; \emph{relayers} post checkpoints to L1; \emph{watchers} (anyone) challenge invalid ones; the \emph{vault} $\Vault$ escrows assets (\cref{sec:design}).

\subsection{Adversary}
The QPT adversary $\mathcal{A}$ (i)~recovers the secret key of any ECDSA or BLS public key it sees after Q-day $\tau_Q$; (ii)~makes $q$ quantum queries to $\RO$ ($q\le 2^{64}$ in concrete estimates); (iii)~corrupts validators holding at most a fraction $f<1/3$ of every epoch's stake, \emph{mildly adaptively}: a corruption requested in epoch $x$ takes effect from epoch $x+3$, the validator staying honest and keeping its secrets until then. It may also suppress (DoS) at most $B$ honest proposers per slot, chosen after seeing the schedule. Offline honest validators count as corrupt in the leader-election analysis, whose bounds need corrupt plus offline stake below one third. Past-epoch validators may be corrupted after unbonding, which releases stake only $T_{\mathsf{fin}}$ after $\Vault$ finalized a checkpoint whose next attestor set excludes the validator (\cref{sec:opf}), so no retired key is in an attestor set the vault still accepts, even if checkpointing stalls; (iv)~schedules sidechain messages within the synchrony bound $\Delta_{\mathcal S}$; and (v)~submits arbitrary L1 transactions and orders the mempool, subject to \cref{asm:l1}.

\subsection{Assumptions}

\begin{assumption}[L1 integrity]\label{asm:l1}
L1 executes contract code as specified, and any transaction submitted by an honest party is included within $\Delta_{\mathsf{inc}}$.
\end{assumption}

\begin{assumption}[PQ primitives]\label{asm:pq}
ML-DSA (hedged or deterministic) is EUF-CMA secure against QPT adversaries, and $\Hash$ behaves as a quantum random oracle.
\end{assumption}

\begin{assumption}[Slow key recovery]\label{asm:slow}
Recovering an ECDSA secret key from a public key takes longer than $\Delta_{\mathsf{inc}}$ (no \emph{fast} quantum attacker~\cite{stewart2018committing}). It is needed only for transactions that \emph{expose} an ECDSA key (deposits, challenges from fresh EOAs); private transaction channels relax it further.
\end{assumption}

\begin{assumption}[One honest watcher]\label{asm:watcher}
At least one honest watcher is online whenever a proposal is pending. Within $\Delta_{\mathsf{chk}}$ of a proposal's inclusion, or of the completion of an open-all posting, it obtains the certificate records or determines that they are unavailable, and evaluates $\mathsf{Local}$ on all of them (\cref{sec:opf}). It can pay, per pending proposal, one open-all fee and one challenge bond $b_C$, and submits $\mathsf{timeout}$ and $\mathsf{timeoutAll}$ once eligible.
\end{assumption}

\subsection{Security Goals}
\noindent\textbf{G1 (custody safety).} The vault releases an asset only to a recipient authorized by its legitimate PQ owner, as recorded by the honest sidechain.
\noindent\textbf{G2 (certificate soundness).} No accepted checkpoint certificate for $(\Ratt,M,\Wp)$ exists unless attestors of total weight greater than $\Wp-w(\text{corrupt})$ honestly signed $M$.
\noindent\textbf{G3 (fair, steering-resistant leader election).} The adversary's leadership share and its probability of capturing a target slot are bounded explicitly in its stake fraction and $B$.
\noindent\textbf{G4 (liveness).} Given a live finality gadget, an honest withdrawal is paid within bounded time, except with the small probability of \cref{cor:real}.

\subsection{Why External PQ Layers Cannot Protect EOAs}

\begin{proposition}[Custody impossibility]\label{prop:impossible}
Let $\Pi$ be any protocol that does not modify L1 validity rules. Suppose that at some time $t\ge\tau_Q$ an asset $a$ is still spendable under the L1 predicate $\mathsf{ECDSA.Vf}(pk_M,\cdot)=1$ with $pk_M$ public. Then a QPT adversary can produce an L1-valid transaction spending $a$ that L1 cannot distinguish from one signed by the owner. Nothing produced by $\Pi$ can make this transaction invalid.
\end{proposition}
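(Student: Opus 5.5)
The proof is a direct construction followed by a determinism argument about L1's validity rules. It has three steps.

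\emph{Step 1: recover the key.} By capability~(i) of the adversary model, at any time $t\ge\tau_Q$ the adversary $\mathcal A$ sees the public $pk_M$ and runs Shor's algorithm to obtain a secret key $sk'$ with $sk'\cdot G=pk_M$. Because secp256k1 has prime order, the discrete logarithm is unique, so $sk'=sk_M$. The adversary therefore holds exactly the owner's signing key, not merely some key accepted by the verifier.

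\emph{Step 2: build the spending transaction.} $\mathcal A$ forms a transaction $\mathit{tx}$ that spends $a$. It sets the nonce, chain identifier and fee fields to the values L1 expects for $\mathit{Addr}_M$, all of which are read from public state. It then signs with $\mathsf{ECDSA.Sign}(sk_M,\cdot)$. The signature algorithm is public and keyed only by $sk_M$, so the resulting signature is distributed exactly as one the owner would produce. With RFC~6979 deterministic signing it is even bit-identical to the owner's signature on the same transaction.

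It follows that no efficient test, in particular none in L1's code, can distinguish $\mathit{tx}$ from an owner-signed transaction. The only input L1 has about authorization is the tuple $(\mathit{tx},\sigma)$, and that tuple has the same distribution in both cases.

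\emph{Step 3: $\Pi$ cannot invalidate the transaction.} This is the main point, and it relies on the hypothesis that $\Pi$ leaves L1 validity rules unchanged. L1 validity of $\mathit{tx}$ is a deterministic predicate of the pre-state and $\mathit{tx}$ alone. For an asset whose spending condition is $\mathsf{ECDSA.Vf}(pk_M,\cdot)=1$, this predicate consults no state that $\Pi$ can write.

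The obstacle is to state this carefully. $\Pi$ can post contracts, proofs and messages, and can place transactions on L1, but each of these only creates new state or new transactions. None of them enters the EOA authorization predicate, because that predicate is fixed by the protocol rules, which $\Pi$ does not modify.

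The one remaining influence is ordering: $\Pi$ might try to race $\mathit{tx}$ with the owner's own transaction. I will handle this with the hypothesis that $a$ is \emph{still} spendable under ECDSA at time $t$. If the owner has not already moved $a$ away, then at the moment $\mathit{tx}$ is evaluated the predicate accepts it. Any transaction $\Pi$ could issue to move $a$ would itself need an ECDSA signature, and $\mathcal A$ can produce such signatures equally well.

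Hence $\mathit{tx}$ is valid, and the only possible protection is to remove $a$ from the ECDSA predicate before $t$. This remark is what motivates \cref{cor:vault}.
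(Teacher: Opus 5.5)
Your proposal is correct and follows the same route as the paper's proof. Both recover $sk_M$ with Shor's algorithm, sign a transfer indistinguishable from an owner-signed one, argue that L1 validity depends only on the unchanged predicate and account state so nothing $\Pi$ produces can invalidate it, and leave the race with the owner to transaction ordering; your extra details (uniqueness of the discrete logarithm, deterministic signing) are harmless refinements the paper's proof does not need.
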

\begin{proof}[Proof sketch]
$\mathcal{A}$ obtains $sk_M$ by Shor's algorithm; L1 validity depends only on the predicate and account state, which $\Pi$ cannot change. The full proof is in \supp{S1}.
\end{proof}

\begin{corollary}\label{cor:vault}
Consider, on unmodified L1, an asset that must remain \emph{live}, i.e., movable by its owner, and \emph{safe} against $\mathcal{A}$ after $\tau_Q$. Every predicate that can release it after $\tau_Q$ must then rest on PQ assumptions and on \cref{asm:l1} alone; in particular, no ECDSA key whose public key is exposed may authorize its release, or change or redirect such a predicate without a PQ-authorized veto. On an account-based L1 this requires holding the asset under a contract predicate $\Phi$.
\end{corollary}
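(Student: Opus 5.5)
The plan is to derive \cref{cor:vault} from \cref{prop:impossible} by contraposition, applied to every release path of the asset after $\tau_Q$. The key notion is a \emph{release predicate}: any L1-checkable condition whose satisfaction lets a transaction move the asset. This covers signature checks on the account, contract code paths, and code paths that modify or redirect the predicate itself, such as upgrades, owner changes and recovery hooks. First, we fix the asset $a$ and enumerate these predicates. Liveness says the set is nonempty after $\tau_Q$: some predicate can still be satisfied by the owner. Safety says that no QPT adversary can satisfy any of them except with negligible probability. The whole argument is a case analysis on what each predicate rests on.

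Second, suppose some predicate $\phi$ relies on an assumption that $\mathcal{A}$ breaks. The main instance is an ECDSA verification under an exposed $pk_M$, or any predicate whose satisfaction reduces to forging under such a key. We then generalize \cref{prop:impossible}. Its proof uses only that L1 validity depends on the predicate and the account state. So $\mathcal{A}$ runs Shor to obtain $sk_M$ and produces a witness satisfying $\phi$. By \cref{asm:l1}, L1 executes $\phi$ exactly as specified, so the transaction releases $a$, and nothing $\Pi$ produces can invalidate it. This contradicts safety. Hence every release predicate rests only on PQ assumptions (\cref{asm:pq}) and on \cref{asm:l1}.

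Third, we treat predicates that can \emph{change or redirect} the release predicate. If an exposed ECDSA key can authorize such a change, $\mathcal{A}$ does it in two steps. It first installs a predicate it controls, then satisfies that predicate. The combined attack is again an L1-valid release, so the previous step applies. A PQ-authorized veto blocks this only if the veto window is enforced by L1 code and the vetoing owner's transaction is included within $\Delta_{\mathsf{inc}}$, which \cref{asm:l1} guarantees. This is why the corollary exempts changes that carry such a veto.

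Finally, on an account-based L1 the only release predicate of an EOA is ECDSA verification under its own key. That key is exposed as soon as the account signs, and liveness requires the owner to sign after $\tau_Q$. So the asset cannot sit in an EOA. The only other holders on unmodified Ethereum are contract accounts, whose release predicate is the contract code $\Phi$; hence the asset must be escrowed under such a $\Phi$.

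I expect the main obstacle to be making ``rests on PQ assumptions'' precise. I would define it as: there is a reduction from satisfying the predicate without the owner's cooperation to breaking ML-DSA or $\Hash$. The contrapositive is then the claim that any predicate lacking such a reduction, but satisfiable via an exposed ECDSA key, is broken in QPT. I would state this dichotomy only for predicates built from ECDSA, ML-DSA and hash checks, and not for arbitrary predicates.
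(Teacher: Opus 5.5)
Your route is the paper's: apply \cref{prop:impossible} to every path by which an exposed ECDSA key could release the asset, either directly or by changing the predicate without a veto. Then note that an EOA is governed by its ECDSA key, so only contract code can supply another predicate. Your generalization to arbitrary quantum-broken predicates and the install-then-satisfy attack make explicit what the paper's two-sentence proof leaves implicit.

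One step does not hold as written. In the EOA paragraph you derive exposure from liveness: the owner must sign after $\tau_Q$, which reveals $pk_M$. But that signature is on the transaction that moves the asset out. It endangers the asset only if $\mathcal A$ recovers $sk_M$ and gets a conflicting transaction ordered first while the owner's transaction is pending. That requires a fast attacker, which \cref{asm:slow} excludes. The paper accordingly does not claim this case: the corollary is about exposed keys, and the appendix states explicitly that it does not cover addresses whose public key has never been exposed.

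To fix it, restrict the last step to an EOA whose key is exposed. The paper's one-line argument then suffices. With EIP-7702 delegation an EOA can also carry code, but its ECDSA key remains a release predicate, which is all you need. If you want the never-exposed case too, spell out the mempool race and state that it holds only without \cref{asm:slow}. The corollary's wording ``PQ assumptions and \cref{asm:l1} alone'' arguably permits this, but the paper does not claim it.
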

\begin{proof}[Proof sketch]
Such a key falls under \cref{prop:impossible}, and only contract code can replace an EOA's ECDSA predicate. The full proof is in \supp{S1}.
\end{proof}

\noindent PQ-Vault directly instantiates \cref{cor:vault} (scope and consequences: \supp{S1}). So does a contract account verifying an ML-DSA or Falcon signature per operation~\cite{zknox_dilithium,zknox_falcon}, with no committee or watcher but $1.3$--$8.2\times10^{6}$ gas per operation (\cref{tab:gas}), versus $1.0\times10^{5}$ gas per exit plus $4.2\times10^{5}$ per shared checkpoint for PQ-Vault under the assumptions of \cref{sec:opf}. An EOA delegating to such code (EIP-7702~\cite{eip7702}) stays ECDSA-controlled and falls under \cref{prop:impossible}.

\section{PQ-Vault Design}\label{sec:design}

In \cref{fig:arch}, value enters the vault on L1, moves on the PQ sidechain, and leaves only via withdrawals in a \emph{certified checkpoint}, accepted directly or optimistically (\cref{sec:opf}). After $\tau_Q$ the vault evaluates only Keccak and, on the dispute path, ML-DSA predicates; the ECDSA-authorized binding registry is off the release path (\cref{sec:binding}), so \cref{cor:vault} is met.

\begin{figure*}[t]
\centering
\resizebox{\textwidth}{!}{%
\begin{tikzpicture}[
  font=\footnotesize,
  hd/.style={font=\footnotesize\bfseries},
  box/.style={draw, rounded corners=2pt, align=left, text width=52mm, inner sep=2.5pt, minimum height=23mm, execute at begin node=\raggedright, fill=white},
  small/.style={draw, rounded corners=2pt, align=center, text width=24mm, inner sep=2.5pt, minimum height=23mm},
  lane/.style={draw, rounded corners=4pt, inner sep=5pt, gray},
  badge/.style={circle, draw=white, line width=0.6pt, text=white, font=\scriptsize\bfseries, inner sep=0.6pt, minimum size=5.2mm},
  num/.style={circle, draw, fill=white, font=\scriptsize\bfseries, inner sep=0.5pt, minimum size=3.8mm},
  lab/.style={font=\scriptsize, fill=white, inner sep=1pt, align=center},
  arr/.style={-{Stealth[length=2.2mm]}, thick, blue!50!black},
  rarr/.style={-{Stealth[length=2.2mm]}, thick, red!70!black},
  darr/.style={-{Stealth[length=2.2mm]}, thick, dashed, gray!70!black}]
\colorlet{cA}{blue!65!black}\colorlet{cB}{violet!75!black}\colorlet{cC}{green!45!black}\colorlet{cD}{orange!85!black}
\node[small, fill=red!6, draw=red!60!black, dashed] (user1) at (1.3,0) {\textbf{User EOA}\\ $(sk_M,pk_M)$, ECDSA\\[2pt] {\scriptsize\color{red!60!black}funds left here are\\ unprotectable after $\tau_Q$\\ (Prop.~\ref{prop:impossible})}};
\node[box, fill=cA!5] (vault) at (6.05,0) {{\centering\textbf{PQ-Vault contract}\par}\vspace{1.5pt}\scriptsize
$\triangleright$ escrow, released only by Keccak\,+\,ML-DSA (Cor.~\ref{cor:vault})\\
$\triangleright$ time-anchored, PQ-vetoable binding, no in-circuit ECDSA (Prop.~\ref{prop:binding})\\
$\triangleright$ quantum canary freezes ECDSA paths\\
$\triangleright$ checkpoint log, $\Ratt$; withdraw via $\omega_e$: $1.0{\times}10^5$ gas};
\node[box, fill=cD!6] (opf) at (12.0,0) {{\centering\textbf{Optimistic acceptance}\par}\vspace{1.5pt}\scriptsize
$\triangleright$ propose\,+\,finalize: $4.2{\times}10^5$ gas\\
$\triangleright$ dispute opens \emph{one} reveal: one ML-DSA check, $1.3$--$8.2{\times}10^6$ gas (Thm.~\ref{thm:opf})\\
$\triangleright$ secure with a single honest watcher\\
{\color{red!65!black}$\times$ Groth16 wrapper: Shor-breakable on L1 path (Prop.~\ref{prop:snark})}};
\node[small, fill=gray!8, draw=gray!70, text width=27mm] (direct) at (17.05,0) {\textbf{Direct verification}\\ (baseline)\\[2pt] {\scriptsize $1.8{\times}10^8$--$1.1{\times}10^9$ gas, $0.8$\,MB calldata; exceeds the $2^{24}$ tx cap (EIP-7825)}};
\node[small, fill=gray!8] (user2) at (1.3,-3.7) {\textbf{User PQ wallet}\\ $(sk_S,pk_S)$\\ ML-DSA\\[2pt] {\scriptsize signs transfers and withdrawal leaves}};
\node[box, fill=cC!6] (qpos) at (6.05,-3.7) {{\centering\textbf{QPoS-S consensus}\par}\vspace{1.5pt}\scriptsize
$\triangleright$ stake-weighted score = exponential race: stake-proportional and Sybil-invariant (Lem.~\ref{lem:race})\\
$\triangleright$ runner-up rule of hash-chain PoS steers: gain $m\alpha/(1{+}(m{-}1)\alpha)$ (Prop.~\ref{prop:steer})\\
$\triangleright$ fix: epoch seed from wide commit--reveal, random reveal order, fallback proposers, bounds vs.\ quantum attackers (Cor.~\ref{cor:real})};
\node[box, fill=cB!5] (ccok) at (12.0,-3.7) {{\centering\textbf{Compact certificate}\par}\vspace{1.5pt}\scriptsize
$\triangleright$ ML-DSA-44/65/87 or Falcon-512 attestors\\
$\triangleright$ QROM bound $12t^2(\Wp/\Ws)^{\kr}$ (Thm.~\ref{thm:qrom}); Grover attack shows its $q^2$ factor is necessary (Thm.~\ref{thm:attack}); classical rule: ${\approx}60$ bits (Cor.~\ref{cor:loss})\\
$\triangleright$ $\kr=\lceil(\lambda{+}2\log q{+}4)/\log\frac{\Ws}{\Wp}\rceil$\\
$\triangleright$ $N{=}2^{16}$, Dilithium2 (r3): $0.8$\,MB, $260\times$ smaller, $18$\,ms};
\node[small, fill=gray!8, text width=27mm] (roll) at (17.05,-3.7) {\textbf{Execution}\\ PQ accounts\\[2pt] {\scriptsize checkpoint $M_e$ commits to state, withdrawals root $\omega_e$, next set $\Ratt^{e+1}$}};
\node[draw, rounded corners=2pt, fill=cD!12, align=center, font=\scriptsize, text width=27mm] (watch) at (17.05,-1.85) {\textbf{Watchers} (any party)\\ challenge within $T_{\mathsf{ch}}$};
\node[badge, fill=cA] at (vault.north east) {C1};
\node[badge, fill=cD] at (opf.north east) {C4};
\node[badge, fill=cC] at (qpos.north east) {C3};
\node[badge, fill=cB] at (ccok.north east) {C2};
\begin{scope}[on background layer]
\node[lane, fit=(user1)(vault)(opf)(direct), label={[anchor=south west, hd]north west:Ethereum L1 (unmodified: no fork, no precompile)}] {};
\node[lane, name=sclane, fit=(user2)(qpos)(ccok)(roll), label={[anchor=north west, hd]south west:PQ sidechain $\mathcal{S}$ (Keccak + ML-DSA only)}] {};
\end{scope}
\draw[rarr] (user1) -- node[num, above=3pt]{1} (vault);
\draw[darr] ([xshift=-12mm]vault.south) -- node[num, left=2pt]{2} ([xshift=-12mm]qpos.north);
\draw[arr] (user2) -- node[num, above=3pt]{3} (qpos);
\draw[arr] (qpos) -- node[num, above=3pt]{4} (ccok);
\draw[arr] (roll) -- node[lab, above=2pt]{$\omega_e$} (ccok);
\draw[arr] (ccok.north) -- node[num, left=2pt]{5} (opf.south);
\draw[arr] (opf) -- node[num, above=3pt]{6} (vault);
\draw[arr, cD] (watch.west) -| ([xshift=14mm]opf.south);
\draw[darr] (direct) -- node[lab, above=2pt]{vs.} (opf);
\node[draw, rounded corners=2pt, fill=gray!4, align=left, font=\scriptsize, text width=188mm, inner sep=3pt, anchor=north west] (leg) at ([yshift=-6mm,xshift=1.5mm]sclane.south west) {%
\textbf{Steps} \textcircled{1}--\textcircled{6} follow \cref{sec:flow}. \textbf{Arrows:} {\color{blue!50!black}solid blue}: PQ-authenticated \enspace {\color{red!70!black}red}: ECDSA, pre-$\tau_Q$ or vetoable \enspace {\color{gray!70!black}dashed}: observation or comparison. \enspace
\textbf{Composition:} C1, C2 and C4 combine in Thm.~\ref{thm:e2e} (end-to-end custody), and C3 keeps leader election fair and live: after $\tau_Q$, escrowed funds move only by Keccak and ML-DSA checks plus one honest watcher.};
\end{tikzpicture}}
\caption{PQ-Vault architecture (QPoS-S: quantum-safe proof of stake). Badges C1--C4: contributions of \cref{tab:contrib}; gas measured with Foundry (\cref{sec:eval}); grey: baseline avoided by C4.}
\label{fig:arch}
\end{figure*}

\subsection{Vault Contract}\label{sec:vault}
$\Vault$ stores (i)~escrowed balances; (ii)~a binding registry $\mathsf{Bind}[\mathit{Addr}_M]=\Hash(pk_S)$; (iii)~the latest finalized epoch $e^\ast$, its checkpoint message $M_{e^\ast}$, the attestor commitment $\Ratt$ with total stake $W$, $\Wp=\lfloor W/3\rfloor$ and attestor count $N$, and the finalized withdrawal roots; (iv)~pending proposals with their dispute state; and (v)~spent withdrawal nullifiers. $\mathsf{deposit}(\beta)$ escrows \texttt{msg.value} for the PQ identity $\beta=\Hash(pk_S)$; $\mathsf{propose}$, with a bond $b_R$, takes a header $(e,h_e,\rho_e,\omega_e,\Ratt^{e+1},\beta_e,W^{e+1},N^{e+1})$ (block hash, state root, \emph{withdrawals root}, next attestor commitment, binding-operations root, next stake and attestor count) and $(\Rsig,\Ws,\kr,C_{\cert})$. It requires $e=e^\ast+1$ and $\kr$ equal to \cref{eq:rule}, recomputes
\begin{multline*}
M_e=\Hash(\textsc{cp}\,\|\,\mathit{chainid}\,\|\,\Vault\,\|\,e\,\|\,M_{e^\ast}\,\|\,h_e\,\|\,\rho_e\\\|\,\omega_e\,\|\,\Ratt^{e+1}\,\|\,\beta_e\,\|\,W^{e+1}\,\|\,N^{e+1}),
\end{multline*}
on which signatures and coins are checked, snapshots $(\Ratt,W,\Wp,N)$, and opens the proposal for challenge. As the vault computes $M_e$ itself, a finalized $\omega_e$ and $\Ratt^{e+1}$ are exactly those signed; $C_{\cert}$ commits to the certificate's reveal records (blob data). Further proposal mechanics are in \supp{S2} and \cref{sec:opf}. After the challenge window, with no dispute open, $\mathsf{finalize}$ sets $e^\ast\gets e$, $\Ratt\gets\Ratt^{e+1}$, $W\gets W^{e+1}$, $\Wp\gets\lfloor W^{e+1}/3\rfloor$, $N\gets N^{e+1}$, and other proposals for epoch $e$ become stale. $\mathsf{withdraw}$ pays $\mathit{amount}$ to $\mathit{recipient}$ if a Merkle path opens the leaf $\xi=\Hash(0x00\,\|\,\Vault\,\|\,\mathit{chainid}\,\|\,\mathit{recipient}\,\|\,\mathit{amount}\,\|\,\mathit{nonce})$ at its index under a finalized $\omega_e$ and $\xi$ is unspent; it marks $\xi$ spent before transferring. Globally unique nonces make $\xi$ a nullifier, paid at most once.

A withdrawal leaf is created only by a sidechain transaction signed with $sk_S$, so the chain of trust is $\Ratt^{0}$ (fixed at deployment) $\to$ certified checkpoints $\to$ $\omega_e$ $\to$ a withdrawal leaf $\to$ an ML-DSA signature; ECDSA is never consulted. Recipients should be fresh addresses or PQ smart accounts (\cref{prop:impossible}).

\subsection{Time-Anchored, PQ-Vetoable Identity Binding}\label{sec:binding}
Replacing the ZK binding of~\cite{myeprint} (\supp{S2}), we authenticate the binding classically \emph{while classical authentication is still meaningful}, and freeze it afterwards. \emph{Bind:} if no binding exists, $\mathit{Addr}_M$'s call $\mathsf{bind}(\Hash(pk_S))$ takes effect immediately. \emph{Rebind:} a later call creates the (single) pending rebinding of the address, with a fresh counter $n$ and activation time $\tau_{\mathsf{act}}=\tau_{\mathsf{req}}+D$; $\mathsf{applyRebind}$ succeeds only at a timestamp strictly greater than $\tau_{\mathsf{act}}$ and only if $n$ is not cancelled. \emph{Veto:} the current PQ identity ML-DSA-signs $\textsf{cancel}(\Vault,\mathit{Addr}_M,n)$ on the sidechain; it reaches L1 in the root $\beta_e$ of a certified checkpoint, and $\mathsf{applyCancel}(\mathit{Addr}_M,n,e,\pi)$, callable by anyone, checks that $\pi$ opens it under the finalized $\beta_e$ and marks $n$ cancelled, effective if included by $\tau_{\mathsf{act}}$. \emph{Canary:} a bounty on an address with unknown discrete logarithm, claimable by an ECDSA signature on a fixed digest, permanently disables all ECDSA-authorized registry operations once quantum attacks are demonstrated (\supp{S2}).

\begin{proposition}[Binding integrity]\label{prop:binding}
Suppose $\mathsf{Bind}[\mathit{Addr}_M]=\Hash(pk_S)$ was set by $\mathit{Addr}_M$'s owner, and let $\Delta_{\mathsf{v}}=\Delta_{\mathsf{obs}}+\Delta_{\mathsf{sc}}+\Delta_{\mathsf{cp}}+\Delta_{\mathsf{apply}}$, where $\Delta_{\mathsf{obs}}$ bounds the owner's time to observe a pending rebinding and sign, $\Delta_{\mathsf{sc}}$ bounds sidechain inclusion of its transaction, $\Delta_{\mathsf{cp}}$ bounds the time from inclusion to finalization of the carrying checkpoint, including disputes and the $\mathsf{finalize}$ call (\cref{thm:opf}), and $\Delta_{\mathsf{apply}}$ bounds preparing and including $\mathsf{applyCancel}$. Suppose $D>\Delta_{\mathsf{v}}$, the owner is online at least once in every window of length $D-\Delta_{\mathsf{v}}$, these bounds hold, and \cref{asm:l1,asm:pq,asm:slow,asm:watcher} and the timing conditions of \cref{thm:opf} hold. By \cref{cor:real} and \cref{thm:opf}, $\Delta_{\mathsf{sc}}$ and $\Delta_{\mathsf{cp}}$ fail only with the probabilities given there. Then a QPT adversary can change the binding only by forging an ML-DSA signature, by breaking certificate soundness, or before the owner's first binding transaction is included.
\end{proposition}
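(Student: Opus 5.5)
I would prove the statement by enumerating every registry transition that can change $\mathsf{Bind}[\mathit{Addr}_M]$, and showing that each one either is blocked or falls into one of the three exceptional events. By the vault specification (\cref{sec:vault,sec:binding}) and \cref{asm:l1}, the only state changes are: (a)~an initial $\mathsf{bind}$, which succeeds only when no binding exists; and (b)~$\mathsf{applyRebind}$ on a pending rebinding with counter $n$, which succeeds only at a timestamp strictly after $\tau_{\mathsf{act}}=\tau_{\mathsf{req}}+D$ and only if $n$ is not cancelled. The canary only disables operations and can never create one. Case (a) is excluded once the owner's first binding transaction is included, because from then on a binding exists. So in case (a) the adversary can change the binding only before that inclusion, which is the third exception.

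For case (b), fix an adversarial rebinding request at $\tau_{\mathsf{req}}$ with counter $n$. The target is an honest cancellation of $n$ that lands before $\tau_{\mathsf{act}}$. The owner is online in every window of length $D-\Delta_{\mathsf{v}}$, so it observes the request and signs $\textsf{cancel}(\Vault,\mathit{Addr}_M,n)$ with $sk_S$ by time $\tau_{\mathsf{req}}+(D-\Delta_{\mathsf{v}})+\Delta_{\mathsf{obs}}$. I would then add up the delays along the path:
\begin{itemize}
\item sidechain inclusion within $\Delta_{\mathsf{sc}}$, which holds except with the probability of \cref{cor:real};
\item inclusion of $\beta_e$ in a checkpoint that finalizes within $\Delta_{\mathsf{cp}}$, by \cref{thm:opf} under \cref{asm:watcher}, with the honest watcher defeating any invalid competing proposal;
\item inclusion of $\mathsf{applyCancel}$ within $\Delta_{\mathsf{apply}}$, by \cref{asm:l1}.
\end{itemize}
The total is at most $\tau_{\mathsf{req}}+D=\tau_{\mathsf{act}}$. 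The cancellation is therefore effective, and a later $\mathsf{applyRebind}$ fails. The strict inequality in the activation check settles the boundary case where both calls land at the same timestamp. Because a single counter $n$ is pending at a time, repeated requests each get their own cancellation, so the argument applies to each request separately.

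It remains to rule out the adversary interfering with this honest path. Three routes are possible:
\begin{itemize}
\item \emph{Cancelling or vetoing on the owner's behalf, or otherwise corrupting the checkpoint.} A cancellation that the owner did not sign, or a finalized $\beta_e$ that differs from the honest sidechain's, requires either an ML-DSA forgery under $pk_S$ (\cref{asm:pq}) or an accepted checkpoint that honest attestors did not sign. The latter breaks certificate soundness (G2, via \cref{thm:qrom,thm:opf}).
\item \emph{Front-running with ECDSA.} By \cref{asm:slow}, the adversary can recover $sk_M$ only after the key is exposed. Even then it can only create rebind requests, which the argument above handles; it cannot remove a cancellation. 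After the canary fires, ECDSA operations are disabled entirely.
\item \emph{Replaying an old cancellation.} This does not help the adversary, because each cancellation is bound to $(\Vault,\mathit{Addr}_M,n)$ and counters are fresh.
\end{itemize}

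I expect the main obstacle to be the timing composition. The problem is making the window accounting exact: the owner's first online moment after $\tau_{\mathsf{req}}$ can occur up to $D-\Delta_{\mathsf{v}}$ later, and $\Delta_{\mathsf{cp}}$ must cover the worst-case dispute duration of \cref{thm:opf}, including stale proposals and open-all postings by the adversary. I would handle this by invoking \cref{thm:opf}'s liveness bound directly as the definition of $\Delta_{\mathsf{cp}}$, and by carrying the failure probabilities of \cref{cor:real,thm:opf} as additive error terms in the final statement.
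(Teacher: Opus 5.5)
Your proposal is correct and follows essentially the same argument as the paper. Once the first binding exists, every change must go through a rebinding pending for $D$; the owner's online guarantee leaves at least $\Delta_{\mathsf{v}}$ to get the cancellation of $n$ applied by $\tau_{\mathsf{act}}$, and the strict guard on $\mathsf{applyRebind}$ settles the boundary case. Interference then requires either an ML-DSA forgery under $pk_S$ or a finalized checkpoint that no honest validator signed, which \cref{thm:opf,thm:qrom} bound. Your explicit delay sum and your ECDSA front-running and replay cases only spell out what the paper's shorter proof leaves implicit.
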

\begin{proof}[Proof sketch]
Any change after the first binding needs a rebinding pending for $D$, during which the owner is online with at least $\Delta_{\mathsf{v}}$ remaining and gets a cancellation of $n$ applied by $\tau_{\mathsf{act}}$, before $\mathsf{applyRebind}$ can succeed. Preventing this requires an ML-DSA-signed operation the owner never signed (a forgery), or a finalized checkpoint omitting final binding operations, which no honest validator signs (\cref{thm:opf,thm:qrom}). After the canary fires, the binding cannot change at all. The full proof is in \supp{S2}.
\end{proof}

\noindent No later theorem relies on the registry or the canary (\supp{S2}).

\subsection{Sidechain Protocol}
The sidechain runs QPoS-S (quantum-safe proof of stake), the steering-resistant stake-weighted election of \cref{sec:consensus}. Each epoch has $k$ slots; validator $i$ has a unique $64$-bit identifier $\mathit{ID}_i$, and epochs are $64$-bit numbers. With its ML-DSA-signed registration for epoch $e$, which fixes its stake $w_i$, validator $i$ commits to a uniform $r_i\in\{0,1\}^{512}$ by $a_i=\Hash_{1536}(\textsc{commit}\,\|\,\mathit{ID}_i\,\|\,e\,\|\,r_i)$, with $\Hash_n$ the $n$-bit domain-separated cSHAKE256 (\supp{S2}). Registrations for epoch $e$ must be final before the reveal window of epoch $e-2$ opens. Reveals for epoch $e$ are ordinary transactions in the last $\ell=\ell_0+L_t\ell_s$ slots of epoch $e-1$. The reveal \emph{order} is a stake-weighted random permutation: with $\kappa_i=-\log_2 u(\Hash(\textsc{order}\,\|\,e\,\|\,\mathit{ID}_i\,\|\,\cS^{(e-1)}))/w_i$, rank~$1$ goes to the smallest $\kappa_i$. The $L_t$ best-ranked identities form the \emph{tail} and reveal last, in reverse rank order, each in its own sub-window of $\ell_s$ slots; all others reveal in a common first phase of $\ell_0$ slots. A reveal is valid only if it opens $a_i$ and is included in its owner's phase or sub-window, and honest validators broadcast $r_i$ only when that starts. Single-slot finality (as in Tendermint-style BFT gadgets) finalizes each reveal-window block before the next slot, so earlier reveals are final when a sub-window starts. A validator missing its epoch-$e$ reveal is suspended (not slashed) from epoch $e+3$ until it re-registers. QPoS-S uses $L_t=25$, $\ell_s=\ell_0=52$, so $\ell=1{,}352$, and $k\ge2{,}000$. The reveal list $X$, encoding the revealed $(\mathit{ID}_i,r_i)$ as fixed-length fields in increasing identifier order, determines the seed
\begin{equation}\label{eq:schedule}
\cS^{(e)}=\Hash_{512}(\textsc{seed}\,\|\,e\,\|\,X).
\end{equation}
In slot $j\in\{1,\dots,k\}$ of epoch $e$, validator $i$'s score is
\begin{equation}\label{eq:score}
s_{i,j}=\frac{-\log_2 u\big(\Hash(\textsc{score}\,\|\,e\,\|\,j\,\|\,\mathit{ID}_i\,\|\,\cS^{(e)})\big)}{w_i},
\end{equation}
with $u(x)=(\mathrm{int}(x)+1)/2^{h}$. The lowest score leads (ties: smaller identifier); the epoch's schedule is computable once the seed is final. Ranks $2,\dots,L+1$ are \emph{fallback proposers}: the proposer of slot $j$ is the best-ranked identity among ranks $1,\dots,L+1$ whose ML-DSA-signed block arrives within its deadline $r\Delta_{\mathcal S}$ for rank $r$, and an honest proposer includes every pending transaction, including reveals, received $\Delta_{\mathcal S}$ earlier. The fallback changes who proposes, never the seed, so it opens no steering channel. QPoS-S uses $L=3$. These sizes keep commitments statistically binding and hiding errors small in the QROM (\cref{thm:fresh}; overhead in \supp{S2}). Unlike~\cite{QRL,myeprint,NIROPoK2025}, the seed \emph{does not depend on who proposed earlier blocks}; \cref{sec:consensus} shows why this matters.

Blocks are signed with ML-DSA. Every $N_{\mathsf{cp}}$ blocks the leader assembles a checkpoint message $M_e$. \emph{Honest-signing rule:} an honest validator signs $M_e$ only if (i)~the prefix it commits to is final (any standard BFT finality gadget with $f<1/3$ fits) and extends $M_{e-1}$; (ii)~every state transition is valid under the execution rules, which credit each final deposit event once, debit a withdrawal's amount from its owner's balance, and accept a transfer or withdrawal only with a valid ML-DSA signature of the owner's $sk_S$ on the full operation; (iii)~$\omega_e$ contains exactly the leaves of the withdrawals executed in epoch $e$, each with a fresh $\mathit{nonce}$; and (iv)~$\Ratt^{e+1}$ lists the bonded validators of epoch $e+1$ with their stakes, each at one position and with a registered key whose expanded form the validator has checked. Any party can aggregate signatures into a certificate (\cref{alg:cert}). With $\Wp=\lfloor W/3\rfloor$ and $f<1/3$, a sound certificate implies that at least one honest validator signed $M_e$, hence that $M_e$ satisfies the rule.

\noindent\textbf{Trust in the attestor set.} The vault checks certification, not validity: over a third of the attestor stake could certify an arbitrary $M_e$, whereas $\Wp=2W/3$ would require more than two thirds at about $4\times$ the reveals and $2.8\times$ the certificate size (\cref{tab:kr,sec:eval}). Custody safety therefore rests on corrupt stake staying below one third of every accepted attestor set (\cref{sec:model}) (economics: \supp{S2}).

\subsection{End-to-End Flow}\label{sec:flow}
The six steps are: (1)~deposit into $\Vault$; (2)~after L1 finality, validators credit $\Hash(pk_S)$; (3)~PQ-signed transfers on $\mathcal S$; (4)~an $sk_S$-signed withdrawal creates a leaf in $\omega_e$; (5)~a relayer proposes $(M_e,\cert_e)$, finalized after $T_{\mathsf{ch}}$; (6)~anyone calls $\mathsf{withdraw}$ with the leaf's Merkle proof. Latency is $\Delta_{\mathsf{cp}}=$ checkpoint interval $+$ certification time $+$ finalization time, the last being $T_{\mathsf{ch}}+\Delta_{\mathsf{inc}}$ without disputes and at most $T_{\mathsf{fin}}+\Delta_{\mathsf{inc}}$ with them (\cref{thm:opf}), if $\mathsf{finalize}$ is submitted once eligible. As a checkpoint may wait for its predecessor ($e=e^\ast+1$), in the worst case $\Delta_{\mathsf{cp}}\le$ checkpoint interval $+$ certification time $+\,2(T_{\mathsf{fin}}+\Delta_{\mathsf{inc}})$, provided the checkpoint interval is at least $T_{\mathsf{fin}}+\Delta_{\mathsf{inc}}$ (used in \cref{prop:binding}; certification time: \cref{sec:eval}).

\section{Compact Certificates Against Quantum Provers}\label{sec:cert}

\subsection{Security Game}
\begin{definition}[$\mathsf{CERT\text{-}FORGE}$]\label{def:game}
The challenger generates ML-DSA key pairs for honest attestors on request, at most $K$ in total. The adversary $\mathcal A$ fixes an attestor list $((pk_i,w_i))_{i<N}$ with integer weights, adaptively and possibly after signing queries. Each position carries either the key of one honest attestor (at most one position per honest key) or a key \emph{registered} by $\mathcal A$; registered positions are corrupt. The challenger commits to the list with root $\Ratt$, and $W=\sum_iw_i$. $\mathcal A$ may corrupt honest attestors and then receives their secret keys; $C$ is the set of positions that are registered or corrupted when $\mathcal A$ halts. At any time, $\mathcal A$ may ask an uncorrupted honest attestor $i$ to sign a message of its choice; $Q_i$ is the set of messages $i$ signed. This models honest validators signing the checkpoints they consider final, whenever they do so. $\mathcal A$ makes $q$ quantum queries to $\RO$ and outputs $(M^\ast,\cert^\ast)$. It wins if $\mathsf{Vf}(\Ratt,M^\ast,\Wp,\cert^\ast)=1$ and
\[w\big(C\cup S_{M^\ast}\big)\le \Wp,\qquad S_{M^\ast}=\{i\notin C: M^\ast\in Q_i\}.\]
\end{definition}

Winning means that the certificate claims more support than exists; the corrupt/honest split exists only in the analysis. If $w(C)\le\Wp$, an accepted certificate that does not win implies that at least one honest attestor signed $M^\ast$. $\Wp=\lfloor W/3\rfloor$ meets this for every $f<1/3$.

\subsection{QROM Soundness}
For a claimed signed weight $\Ws>\Wp$ and $\kr$ reveals, define the per-commitment soundness error
\begin{equation}\label{eq:eps}
\varepsilon(\Ws)=\Big(\frac{\Wp}{\Ws}+\frac{\Ws}{2^{128}}\Big)^{\kr(\Ws)}.
\end{equation}
The additive term accounts for the 128-bit modular reduction in \cref{eq:coin}. The verifier accepts only $(1+\delta)\Wp\le\Ws\le W$ (\cref{alg:cert}), with $W<2^{64}$ from its state; let $\varepsilon^\ast=\max\varepsilon(\Ws)$ over this range.

\begin{theorem}[QROM soundness of compact certificates]\label{thm:qrom}
Let $d=\lceil\log_2N\rceil$, let $q_V=\kr(2d+3)$ with $\kr=\kr((1+\delta)\Wp)$, let $t=q+q_V+2^{d+1}$, and let $h$ be the smallest output width of the oracles and trees involved. For every adversary making $q$ quantum queries to $\RO$ and any number of signing queries,
\[
\Adv^{\mathsf{CERT\text{-}FORGE}}_{\mathcal A}\le 12t^2\varepsilon^\ast+\frac{48t^3+2q_V}{2^{h}}+2\Adv^{\mathsf{uf}}_{K}.
\]
\end{theorem}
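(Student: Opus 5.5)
We follow the Chiesa--Manohar--Spooner template of \cref{lem:extract}: we pass from the real oracle to the measured database, then bound the probability that the database contains a ``lucky'' certificate. We first run a hybrid that removes forgeries. Let $\mathsf{Forge}$ be the event that some revealed record $(i,pk_i,w_i,\sigma_i,L_i)$ with $i\notin C$ carries a valid signature on $M^\ast$ although $M^\ast\notin Q_i$. A reduction guesses nothing beyond the standard multi-user game: it embeds the $K$ challenger keys, answers signing and corruption queries through the multi-user oracles, and simulates $\RO$ itself (as a compressed oracle). It therefore bounds $\Pr[\mathsf{Forge}]\le\Adv^{\mathsf{uf}}_K$. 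The factor $2$ in front of $\Adv^{\mathsf{uf}}_K$ will come from the square-root step below, via $(a+b)^2\le 2a^2+2b^2$.

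Outside $\mathsf{Forge}$, every revealed position lies in $C\cup S_{M^\ast}$, a set $G$ of weight at most $\Wp$. This set is fixed independently of the coins, since the signing and corruption history is classical. Next we apply \cref{lem:extract}(ii). The verifier's final check reads $k\le q_V$ oracle values: $\kr$ coins, plus at most $\kr(2d+2)$ Merkle hashes for the two multiproofs, and the root recomputations. We add a preliminary run that queries all $2^{d+1}$ node hashes of $\Ratt$, which is why $t$ includes $2^{d+1}$. This gives $\omega_O\le 2\omega_D+2q_V/2^h$, which explains the $2q_V/2^h$ term.

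It remains to bound $\omega_D$, the probability that the database $D$, with $t$ entries, contains an accepting certificate with all reveals in $G$. We take $Q=\mathsf{Col}$ and let $P$ be the set of databases containing a root $\Rsig$ (a tree of entries below it), a list $\Ratt$, and a coin entry for some tuple $(\Rsig,\Ratt,\Ws,\dots)$ such that all $\kr$ coins land in intervals of leaves in $G$. By \cref{lem:extract}(i), $\Pr[D\in P]\le 6t^2(I(P\,|\,\overline{\mathsf{Col}},t)+I(\mathsf{Col},t))$, and $I(\mathsf{Col},t)<t/2^h$. With the factor $2$ this yields the $12t^2\cdot t/2^h$ part of the $48t^3/2^h$ term. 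The remaining constant absorbs Merkle-extraction transitions, namely adding an entry that completes a tree under an already queried coin.

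The main obstacle is the conditional instability $I(P\,|\,\overline{\mathsf{Col}},t)$. Absent collisions, each root determines at most one tree, hence one interval assignment $(L_i)$. A transition into $P$ happens either (a) by sampling a new coin output for a fixed determined tree, or (b) by adding a tree node that fixes leaves below an existing coin query. For (b), the new value must match an already-present hash among at most $t$ entries, which costs $O(t/2^h)$. For (a), the $\kr$ coins are independent samples $\RO(\cdot)\bmod\Ws$. Each coin falls into the weight-$\le\Wp$ covered set with probability at most $\Wp/\Ws+\Ws/2^{128}$, by the modular-bias bound. The probability is then at most $\varepsilon(\Ws)\le\varepsilon^\ast$, provided each coin index $j$ is a separate oracle input. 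We handle this by treating the $\kr$ coin queries of one tuple as a single compound input, or by a per-$j$ property chain. Care is needed so that the ``no unused reveal'' and interval checks force the covered set to be exactly the $G$-leaves' intervals, with no overlapping fake intervals. Summing the pieces gives $12t^2\varepsilon^\ast+(48t^3+2q_V)/2^h+2\Adv^{\mathsf{uf}}_K$.
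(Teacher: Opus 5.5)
Your skeleton matches the paper's: the merged coin oracle, \cref{lem:extract}(ii) to pass to the database, and \cref{lem:extract}(i) with $Q=\mathsf{Col}$ and a $2t/2^h$ attach term. The central step, however, fails.

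\Cref{lem:extract}(i) bounds $\Pr[D\in P]$ only for a \emph{fixed} set of databases. Your $P$ is defined through $G=C\cup S_{M^\ast}$, which is not fixed. In $\mathsf{CERT\text{-}FORGE}$ the adversary chooses whom to corrupt, which messages to have signed, and $M^\ast$ adaptively, after superposition queries to $\RO$, so $G$ is a function of the oracle. That the signing and corruption history is classical does not make it independent of the coins. Conditioning on the final $G$ destroys the oracle uniformity on which the instability bound rests, and a union bound over candidate sets costs a factor up to $2^N$.

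Nor can you repair this by letting $P$ ask for \emph{some} light set of valid positions covering all coins. Even an honest certificate reveals at most $\kr$ positions, typically of weight far below $\Wp$. Such a property therefore holds with high probability whenever the valid leaves in $D$ have large total weight, for instance when honest attestors legitimately signed the message.

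The missing idea is to make lightness intrinsic to $D$. For each coin entry, let $V$ be the set of \emph{all} positions such that:
\begin{itemize}
\item both leaves are present in $D$;
\item the signature verifies on the entry's message (a fixed predicate once ML-DSA's independent oracle is fixed);
\item $w_i>0$ and $L_i+w_i\le\Ws$.
\end{itemize}
Let $B$ be the union of their intervals, and call the entry light if $w(V)\le\Wp$. Then $|B|\le w(V)$ regardless of overlaps. The property ``some coin entry is light and hit'' depends neither on $C$ nor on the signing history, and its conditional instability is at most $\varepsilon^\ast+2t/2^h$.

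The forgery is then handled \emph{inside} the database game. Suppose the verifier accepts on values from a collision-free $D$ and $w(C\cup S_{M^\ast})\le\Wp$. The accepted entry is hit. If it is also heavy, then $w(V)>\Wp\ge w(C\cup S_{M^\ast})$, which yields a valid signature on $M^\ast$ at an uncorrupted honest position that never signed it. A reduction extracts it by measuring the compressed-oracle database after $\mathcal A$ halts.

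This is where the factor $2$ on $\Adv^{\mathsf{uf}}_K$ comes from: the forgery term sits inside $\omega_D$, which \cref{lem:extract}(ii) doubles. In your ordering, with $\mathsf{Forge}$ removed in the real game first, that factor would not arise, so your explanation of it does not match your own structure.

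Finally, the extraction assumes $D\notin\mathsf{Col}$, so $\Pr[D\in\mathsf{Col}]\le 6t^3/2^h$ must be added to the $18t^3/2^h$ coming from $P$. Doubling the resulting $24t^3/2^h$ gives the $48t^3/2^h$ term.
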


\begin{proof}[Proof sketch]
Following~\cite{bcs2016,chiesa2019qrom}, we merge the $\kr$ coins of \cref{eq:coin} into one wide oracle (one query simulates each query) and apply \cref{lem:extract}(i) to a fixed database property $P$: some coin entry is \emph{light} (valid positions below its roots weigh at most $\Wp$) and \emph{hit} (all coins covered), with instability at most $\varepsilon^\ast+2t/2^h$. $P$ depends neither on the signing history nor on $C$. If $\mathcal A$ wins, either $D\in P$ or $D$ holds an EUF-CMA forgery, extracted by measuring $D$ after $\mathcal A$ halts. With $I(\mathsf{Col},t)<t/2^h$, this gives $\omega_D\le6t^2\varepsilon^\ast+24t^3/2^h+\Adv^{\mathsf{uf}}_K$ for the database game, and \cref{lem:extract}(ii) with the $q_V$ values read by the verifier gives $(\sqrt{\omega_D}+\sqrt{q_V/2^h})^2\le2\omega_D+2q_V/2^h$, which is the bound. The full proof is in \supp{S3}.
\end{proof}

\noindent The proof needs no entropy of honest signatures, and for $\Wp\ge W/4$, $(1+\delta)\Wp\le\Ws\le W<2^{64}$ and $\delta=0.1$, the reduction term changes $\varepsilon(\Ws)$ by a factor of at most $1+2^{-51}$ (\supp{S3}).

\subsection{The Quadratic Dependence on \texorpdfstring{$q$}{q} Is Necessary}

\begin{theorem}[Quantum grinding attack]\label{thm:attack}
Let $d=\lceil\log_2 N\rceil$ and $c_{\mathsf{ev}}=d+1+\kr$. Suppose the adversary corrupts $C$ with $a=w(C)\le\Wp$, receives no honest signatures, and some position outside $C$ or some padding position exists. Fix any $\Ws$ with $(1+\delta)\Wp\le\Ws\le W$, let $\beta=|\{y<2^{128}: y\bmod\Ws<a\}|/2^{128}$ and $\varepsilon_C=\beta^{\kr}$; then $\varepsilon_C\ge(a/\Ws)^{\kr}$. If $q\ge 2^{d+1}+c_{\mathsf{ev}}$, there is a quantum adversary making at most $q$ queries that wins $\mathsf{CERT\text{-}FORGE}$ with probability at least
\[
\frac{4}{\pi^2}(2T_G+1)^2\varepsilon_C\Big[1-(2T_G+1)^2\big(\varepsilon_C+(d+2)2^{-h}\big)\Big],
\]
where $T_G=\lfloor(q-2^{d+1}-c_{\mathsf{ev}})/(2c_{\mathsf{ev}})\rfloor$.
\end{theorem}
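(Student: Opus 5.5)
We construct an explicit Grover-based adversary and then bound its success probability. The adversary fixes the attestor list honestly, except that the corrupt positions $C$ get keys it holds (registered or corrupted), so it can sign $M^\ast$ for every $i\in C$. It fixes $\Ws$ as in the statement and assigns the intervals as an honest prover with signer set $S$ would: the corrupt positions receive intervals covering exactly $[0,a)$, and the remaining weight $\Ws-a$ is assigned to dummy leaves. These dummy leaves sit at a position outside $C$ or at a padding position, which exists by hypothesis, and carry arbitrary (invalid) signatures. A coin landing in $[0,a)$ is then covered by a leaf the adversary can open with a valid signature. The grinding freedom is a nonce, for instance the bits of one dummy signature field, that changes $\Rsig$. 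For each nonce value $x$, a classical reversible circuit rebuilds the path to $\Rsig(x)$ and evaluates the $\kr$ coins. The tree is built once, costing $2^{d+1}$ hashes, and the per-nonce cost is about $d+1$ path hashes plus $\kr$ coin queries, which gives $c_{\mathsf{ev}}$. The circuit then checks that every $c_j<a$. Note that the win condition holds automatically, since $w(C\cup S_{M^\ast})=a\le\Wp$ and there are no honest signatures, so the only thing left to show is that $\mathsf{Vf}$ accepts.

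The marked fraction of nonces is $\beta^{\kr}=\varepsilon_C$, at least up to collisions of $\Rsig$ across nonces. This is because $c_j=\RO(\cdot)\bmod\Ws$ uses 128 uniform bits, so each coin independently falls below $a$ with probability exactly $\beta$. The inequality $\beta\ge a/\Ws$ holds because $2^{128}=m\Ws+r$, which gives $\beta=(ma+\min(r,a))/2^{128}$, and $ma/(m\Ws)\ge$ this ratio fails only in the favorable direction. So $\varepsilon_C\ge(a/\Ws)^{\kr}$. We run $T_G$ Grover iterations. Each iteration computes and uncomputes the predicate, so it costs $2c_{\mathsf{ev}}$ queries, and this yields the stated $T_G$ together with the budget $q\ge2^{d+1}+c_{\mathsf{ev}}$, which also covers the final classical evaluation. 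After measuring a nonce, the adversary assembles the certificate with $P$ equal to the positions covering coins. The reveal-count bound $|P|\le\kr$ holds, and we must also make sure the ``no unused reveal'' check passes, so only positions whose intervals are actually hit are opened.

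The main obstacle is the probability analysis, because the oracle is random rather than a fixed database. For a fixed random oracle, let $p$ be the empirical marked fraction over the nonce domain. Grover with $T_G$ iterations succeeds with probability $\sin^2((2T_G+1)\theta)$ where $\sin^2\theta=p$. Using $\sin x\ge(2/\pi)x$ for $x\le\pi/2$ together with $\sin^2 x\ge x^2(1-x^2/3)$-type bounds, this is at least $\frac{4}{\pi^2}(2T_G+1)^2p\,[1-(2T_G+1)^2p]$ whenever $(2T_G+1)^2p\le1$. This expression is concave in $p$, so rather than averaging it directly I would lower-bound its expectation using $\mathbb E[p]=\varepsilon_C$ and a second-moment bound $\mathbb E[p^2]\le\varepsilon_C^2+\varepsilon_C/|\text{domain}|$. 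Taking the nonce domain large, say $2^h$ values, makes the variance term negligible. The correction $(d+2)2^{-h}$ absorbs the events where distinct nonces collide in a path hash or in $\Rsig$, of which there are $d+2$ per nonce, since such collisions make $p$ deviate from independent trials. Finally, by a union bound, the measured nonce yields a valid certificate with the claimed probability.
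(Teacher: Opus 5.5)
Your proposal is correct and follows the paper's proof essentially step by step: the same construction (corrupt intervals packed into $[0,a)$, an $h$-bit salt at one position outside $C$, Grover over the salts at $2c_{\mathsf{ev}}$ queries per iteration plus one classical re-evaluation), the same count behind $\beta\ge a/\Ws$ (which reduces to $\Ws\min(a,r)\ge ar$), and the same first- and second-moment analysis of the marked fraction. Two small points need tightening. First, your pointwise bound $\frac{4}{\pi^2}Ap(1-Ap)$ with $A=(2T_G+1)^2$ is equivalent to the paper's truncation step $\mathbb E[\nu\mathbf 1_{\nu>1/A}]\le A\,\mathbb E[\nu^2]$, but you must note that it also holds, trivially, when $Ap>1$, because it is negative there. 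Second, the constant $d+2$ is $d+1$ from equal roots (the first of the $d+1$ path hashes at which two salts' paths agree is a collision) plus $1$ from the diagonal term $\varepsilon_C\,2^{-h}$; that term must be kept rather than dismissed as negligible.
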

\begin{proof}[Proof sketch]
The adversary packs the intervals of $C$ into $[0,a)$ (only revealed intervals are checked) and salts one other position; a salt is marked if all coins of the resulting root fall below $a$. Grover search finds one; the marked fraction $\nu$ has $\mathbb E[\nu]=\varepsilon_C$ and $\mathbb E[\nu^2]\le\varepsilon_C^2+\varepsilon_C(d+2)2^{-h}$, which bounds the loss when $\nu$ is large. The full proof is in \supp{S3}.
\end{proof}

\begin{corollary}[Classical parameters are insufficient]\label{cor:loss}
Consider the classical rule $\kr^{\mathsf{cl}}=\lceil(\lambda+\log_2 q)/\log_2(\Ws/\Wp)\rceil$ and $w(C)=\Wp$. The grinding term of \cref{thm:qrom} then guarantees only about $\lambda-\log_2 q-\log_2 12$ bits, and the attack of \cref{thm:attack} succeeds with probability about $(4/\pi^2)\,q^2(\Wp/\Ws)^{\kr}/c_{\mathsf{ev}}^2$. For $\lambda=128$, $q=2^{64}$, $d=16$, $\Ws/\Wp=2.55$ and $\kr=143$, the guarantee is $61.5$ bits ($60.4$ before rounding $\kr$ up) and the attack succeeds with probability $2^{-81.1}$. Both are far from the intended $2^{-128}$.
\end{corollary}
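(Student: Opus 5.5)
The plan is to substitute the classical rule $\kr^{\mathsf{cl}}$ into both \cref{thm:qrom} and \cref{thm:attack} and read off the exponents. For the upper bound, write $r=\log_2(\Ws/\Wp)$. Then $\kr^{\mathsf{cl}}r\ge\lambda+\log_2 q$, so $(\Wp/\Ws)^{\kr}\le 2^{-\lambda}/q$. The dominant term of \cref{thm:qrom} is $12t^2\varepsilon^\ast$. Since $t=q+q_V+2^{d+1}\approx q$ for $q=2^{64}$, $d=16$ (the additive $q_V,2^{d+1}$ contribute a factor $1+O(2^{-40})$), and the $2^{-128}$-sized reduction term in \cref{eq:eps} changes $\varepsilon$ only by a factor $1+2^{-51}$ (remark after \cref{thm:qrom}), this term is at most about $12q^2\cdot 2^{-\lambda}/q=12q2^{-\lambda}$, i.e. $\lambda-\log_2 q-\log_2 12$ bits. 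The point to make is that this is the \emph{guarantee} offered by the theorem, not a proof of insecurity; the remaining terms $48t^3/2^h$ are about $2^{-58}$ and, for $h=256$, are negligible only if $h$ is wide. I would check that $t^3/2^h$ with $h=256$ is $2^{192-256}$, which is negligible.

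For the concrete numbers, compute $r=\log_2 2.55\approx1.3505$ and $(\lambda+\log_2 q)/r=192/1.3505\approx142.2$, so $\kr=143$. Before rounding, the exponent is exactly $192$, giving $128-64-\log_2 12\approx60.4$ bits. After rounding, $143r\approx193.1$, which adds about $1.1$ bits and gives $\approx61.5$. I will present this as a short arithmetic check, evaluating $\varepsilon^\ast$ at $\Ws=2.55\Wp$ as stipulated. I will also note that larger $\Ws$ only decreases $\varepsilon$, with $\kr$ fixed by the rule at that ratio.

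For the attack, apply \cref{thm:attack} with $a=\Wp$, so that $\varepsilon_C\ge(\Wp/\Ws)^{\kr}$. For large $q$, $T_G\approx q/(2c_{\mathsf{ev}})$, hence $(2T_G+1)^2\approx q^2/c_{\mathsf{ev}}^2$. The bracket is $1-O(q^2\varepsilon_C/c_{\mathsf{ev}}^2+q^2(d+2)2^{-h}/c_{\mathsf{ev}}^2)$, which is $1-2^{-80}$-ish (and $2^{128-256}$ for the second part), so it is essentially $1$. This yields success $\approx(4/\pi^2)q^2(\Wp/\Ws)^{\kr}/c_{\mathsf{ev}}^2$. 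With $c_{\mathsf{ev}}=16+1+143=160$, $\log_2 c_{\mathsf{ev}}^2\approx14.64$ and $\log_2(4/\pi^2)\approx-1.30$, so the exponent is $128-193.1-14.64-1.30\approx-81.1$.

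The hypotheses of \cref{thm:attack} must also be verified. These are $q\ge2^{17}+c_{\mathsf{ev}}$, the existence of a non-corrupt or padding position (which holds whenever $w(C)=\Wp<W$), and $\Ws$ within the verifier's range. The main obstacle is justifying the approximations honestly, namely that replacing $\varepsilon_C$ by $(\Wp/\Ws)^{\kr}$ and $t$ by $q$ costs well under $0.1$ bit. The cleanest way is to bound each relative error explicitly: $\beta$ differs from $a/\Ws$ by at most $\Ws/2^{128}$, and the floor in $T_G$ is negligible. I would state the final figures as rounded values accordingly.
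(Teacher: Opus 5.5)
Your proposal is correct and uses the same substitution the paper relies on. You plug $\kr^{\mathsf{cl}}=143$ into the grinding term $12t^2\varepsilon^\ast$ of \cref{thm:qrom} with $t\approx q$, and into \cref{thm:attack} with $a=\Wp$, $c_{\mathsf{ev}}=160$ and $(2T_G+1)^2\approx q^2/c_{\mathsf{ev}}^2$; your figures ($61.5$, $60.4$, $2^{-81.1}$) and your checks of the hypotheses are right.

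Two asides should be fixed, though neither affects the corollary. First, at $h=256$ the term $48t^3/2^h\approx2^{-58.4}$ is \emph{not} negligible: it exceeds the $2^{-61.5}$ grinding term and dominates the deployed bound (cf.\ \cref{tab:budget}), which is why the corollary speaks only of the grinding term. Second, the verifier recomputes $\kr$ from the claimed $\Ws$, so the maximum $\varepsilon^\ast$ over the accepted range is governed by the unrounded exponent $\lambda+\log_2 q$, giving about $60.4$ bits. It is not governed by monotonicity in $\Ws$ at fixed $\kr$.
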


\noindent The upper bound and the attack differ by a factor of about $3\pi^2c_{\mathsf{ev}}^2$ ($2^{19.5}$ at these parameters), which is independent of $q$, so the quadratic dependence on $q$ is unavoidable. The attack needs corrupt weight close to $\Wp$: at $\Ws=0.85W$ with $w(C)=0.30W$ it reaches only $2^{-102.8}$. Depth limits~\cite{nistcfp2016} reduce Grover's gain in practice, so our rule is conservative. Algorand's target $k+2q$~\cite{algorand_stateproofs} matches our rule up to $\kappa_Q$ and, modelling its hashes as random oracles, certifies about $124$ bits; the classical $k+q$ rule of~\cite{micali2020compact}, used by the prototype of~\cite{NIROPoK2025}, falls under \cref{cor:loss} (\supp{S3}).

\subsection{Parameter Rule and Security Budget}\label{sec:rule}
Write $\kappa_Q=\lceil\log_2 12\rceil=4$. Requiring $12t^2\varepsilon^\ast\le 2^{-\lambda}$ in \cref{thm:qrom}, with $t\le q(1+2^{-20})$ for $q=2^{64}$, gives the reveal rule
\begin{equation}\label{eq:rule}
\kr^{\mathsf{Q}}=\Big\lceil\frac{\lambda+2\log_2 q+\kappa_Q}{\log_2(\Ws/\Wp)}\Big\rceil ,
\end{equation}
since $12(1+2^{-20})^2(1+2^{-51})<16$ for $\Wp\ge W/4$. It controls only the \emph{grinding} term; the exact rule for a total query budget $T$ (all L1-path hash queries) is in \supp{S3}.

The other terms need separate budgets; the commitment term, with $h$ the \emph{smallest} output width, is an upper bound like the generic collision bound~\cite{bht1998}, not an attack (\supp{S3}). \Cref{tab:budget} combines all terms of \cref{thm:e2e} for the deployed Keccak-256 vault and a widened 384-bit variant (not implemented). Keccak-384 certificate trees enlarge Dilithium2 (r3) certificates by $4$--$7\%$ at unchanged verification time ($N=2^{12}$--$2^{16}$, \cref{tab:h384}).

\begin{table}[t]
\centering
\caption{Lifetime security budget of \cref{thm:e2e} in the query model, for $T\le 2^{64}(1+2^{-20})$ total queries (honest and verification queries included) and $E=2^{20}$ epochs. Values are $-\log_2$ of upper bounds; $T$ is as in \cref{sec:rule}.}
\label{tab:budget}
\scriptsize
\setlength{\tabcolsep}{2pt}
\begin{tabular}{@{}>{\raggedright\arraybackslash}p{0.56\columnwidth}cc@{}}
\toprule
Term & \makecell{256-bit\\(deployed)} & \makecell{384-bit}\\
\midrule
Grinding $E\cdot12T^2\varepsilon^\ast$, \cref{eq:rule} with $\lambda=128$ & $108$ & $108$\\
Grinding, rule with $\lambda+\log_2E=148$ ($\kr=208$ at $2.55$) & $128$ & $128$\\
Commitments $E(60T^3+2q_V+2q_C)/2^h$ & $38.1$ & $166.1$\\
Signatures $2E\Adv^{\mathsf{uf}}_{N_{\mathsf{tot}}}+\Adv^{\mathsf{uf}}_U$ & \multicolumn{2}{c}{ML-DSA assumption}\\
Finality $\Pr[\mathsf{Fin}]$ & \multicolumn{2}{c}{gadget assumption}\\
\bottomrule
\end{tabular}
\end{table}

\noindent The deployed bound is dominated by the 256-bit commitment term, shared by every Keccak-256 Merkle structure on Ethereum. A lifetime target of $2^{-128}$ for every term requires both the widened trees and the rule with $\lambda+\log_2 E$; our prototype targets $\lambda=128$ for the grinding term per checkpoint and does not claim a 128-bit lifetime custody guarantee.

With $\lambda=128$ and $q=2^{64}$, the rule increases $\kr$ by a factor of $260/192=1.35$ over the classical rule; because certificate size grows with the number of \emph{distinct} reveals, the measured size overhead is smaller (\cref{sec:eval}). \Cref{tab:kr} lists $\kr$ for typical ratios. With $\Wp=\lfloor W/3\rfloor$ and $85\%$ participation, $\Ws/\Wp=2.55$, whereas certifying a two-thirds threshold ($\Ws/\Wp=1.28$) multiplies the cost by almost $4$. This is why we certify finality, which needs only $\Wp=\lfloor W/3\rfloor$ under $f<1/3$, rather than a supermajority vote.

\begin{table}[t]
\centering
\caption{Reveal count $\kr$ for $\lambda=128$ and $\log_2 q=64$.}
\label{tab:kr}
\scriptsize
\begin{tabular}{@{}lcccccc@{}}
\toprule
$\Ws/\Wp$ & 1.25 & 1.5 & 2 & 2.55 & 3 & 4\\
\midrule
classical~\cite{micali2020compact} & 597 & 329 & 192 & 143 & 122 & 96\\
QROM, \cref{eq:rule} ($\kappa_Q{=}4$) & 808 & 445 & 260 & 193 & 165 & 130\\
\bottomrule
\end{tabular}
\end{table}

\noindent\textbf{Implementation pitfalls.}\label{sec:impl} Three omissions in the prototype of~\cite{NIROPoK2025} each made certificates forgeable: (i)~coin coverage was not checked; (ii)~interval endpoints were accumulated over all attestors rather than signers and were not checked; and (iii)~a heuristic $\kr\propto(1-\Wp/\Ws)$ decreased as $\Ws\to\Wp$, whereas soundness requires $\kr\propto 1/\log(\Ws/\Wp)$. A fourth check, $|P|\le\kr$ with every reveal covering a coin, keeps the verifier's query count bounded as \cref{thm:qrom} assumes. \Cref{alg:cert} includes all four, and our implementation tests every check negatively.

\section{Leader Election: Fairness and Steering}\label{sec:consensus}

Total stake is $1$; the adversary holds stake $\alpha$ (possibly split across identities, including offline honest validators). We analyze an \emph{ideal experiment} with independent uniform scores and prove that the QPoS-S seed of \cref{sec:design} realizes it against quantum adversaries. Full proofs are in \supp{S4}.

\subsection{Stake-Weighted Scores Form an Exponential Race}

\begin{lemma}[Exponential race]\label{lem:race}
Let $U_1,\dots,U_n$ be independent and uniform on $(0,1]$, and let $s_i=-\log_2 U_i/w_i$. Then the $s_i$ are independent, $s_i\sim\mathrm{Exp}(w_i\ln 2)$, so (i)~$\Pr[i\text{ has the lowest score}]=w_i/W$, and (ii)~splitting stake $w$ into identities $w',w''$ with $w'+w''=w$ leaves the coalition's winning probability unchanged (Sybil invariance). With $h$-bit values $u_i=\lceil U_i2^h\rceil/2^h$ and ties broken by the smaller identity, the winner differs from the continuous winner with probability at most $N(N-1)2^{-h}$.
\end{lemma}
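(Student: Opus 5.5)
The plan has three parts: the continuous law of the scores, the exponential-race facts (i) and (ii), and a coupling argument for the $h$-bit discretization.

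\emph{Continuous law.} For $x\ge0$,
\[
\Pr[s_i>x]=\Pr[U_i<2^{-w_ix}]=2^{-w_ix}=e^{-w_i\ln 2\,x}.
\]
So $s_i\sim\mathrm{Exp}(w_i\ln2)$. Independence of the $s_i$ follows because each $s_i$ is a measurable function of $U_i$ alone. Ties in the continuous setting occur with probability zero.

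\emph{Part (i).} Set $\lambda_k=w_k\ln2$ and condition on $s_i=x$. The other scores all exceed $x$ with probability $e^{-(\Lambda-\lambda_i)x}$, where $\Lambda=W\ln2$. Integrating,
\[
\int_0^\infty\lambda_ie^{-\lambda_ix}\,e^{-(\Lambda-\lambda_i)x}\,dx=\frac{\lambda_i}{\Lambda}=\frac{w_i}{W}.
\]

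\emph{Part (ii).} The minimum of independent exponentials is exponential with the summed rate. Hence the coalition's best score $\min(s',s'')$ has law $\mathrm{Exp}(w\ln2)$, the same law a single identity of stake $w$ would have. It is also independent of the remaining scores. The joint law of (coalition minimum, other scores) is therefore unchanged by the split, and so the coalition's winning probability is unchanged. By (i) this probability equals $w/W$.

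\emph{Discretization.} I will couple the discrete and continuous experiments through the same $U_i$, with $u_i=\lceil U_i2^h\rceil/2^h$. Since $t\mapsto-\log_2t/w_i$ is decreasing, $U_i\le u_i$ gives $s^{\mathrm d}_i\le s_i$. The key observation is that both score maps are increasing functions of a uniform on a grid cell. Call a pair $(i,i')$ \emph{bad} if $U_i$ and $U_{i'}$ fall into cells whose discrete scores are not separated in the same order as the continuous ones. I will show that if no pair is bad, the discrete and continuous winners coincide; the tie-breaking rule only matters when discrete scores are equal, and that case is itself a bad event.

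I will bound $\Pr[\text{pair }(i,i')\text{ is bad}]\le2\cdot2^{-h}$. Given $U_i$, the discrete score of $i'$ can equal or cross that of $i$ only when $U_{i'}$ lies in a set of measure at most $2\cdot2^{-h}$. This set consists of the cell containing the threshold point $U_{i'}^\ast=U_i^{w_{i'}/w_i}$, where the continuous scores are equal, together with the cell of the discrete threshold. A union bound over the $\binom N2$ pairs then gives $N(N-1)2^{-h}$.

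The main obstacle is making this per-pair bound rigorous, since the two grids are distorted by the different exponents $w_{i'}/w_i$. The discrete ordering event between $i$ and $i'$ is determined by comparing $u_{i'}$ with $u_i^{w_{i'}/w_i}$. So disagreement requires $U_{i'}$ and $U_{i'}^\ast$ to be separated by at most one rounding step. I will check that this costs at most two cells of width $2^{-h}$. If that argument turns out to be loose, I would fall back on the cruder event ``$U_{i'}$ within $2^{-h}$ of $U_{i'}^\ast$ or of its rounded version'', which has probability at most $2\cdot2^{-h}$ and yields the same bound.
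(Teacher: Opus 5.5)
Your derivation of the continuous law and of (i) and (ii) is correct and is essentially the paper's argument; the paper cites the standard facts on minima of independent exponentials, which your integral verifies.

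The gap is in the discretization, exactly at the step you flag as the main obstacle: the per-pair estimate, as you set it up, is false. Fix $U_i$, write $\gamma=w_{i'}/w_i$, and let $U^\ast=U_i^{\gamma}$ and $V^\ast=u_i^{\gamma}\ge U^\ast$ be the continuous and discrete thresholds for $U_{i'}$. The set of $U_{i'}$ on which the two orders disagree contains $(U^\ast,\lfloor V^\ast2^h\rfloor2^{-h}]$. There $i'$ wins continuously, but its discrete score is not smaller. The length of this set is governed by $V^\ast-U^\ast$, which is not $O(2^{-h})$ uniformly in $U_i$.

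For example, take $w_i=1$, $w_{i'}=10$ and $U_i=1-0.9\cdot2^{-h}$. Then $u_i=1$, $V^\ast=1$ and $U^\ast\approx1-9\cdot2^{-h}$. Every $U_{i'}\in(U^\ast,1-2^{-h}]$ is bad even without counting ties, a set of measure about $8\cdot2^{-h}$ spanning many cells, not two. For $\gamma<1$ the gap instead blows up as $U_i\to0$. Conditioning on $U_{i'}$ rather than $U_i$ only swaps the roles, so no fixed choice of conditioning per unordered pair gives a pointwise two-cell bound. Your fallback event fails for the same reason: points of $(U^\ast,V^\ast)$ can lie several cells away from both thresholds.

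The missing idea is to condition asymmetrically, on the uniform of the pair's continuous winner, one ordered direction at a time. Let $E_{i\to j}=\{s_i<s_j,\ s^{\mathrm d}_j\le s^{\mathrm d}_i\}$, ties included. Up to the null set of continuous ties, your unordered bad event is $E_{i\to j}\cup E_{j\to i}$. On $E_{i\to j}$, your own observation $s^{\mathrm d}_i\le s_i$ gives $s^{\mathrm d}_j\le s_i<s_j$, that is,
\[
U_j<U_i^{w_j/w_i}\le u_j<U_j+2^{-h}.
\]
Rounding the winner's uniform only moves the threshold in the harmless direction, so the distortion $u_i^{w_j/w_i}-U_i^{w_j/w_i}$ never needs to be controlled. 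Given $U_i$, $U_j$ must lie in an interval of length $2^{-h}$ just below $U_i^{w_j/w_i}$, so $\Pr[E_{i\to j}]\le2^{-h}$. A union bound over the $N(N-1)$ ordered pairs gives the claim. This is the paper's proof, and it also justifies your unordered bound $2\cdot2^{-h}$ as $\Pr[E_{i\to j}]+\Pr[E_{j\to i}]$.
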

\begin{proof}[Proof sketch]
$-\ln U\sim\mathrm{Exp}(1)$, and the minimum of exponentials is attained at $i$ with probability proportional to its rate. The full proof is in \supp{S4}.
\end{proof}

\noindent\Cref{lem:race} applies the weighted-sampling identity of~\cite{efraimidis2006} to the score of~\cite{QRL}; it concerns \emph{unselected} uniforms only.

\subsection{Steering Through Proposer-Dependent Seeds}
In~\cite{QRL,myeprint}, scores derive from private hash-chain values, the seed is updated as $\cS_{j+1}=\Hash(\cS_j\,\|\,\text{reveal of the \emph{actual} proposer of }j)$, and a skipped leader is replaced by the runner-up. A coalition holding the top $L$ positions can thus choose among $L+1$ seeds, knowing its \emph{own} next-slot scores (\supp{S4}).

\begin{proposition}[$m$-way steering]\label{prop:steer}
Suppose that, before slot $T$, the adversary can choose $\cS_T$ among $m$ candidates; that under each candidate its own best score is a fresh $\mathrm{Exp}(\alpha)$ variable (rates in units of $\ln 2$), independent across candidates, while the honest best score is $\mathrm{Exp}(1-\alpha)$ and unobserved. Picking the candidate that minimizes its own best score is optimal and leads slot $T$ with probability $p_m(\alpha)=m\alpha/(1+(m-1)\alpha)$.
\end{proposition}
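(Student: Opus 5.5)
We split the proposition into an optimality part and a computation part.

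\emph{Optimality.} The adversary commits to a candidate $c\in\{1,\dots,m\}$ after seeing its own best scores $A_1,\dots,A_m$, with $A_c\sim\mathrm{Exp}(\alpha)$ i.i.d.\ across candidates. The honest best score $H_c\sim\mathrm{Exp}(1-\alpha)$ is independent of everything the adversary observes. Conditioned on the observations, the adversary leads slot $T$ under candidate $c$ with probability
\[
\Pr[H_c>A_c\mid A]=e^{-(1-\alpha)A_c\ln 2}.
\]
This is strictly decreasing in $A_c$. Any (possibly randomized) selection rule therefore wins with probability at most $\mathbb E[e^{-(1-\alpha)\min_c A_c\ln2}]$, and the rule $c=\arg\min_c A_c$ attains this bound. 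The argument assumes the honest score is independent of the adversary's information; this is part of the hypothesis ``unobserved''.

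\emph{Computation.} By \cref{lem:race}, the minimum of $m$ independent $\mathrm{Exp}(\alpha)$ variables is distributed as $\mathrm{Exp}(m\alpha)$. Equivalently, this is an exponential race between one competitor of rate $m\alpha$ and the honest competitor of rate $1-\alpha$. By \cref{lem:race}(i), or directly from $\mathbb E[e^{-\mu X}]=\lambda/(\lambda+\mu)$ for $X\sim\mathrm{Exp}(\lambda)$, the adversary wins with probability
\[
\frac{m\alpha}{m\alpha+1-\alpha}=\frac{m\alpha}{1+(m-1)\alpha}.
\]
The common $\ln 2$ factors in the rates cancel.

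The only delicate step is justifying optimality among all strategies, including randomized and adaptive ones. The conditional-probability monotonicity argument above handles this, since the choice of $c$ is measurable with respect to $A$ and independent of $(H_c)_c$. Ties occur with probability zero, so they can be ignored.
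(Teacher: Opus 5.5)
Your proof is correct, and its optimality step is the paper's argument: the conditional win probability $e^{-(1-\alpha)x}$ decreases in the adversary's own best score $x$, so the argmin rule is optimal. Your handling of randomized rules, which bounds the conditional win probability by its value at $\min_c A_c$, is more explicit than the paper's one-line remark that minimizing is optimal given its information.

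The value computation follows a different route. The paper does not use min-stability of exponentials. Instead it derives the law of the conditional win probability $p=e^{-(1-\alpha)X}$: it shows $p\sim\mathrm{Beta}(b,1)$ with $b=\alpha/(1-\alpha)$. Then the maximum of $m$ independent copies has CDF $v^{mb}$ and mean $mb/(mb+1)=p_m(\alpha)$.

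Your route is shorter and more elementary. You note that the minimum of $m$ independent $\mathrm{Exp}(\alpha)$ variables is $\mathrm{Exp}(m\alpha)$, then evaluate a two-competitor race via $\mathbb E[e^{-\mu X}]=\lambda/(\lambda+\mu)$.

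The paper's detour through the Beta law pays off in \cref{cor:qrl}(b). There the adversary compares its known $p$ with the fresh-seed value $\alpha$ and picks the better option. Evaluating $\mathbb E[\max(p,\alpha)]$ requires the full distribution of $p$, not only the mean of the best candidate, and your min-of-exponentials shortcut does not supply it directly.
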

\begin{proof}[Proof sketch]
As a function of its best score, the adversary's winning probability is $\mathrm{Beta}(\alpha/(1-\alpha),1)$-distributed; the maximum of $m$ copies has mean $p_m(\alpha)$ (\supp{S4}).
\end{proof}

\begin{corollary}[Steering under the runner-up rule]\label{cor:qrl}
Consider proposer-dependent seeds in the race without threshold of~\cite{QRL}, and a single-identity adversary that uses only its decision at slot $T-1$. (a)~If it observes the honest runner-up's reveal before its publish-or-skip decision is irrevocable (e.g., rushing), it captures slot $T$ with probability $\alpha+\alpha^2(1-\alpha)/(1+\alpha)$. (b)~Otherwise, it captures slot $T$ with probability $\alpha\big(\alpha+(1-\alpha)\alpha^{1/(1-\alpha)}\big)+(1-\alpha)\alpha$. Both exceed $\alpha$. Splitting into more identities raises the probability of holding the top $L\ge 2$ ranks, hence the option count, so Sybil invariance fails.
\end{corollary}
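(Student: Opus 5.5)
The plan is to condition on who leads slot $T-1$ and then apply \cref{lem:race} and \cref{prop:steer} to the adversary's publish-or-skip decision. By \cref{lem:race}, the single-identity adversary leads slot $T-1$ with probability $\alpha$. If it does not lead, the seed $\cS_T$ is fixed by the honest proposer's reveal, and its fresh scores at slot $T$ give capture probability exactly $\alpha$; this contributes $(1-\alpha)\alpha$ in both cases. If it leads, it has a single identity, so the runner-up is honest. Publishing yields a seed $\cS_T^{\mathsf{pub}}$ built from its own reveal. Skipping hands slot $T-1$ to the runner-up and yields $\cS_T^{\mathsf{skip}}$ built from the honest reveal. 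Under each seed, the adversary's best score at $T$ is a fresh $\mathrm{Exp}(\alpha)$ variable, independent across the two seeds by the random-oracle model, and the honest best score is $\mathrm{Exp}(1-\alpha)$ and unobserved.

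For case (a), the adversary sees the runner-up's reveal before committing, so it knows its own score under both candidate seeds. This is exactly the setting of \cref{prop:steer} with $m=2$, so it wins slot $T$ with probability $2\alpha/(1+\alpha)$. The total is then $\alpha\cdot 2\alpha/(1+\alpha)+(1-\alpha)\alpha$, which simplifies to $\alpha+\alpha^2(1-\alpha)/(1+\alpha)$.

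For case (b), the adversary knows only its score under $\cS_T^{\mathsf{pub}}$. Skipping therefore gives an unobserved fresh race, whose winning probability is $\alpha$. I would reuse the Beta representation from the proof of \cref{prop:steer}: as a function of its observed score, the publish-win probability $X$ has CDF $x^{a}$ on $[0,1]$ with $a=\alpha/(1-\alpha)$. The optimal rule is to publish iff $X\ge\alpha$, which gives $\mathbb E[\max(X,\alpha)]=\alpha\cdot\alpha^{a}+\frac{a}{a+1}\big(1-\alpha^{a+1}\big)$. Since $a/(a+1)=\alpha$ and $a+1=1/(1-\alpha)$, this equals $\alpha+(1-\alpha)\alpha^{1/(1-\alpha)}$. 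Weighting by $\alpha$ and adding $(1-\alpha)\alpha$ gives the stated formula.

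Both expressions exceed $\alpha$ because the added terms $\alpha^2(1-\alpha)/(1+\alpha)$ and $\alpha(1-\alpha)\alpha^{1/(1-\alpha)}$ are positive for $0<\alpha<1$. For the Sybil claim, I would argue that the number of seeds the coalition controls is one plus the number of consecutive top ranks it holds. With one identity it can never hold ranks $1$ and $2$. With two identities of stakes $\alpha/2$ each, \cref{lem:race} gives a positive probability of holding both, and then \cref{prop:steer} gives $m\ge3$ options with $p_m$ increasing in $m$. The capture probability therefore strictly increases under splitting, contradicting Sybil invariance.

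The main obstacle I expect is justifying independence in case (b): the adversary's decision must depend only on its own publish-seed score and not on any information about the skip seed. I would handle this with the random-oracle model and the hiding of the honest reveal until it is published. I would also justify that restricting to the decision at slot $T-1$ is harmless, as the statement assumes, and treat the $h$-bit discretization error of \cref{lem:race} as negligible.
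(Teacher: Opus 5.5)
Your proposal is correct and follows the paper's proof essentially step for step. It conditions on leading slot $T-1$ (probability $\alpha$ by \cref{lem:race}), applies \cref{prop:steer} with $m=2$ in case~(a), computes $\mathbb E[\max(p,\alpha)]$ for $p\sim\mathrm{Beta}(\alpha/(1-\alpha),1)$ in case~(b), and argues that splitting raises the chance of holding ranks~1 and~2 together. The differences are cosmetic: you evaluate $\mathbb E[\max(p,\alpha)]$ through the density instead of the tail integral $\alpha+\int_\alpha^1(1-v^b)\,dv$, and you state the Sybil conclusion slightly more strongly, as a strict increase in capture probability via $p_3>p_2$, which your argument supports.
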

\begin{proof}[Proof sketch]
Leading slot $T-1$, the adversary applies \cref{prop:steer} with $m=2$ in case~(a), and gains $\mathbb E[\max(p,\alpha)]>\alpha$ against a fresh seed in case~(b) (\supp{S4}).
\end{proof}
\noindent For $\alpha=0.3$ the capture probability is $0.348$ ($+16\%$) in case~(a) and $0.338$ in case~(b); $p_{16}(0.1)=0.64$. Target slots matter because roles attach to slot indices, for example the $N_{\mathsf{cp}}$-th block that fixes a checkpoint, or the slot whose block includes or censors a dispute-relevant transaction.

\subsection{Epoch Seeds: Ideal Model}

In QPoS-S (\cref{sec:design}), each validator reveals its committed secret $r_i$ in the previous epoch's reveal window, in an order drawn from the previous seed; the epoch seed hashes all reveals. Throughout, $\alpha<1/3$.

In each slot, the scores rank all identities as a stake-weighted sample without replacement (\cref{lem:race},~\cite{efraimidis2006}). Ranks $1,\dots,L+1$ may propose; the best-ranked one whose block arrives in time proposes. The adversary may suppress at most $B\le L$ honest identities per slot. Call a slot \emph{guaranteed} if its first $B+1$ ranks are honest; a guaranteed slot has an honest proposer whatever the adversary suppresses. Let $w_{\max}$ be the largest honest stake, $\alpha_B=\alpha/(1-Bw_{\max})$ and $p_B=1-(1-\alpha_B)^{B+1}$; without denial of service, $p_0=\alpha$.

Let $i^\ast$ be the best-ranked honest identity of the tail (it reveals last among honest tail identities), $\tau^\ast$ the first time $r_{i^\ast}$ leaves validator $i^\ast$, and $c$ the number of (adversarial) identities ranked above $i^\ast$, which reveal after it. The adversary decides which of them reveal, so the reveal set takes at most $m=2^c$ values. Let $\mathsf{Fail}$ be the event that the tail contains no honest identity, or that the common phase or some tail sub-window has no guaranteed slot after its first slot. It depends only on the order and the previous schedule, both fixed before $\tau^\ast$.

\begin{definition}[Ideal seed experiment]\label{def:ideal}
An experiment is \emph{$m$-ideal} if, at time $\tau^\ast$, it samples for each of the at most $m$ possible reveal sets $b$ an independent uniform table of scores and next-epoch order keys, independent of everything before $\tau^\ast$; it makes all tables available to the adversary; and every party uses the table of the reveal set that actually results.
\end{definition}

\begin{theorem}[Leader election in the ideal model]\label{thm:schedule}
In any $m$-ideal experiment with continuous scores: (a)~Given the reveal set, the proposer ranking of each slot is a fixed public function of the table. No adversarial action within the epoch (skipping, withholding proposals, choosing among its identities) changes any ranking, except by removing the adversary's own identities. If $m=1$, slot leaders are independent across slots and distributed as in \cref{lem:race}. (b)~The adversary chooses the table among at most $m\le 2^c$ candidates by withholding a subset of its late reveals. A slot is not guaranteed with probability at most $p_B$ under each candidate, and the adversary can propose only in such slots. It proposes in a fixed target slot with probability at most $1-(1-p_B)^m$, and its expected fraction of the $k$ slots of the epoch is at most $p_B+\sqrt{c\ln 2/(2k)}$. (c)~The probability that none of $n$ specified slots is guaranteed is at most $\min\{1,mp_B^{n}\}$. For a window of $n$ consecutive slots whose start the adversary may choose after seeing the schedule, it is at most $\min\{1,(k-n+1)mp_B^{n}\}$.
\end{theorem}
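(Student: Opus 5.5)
The plan is to prove the three parts in order. Throughout, the adversary's only influence on the seed is its choice of reveal set, and every other adversarial action acts on a schedule that is already fixed. Two facts drive everything. First, once the reveal set $b$ is fixed, the table is a fixed random object. Second, the event ``slot $j$ is guaranteed'' depends only on that table and on the honest/adversarial labelling of identities, not on any later adversarial choice.

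For (a), the score \cref{eq:score} of identity $i$ in slot $j$ depends only on $(e,j,\mathit{ID}_i)$ and the table entry for $b$. Hence the ranking of each slot is the sort of these entries, which is a fixed public function of the table. Skipping, withholding a proposal or choosing among its own identities changes neither the seed (which the fallback rule never touches) nor any entry. The only effect of such actions is that an adversarial identity that does not propose is passed over, which is the same as deleting it from the ranking. If $m=1$, the table is a single family of independent uniforms, independent across slots. Applying \cref{lem:race} slot by slot then gives independent leaders with $\Pr[i\text{ leads}]=w_i$.

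For (b), fix a candidate table. By \cref{lem:race} and~\cite{efraimidis2006}, the ranks of a slot form a stake-weighted sample without replacement. I will bound the probability that the first $B+1$ ranks are all honest by chaining conditional probabilities. Suppose the first $r\le B$ ranks are honest. Then the remaining stake is at least $1-Bw_{\max}$, so the next rank is adversarial with probability at most $\alpha/(1-Bw_{\max})=\alpha_B$. This gives $\Pr[\text{guaranteed}]\ge(1-\alpha_B)^{B+1}$, that is, the slot fails to be guaranteed with probability at most $p_B$.

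Next I show the adversary can propose only in non-guaranteed slots. Take a guaranteed slot. At most $B$ of its top $B+1$ honest identities are suppressed, so some honest identity of rank $r\le B+1$ delivers its block by $r\Delta_{\mathcal S}$. Every adversarial identity has a larger rank, so it loses. For the target slot, the tables are independent, so the probability that some candidate table leaves the slot non-guaranteed is at most $1-(1-p_B)^m$; the adversary's choice can only select among these tables.

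For the share bound, let $X_b$ be the fraction of non-guaranteed slots under table $b$. Within a table the slots are independent, so $X_b$ is a mean of $k$ independent Bernoulli variables with parameters at most $p_B$, hence sub-Gaussian with variance proxy $1/(4k)$. The adversary's realized fraction is at most $\max_b X_b$. The standard maximal inequality $\mathbb E\max_b X_b\le p_B+\sqrt{2\sigma^2\ln m}$ then gives $p_B+\sqrt{c\ln 2/(2k)}$, using $m\le2^c$. One point needs care here: $c$ is random. I will condition on the reveal order, which determines $c$ and is fixed before $\tau^\ast$, so the tables are independent of $c$ as \cref{def:ideal} requires.

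For (c), fix a table and $n$ specified slots. Independence across slots gives $\Pr[\text{no slot guaranteed}]\le p_B^n$, and a union bound over the $m$ tables gives $mp_B^n$. For an adversarially placed window, I take a further union over the $k-n+1$ possible starting positions. This is valid because the guaranteed/non-guaranteed pattern of each table is determined before the adversary picks the start.

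I expect the main obstacle to be (b), specifically the interaction of adaptivity with the without-replacement sampling. Suppression is chosen after seeing the schedule, and the reveal set is chosen after seeing all tables. I will handle this by phrasing every bound in terms of the adversary-independent event ``guaranteed'' for each fixed table, and only then maximizing over the adversary's choices. I will also check that the conditional-stake bound $\alpha_B$ stays valid when the adversary's stake is split across identities and when some honest validators are offline. Offline honest validators are counted inside $\alpha$, so the chaining argument is unaffected.
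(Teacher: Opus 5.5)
Your proposal is correct and follows the paper's proof essentially step for step. The shared steps are: the per-candidate bound on the first $B+1$ ranks via stake-weighted sampling without replacement; independence across candidates for the target-slot bound; Hoeffding plus the sub-Gaussian maximal inequality, which needs no independence between candidates, for the share; and union bounds over candidates and window starts for (c). The only difference is cosmetic: you chain conditional probabilities over an all-honest prefix ($a=0$), whereas the paper proves the slightly more general bound $(\alpha-a)/(1-a-g)\le\alpha/(1-g)$, which it reuses with $g=0$ in \cref{prop:late}.
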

\begin{proof}[Proof sketch]
Each of the first $B+1$ ranks is adversarial with probability at most $\alpha_B$; part~(b) uses independence across candidates, part~(c) a union bound over candidates and window starts, and the share uses the sub-Gaussian maximal inequality (\supp{S4}).
\end{proof}

\noindent Fallback proposers matter only under denial of service. Without them ($L=0$), suppressing the leader of every slot leaves the sidechain without honest blocks; with $L\ge B$, only non-guaranteed slots can lose their honest proposer. Part~(c) deliberately does not multiply per-slot probabilities under the selected table, which would be invalid after seed selection.

\begin{lemma}[Reveals are included]\label{lem:incl}
Suppose that an honest proposer includes every reveal it received $\Delta_{\mathcal S}$ before its slot, that messages are delivered within $\Delta_{\mathcal S}$ and a slot lasts at least $(L+2)\Delta_{\mathcal S}$, and that the finality gadget finalizes each block of the reveal window before the next slot starts. If a phase or sub-window of $\ell$ slots has a guaranteed slot after its first slot, every honest reveal of that phase or sub-window is included. Under each candidate table of the epoch that contains the window, this fails with probability at most $p_B^{\ell-1}$.
\end{lemma}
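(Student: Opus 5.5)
The proof has two parts: a deterministic liveness argument, and a probability bound for the existence of a guaranteed slot. We first fix the window and assume it has a guaranteed slot $j$ with $j$ not its first slot.

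\textbf{Deterministic part.} Consider an honest reveal $r_i$ for this phase or sub-window. An honest validator broadcasts $r_i$ exactly when its phase or sub-window starts, i.e., at the beginning of the window's first slot. Messages are delivered within $\Delta_{\mathcal S}$ and a slot lasts at least $(L+2)\Delta_{\mathcal S}\ge\Delta_{\mathcal S}$. Since $j$ comes after the first slot, every honest proposer has received $r_i$ at least $\Delta_{\mathcal S}$ before slot $j$ begins.

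Slot $j$ is guaranteed, so its first $B+1$ ranks are honest. The adversary suppresses at most $B$ of them, so at least one honest identity among ranks $1,\dots,B+1\le L+1$ is unsuppressed. That identity sends its ML-DSA-signed block within its deadline $r\Delta_{\mathcal S}$ for some $r\le B+1\le L+1$. By the proposer rule, the actual proposer of slot $j$ is the best-ranked identity whose block arrives in time.

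\textbf{Main obstacle.} We must ensure that no adversarial block displaces the honest one in slot $j$. This holds because all ranks above the first honest, unsuppressed identity are honest. They are either suppressed, and so produce no block, or they deliver their own honest blocks. In either case the proposer of slot $j$ is honest. Here the slot length $(L+2)\Delta_{\mathcal S}$ is what lets every deadline $r\Delta_{\mathcal S}$ with $r\le L+1$, plus propagation, fit within the slot.

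By the proposer rule, the honest proposer of slot $j$ includes $r_i$, unless $r_i$ is already in an earlier block of the window. Single-slot finality then makes that block final before the next slot. The inclusion therefore lies inside the window, so the reveal is valid and cannot be reorganized away. This holds for every honest reveal of the window.

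\textbf{Probability part.} Fix a candidate table for the epoch containing the window; by \cref{def:ideal} its scores are uniform and independent of the choices made before $\tau^\ast$. The window's slots are specified in advance by the reveal order, so their positions are not chosen after seeing the schedule. The failure event is that none of the $\ell-1$ slots after the first is guaranteed. The per-slot bound of \cref{thm:schedule}(b) gives a probability of at most $p_B$ that a single slot is not guaranteed. Because slots are independent under a fixed table (\cref{thm:schedule}(a) with the table fixed, i.e.\ the $m=1$ case per candidate), the failure probability is at most $p_B^{\ell-1}$. This matches \cref{thm:schedule}(c) with $n=\ell-1$ and $m=1$ per candidate. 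We state the bound per candidate table, not after the adversary's selection among tables, in line with the remark following \cref{thm:schedule}.
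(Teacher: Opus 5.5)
Your proposal is correct and follows the paper's proof essentially step for step. The reveal is broadcast at the window's start, so every honest proposer of a later slot has it in time. A guaranteed slot has an honest proposer, namely the best-ranked unsuppressed identity among the all-honest first $B+1\le L+1$ ranks. Under each candidate table, the slots of the fixed window are independent and each fails to be guaranteed with probability at most $p_B$, which gives $p_B^{\ell-1}$. One small slip: receiving the reveal at least $\Delta_{\mathcal S}$ before slot $j$ requires a slot length of at least $2\Delta_{\mathcal S}$, not $\Delta_{\mathcal S}$; the assumed $(L+2)\Delta_{\mathcal S}$ provides this, so your conclusion stands.
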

\begin{proof}[Proof sketch]
A reveal sent when its window starts reaches every later honest proposer, a guaranteed slot has one, and slots are independent under each candidate (\supp{S4}).
\end{proof}

\begin{proposition}[Late reveals are rare]\label{prop:late}
Consider consecutive $m$-ideal epochs, each with respect to its own $c$, starting from an epoch with $c=0$. Let $\mu=(1-2\alpha)/(1-3\alpha)$, and let $\mathsf{OK}_e$ be the event that $\mathsf{Fail}$ occurs in none of the epochs up to $e$. Then, for every epoch $e$ and every $x\ge1$,
\begin{gather*}
\Pr[c_e\ge x\wedge\mathsf{OK}_e]\le\mu\alpha^x,\qquad\mathbb E[2^{c_e}\mathbf 1_{\mathsf{OK}_e}]\le\mu,\\
\Pr[\mathsf{OK}_{e-1}\wedge\neg\mathsf{OK}_e]\le\delta_{\mathsf{LE}},
\end{gather*}
where $\delta_{\mathsf{LE}}=\mu(\alpha^{L_t}+L_tp_B^{\ell_s-1}+p_B^{\ell_0-1})$. The adversary proposes in a fixed target slot of epoch $e$ with probability at most $\pi_{\mathsf{cap}}+e\delta_{\mathsf{LE}}$, where $\pi_{\mathsf{cap}}=p_B+\sum_{x\ge1}\min\{1,\mu\alpha^x\}\big((1-p_B)^{2^{x-1}}-(1-p_B)^{2^x}\big)$.
\end{proposition}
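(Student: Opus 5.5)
We argue by induction over epochs and track the variable $c_e$ on the event $\mathsf{OK}_e$. The key observation is that $c_e$, the number of adversarial identities ranked above $i^\ast$ in the reveal order of epoch $e$, is a function of the order keys of epoch $e$. By \cref{def:ideal} these keys sit in the table selected at $\tau^\ast$ of epoch $e-1$, among at most $m_{e-1}=2^{c_{e-1}}$ candidates. Under a single candidate the order is a stake-weighted sample without replacement (\cref{lem:race}). The number of adversarial identities preceding the first honest one is therefore stochastically dominated by a geometric variable: each successive rank is adversarial with probability at most $\alpha/(1-\text{(removed adversarial stake)})$. In the worst case (fine splitting, stake removed only from the adversary's side) this makes $\Pr[c\ge x]\le\alpha^x$ per candidate. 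I would prove this first by peeling ranks one at a time. Conditioned on the first $j$ ranks being adversarial with total stake $s\le\alpha$, the next is adversarial with probability $(\alpha-s)/(1-s)\le\alpha$. This gives exactly $\alpha^x$.

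The adversary may pick the candidate that maximizes $c_e$, so a union bound over candidates gives
\[
\Pr[c_e\ge x\mid\mathcal F_{e-1}]\le\min\{1,2^{c_{e-1}}\alpha^x\}\le 2^{c_{e-1}}\alpha^x .
\]
Taking expectations on $\mathsf{OK}_{e-1}$ (note $\mathsf{OK}_e\subseteq\mathsf{OK}_{e-1}$) yields $\Pr[c_e\ge x\wedge\mathsf{OK}_e]\le\alpha^x\,\mathbb E[2^{c_{e-1}}\mathbf 1_{\mathsf{OK}_{e-1}}]$. It then suffices to show $\mathbb E[2^{c_e}\mathbf 1_{\mathsf{OK}_e}]\le\mu$ inductively. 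We compute
\[
\mathbb E[2^{c_e}\mathbf 1]=1+\sum_{x\ge1}2^{x-1}\Pr[c_e\ge x\wedge\mathsf{OK}_e]\le 1+\mathbb E[2^{c_{e-1}}\mathbf 1]\sum_{x\ge1}2^{x-1}\alpha^x .
\]
Here $\sum_{x\ge1}2^{x-1}\alpha^x=\alpha/(1-2\alpha)$, so the recursion is $E_e\le1+E_{e-1}\alpha/(1-2\alpha)$. With $E_0=1$ (since $c=0$) and $\alpha<1/3$, its fixed point is $1/(1-\alpha/(1-2\alpha))=(1-2\alpha)/(1-3\alpha)=\mu$, and the iterates stay below it. This gives both of the first two bounds; the tail bound is $\mu\alpha^x$.

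For $\delta_{\mathsf{LE}}$ we use that $\mathsf{Fail}$ in epoch $e$ is determined by the order and the previous schedule, both read from candidate tables. Under each candidate:
\begin{itemize}
\item the tail has no honest identity with probability at most $\alpha^{L_t}$, by the same peeling argument;
\item each of the $L_t$ tail sub-windows has no guaranteed slot after its first slot with probability at most $p_B^{\ell_s-1}$, by \cref{thm:schedule}(c) or \cref{lem:incl};
\item the common phase fails in the same way with probability at most $p_B^{\ell_0-1}$.
\end{itemize}
A union bound over the at most $2^{c_{e-1}}$ candidates, followed by expectation on $\mathsf{OK}_{e-1}$, multiplies this by $\mu$.

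For the capture bound, decompose on $\neg\mathsf{OK}_e$, whose probability is at most $e\delta_{\mathsf{LE}}$ by telescoping. On $\mathsf{OK}_e$, \cref{thm:schedule}(b) bounds the capture probability by $1-(1-p_B)^{2^{c_e}}$, and summation by parts over $x$ gives
\[
p_B+\sum_{x\ge1}\Pr[c_e\ge x\wedge\mathsf{OK}_e]\Big[(1-p_B)^{2^{x-1}}-(1-p_B)^{2^x}\Big].
\]
Inserting $\min\{1,\mu\alpha^x\}$ yields $\pi_{\mathsf{cap}}$.

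The main obstacle is the conditioning. The adversary chooses the table adaptively after seeing all candidates, so $c_e$ is not a plain sample. We must justify that the selection only costs the union-bound factor $2^{c_{e-1}}$, and that $\mathsf{Fail}$ and $c_e$ depend only on information fixed at $\tau^\ast$ of the previous epoch. This needs care about which epoch's $\tau^\ast$ fixes which table.
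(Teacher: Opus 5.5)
Your route is the paper's: the per-candidate bound $\alpha^x$ by peeling ranks (the $g=0$ case of the proof of \cref{thm:schedule}), a union bound over the $m_{e-1}=2^{c_{e-1}}$ candidate tables, the recursion $y\mapsto 1+y\alpha/(1-2\alpha)$ with fixed point $\mu$, the same per-candidate union bound for $\delta_{\mathsf{LE}}$, and summation by parts of \cref{thm:schedule}(b) for $\pi_{\mathsf{cap}}$. One small correction: $\mathbb E[2^{c_e}\mathbf 1_{\mathsf{OK}_e}]=\Pr[\mathsf{OK}_e]+\sum_{x\ge1}2^{x-1}\Pr[c_e\ge x\wedge\mathsf{OK}_e]$, so your first relation should be $\le$, not $=$.

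The genuine gap is the step you defer as the ``main obstacle''. It is not bookkeeping about which $\tau^\ast$ fixes which table, because it is exactly where the model's assumptions enter. The union bound is valid only if, under each candidate table of epoch $e-1$, the order keys are independent of the adversarial identity set and of the stakes for epoch $e$. An adversary that could corrupt after seeing the candidates would corrupt the top-ranked identities, and no $\alpha^x$ bound would hold. The paper closes this with two facts. First, registrations for epoch $e$, which fix the stakes, are final before the reveal window in epoch $e-2$ opens. Second, by mild adaptivity, any corruption acting during epoch $e$'s reveal window, which lies in epoch $e-1$, was requested by epoch $e-4$. Both therefore precede $\tau^\ast$ of epoch $e-1$, which lies in epoch $e-2$ and is when the $m$-ideal experiment draws the candidate tables. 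Those tables are independent of everything fixed earlier, including $m_{e-1}$ and $\mathbf 1_{\mathsf{OK}_{e-1}}$.

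You also miss a second point: the adversary's relative stake must stay at most $\alpha$ in every epoch. An honest validator whose reveal is censored is suspended three epochs later, which would raise the adversary's share of the remaining stake and void the per-candidate $\alpha^x$ bound in later epochs. The paper notes that on $\mathsf{OK}$ no honest reveal is missing (\cref{lem:incl}), so suspensions remove only adversarial identities. Your proof needs this remark to justify using the same $\alpha$ in the recursion throughout.
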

\begin{proof}[Proof sketch]
By mild adaptivity, registrations and the corrupt set of epoch $e$ are fixed before the candidate tables of epoch $e-1$ are sampled, and under each candidate the first $x$ ranks of the order are adversarial with probability at most $\alpha^x$. Summing over the $m_{e-1}$ candidates gives $\mathbb E[2^{c_e}\mathbf 1_{\mathsf{OK}_e}]\le1+\mathbb E[2^{c_{e-1}}\mathbf 1_{\mathsf{OK}_{e-1}}]\,\alpha/(1-2\alpha)$, whose fixed point is $\mu$ (\supp{S4}).
\end{proof}

\noindent For $B=0$ and $\alpha=0.1$, $0.2$, $0.3$: $\mu=1.14$, $1.5$, $4$; $\pi_{\mathsf{cap}}=0.112$, $0.265$, $0.622$; and, as $\mathbb E[c]\le\mu\alpha/(1-\alpha)$, the expected share exceeds $p_B$ by at most $\sqrt{\mathbb E[c]\ln 2/(2k)}\le 0.005$, $0.008$, $0.017$ for $k=2000$. Without the random order, owning the last $c=4$ positions would give $1-(1-\alpha)^{16}=0.997$ at $\alpha=0.3$; with it, $\pi_{\mathsf{cap}}$ still exceeds the $0.348$ of \cref{cor:qrl}, as one adversarial last revealer yields two candidate schedules. With $L_t=25$ and $\ell_s=\ell_0=52$, $\delta_{\mathsf{LE}}\le2^{-41.4}$ at $\alpha=0.3$ without denial of service and $\delta_{\mathsf{LE}}\le2^{-40.7}$ with $B=1$ ($w_{\max}\le0.01$, $p_1=0.514$).

\subsection{From the Ideal Model to the Protocol, Against Quantum Adversaries}

\begin{lemma}[Reprogramming hidden inputs~\cite{bbbv1997}]\label{lem:bbbv}
Let a quantum process make $q$ queries to an oracle $O$, and let its run with $O$ be determined by randomness independent of a random variable $Z$, so that, with that randomness fixed, its state before each query does not depend on $Z$. Let $S_Z$ be a set of inputs, determined by $Z$, with $\Pr_Z[x\in S_Z]\le p$ for every input $x$, and let $O'$ agree with $O$ outside $S_Z$. Then the final states of the process with $O$ and with $O'$ have expected trace distance at most $2q\sqrt{p}$.
\end{lemma}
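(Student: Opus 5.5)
The plan is the standard hybrid argument of~\cite{bbbv1997}, with one twist: the reprogrammed set $S_Z$ is random, and its expected "query weight" is controlled using the fact that the query states do not depend on $Z$. First I fix the process's randomness, which by hypothesis is independent of $Z$. I then purify the process, deferring every intermediate measurement to the end. This makes the run with $O$ a fixed sequence of unitaries $U_0,O,U_1,O,\dots,O,U_q$ acting on a pure initial state. For each $i\in\{1,\dots,q\}$, let $|\psi_i\rangle$ be the state just before the $i$-th query in the run with $O$. By hypothesis $|\psi_i\rangle$ does not depend on $Z$. Write $\Pi_Z$ for the projector onto the query-register basis states whose input lies in $S_Z$.

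Next I define hybrids. $H_i$ answers the first $i$ queries with $O$ and the remaining ones with $O'$, so $H_q$ is the $O$-run and $H_0$ is the $O'$-run. Ordering the hybrids this way is the key choice: $H_i$ and $H_{i-1}$ agree up to the $i$-th query, and before that query both are in the $O$-run state $|\psi_i\rangle$. At the $i$-th query they differ by $(O-O')|\psi_i\rangle$. Since $O$ and $O'$ agree outside $S_Z$, this vector equals $(O-O')\Pi_Z|\psi_i\rangle$, and its norm is at most $2\|\Pi_Z|\psi_i\rangle\|$. The later unitaries are identical in both hybrids and preserve norms. Applying the triangle inequality over the $q$ steps, the final pure states satisfy
\[
\big\||\phi_O\rangle-|\phi_{O'}\rangle\big\|\le 2\sum_{i=1}^q\|\Pi_Z|\psi_i\rangle\|.
\]
The trace distance between pure states is at most their Euclidean distance. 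Tracing out the purification and performing the deferred measurements cannot increase trace distance.

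Finally I average over $Z$. Expand $|\psi_i\rangle=\sum_x\alpha_x|x\rangle|\cdot\rangle$ over query inputs. Because the amplitudes $\alpha_x$ do not depend on $Z$,
\[
\mathbb E_Z\|\Pi_Z|\psi_i\rangle\|^2=\sum_x\|\alpha_x\|^2\Pr_Z[x\in S_Z]\le p .
\]
Jensen's inequality then gives $\mathbb E_Z\|\Pi_Z|\psi_i\rangle\|\le\sqrt p$. Summing over $i$ and then averaging over the fixed randomness yields the claimed bound $2q\sqrt p$.

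The main obstacle is the independence step. The bound on $\mathbb E_Z\|\Pi_Z|\psi_i\rangle\|^2$ is valid only if the state at the $i$-th query is the $Z$-independent $O$-run state. Two things are therefore needed. The hybrids must switch from $O$ to $O'$ only after the $O$-prefix, so the $O'$-run state never appears inside a projection. The purification must not introduce any dependence on $Z$; this holds because $Z$ enters only through $O'$, never through $O$ or the process's randomness.
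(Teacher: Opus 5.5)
Your proposal is correct and matches the paper's proof essentially step for step. In both, you purify the process, fix the $Z$-independent randomness and replace $O$ by $O'$ one query at a time, so each hybrid difference is evaluated on the $Z$-independent $O$-run state and is bounded by $2\|\Pi_Z\lvert\psi_i\rangle\|$; averaging over $Z$ with Jensen then gives $\sqrt p$ per query. Your explicit ordering of the hybrids just spells out what the paper's ``one query at a time'' leaves implicit.
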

\begin{proof}[Proof sketch]
Each query changes the state by at most $2\|\Pi_{S_Z}\lvert\phi_i\rangle\|$, of expectation at most $\sqrt p$ as $\lvert\phi_i\rangle$ is independent of $Z$ (\supp{S4}).
\end{proof}

\begin{theorem}[Ideal to real, QROM]\label{thm:fresh}
Model the commitment and seed functions of QPoS-S, cSHAKE256 with distinct tags, as a quantum random oracle independent of the score oracle $\RO$ and of ML-DSA's internal hashing. Let commitments, secrets and seeds have $n_c$, $n_r$ and $n_s$ bits, and let identities and epochs be fixed-length fields of $n_p$ bits in total. Assume mildly adaptive corruption (\cref{sec:model}), the conditions of \cref{lem:incl}, that $\mathsf{Fail}$ does not occur, and that $c\le\bar c$; on $\neg\mathsf{Fail}$, $\bar c=L_t-1$ suffices. Let $m=2^{\bar c}$, and let $t$ count all oracle queries of all parties before $\tau^\ast$. Then there is an $m$-ideal experiment such that, for every $[0,1]$-valued function $F$ of the epoch's leaders and the adversary's final output,
\begin{multline*}
\big|\mathbb E_{\mathsf{real}}[F]-\mathbb E_{\mathsf{ideal}}[F]\big|\le\epsilon_{\mathsf{lead}}\\=\frac{2^{n_p+2n_r}}{2^{n_c+1}}+\frac{2t}{2^{n_r/2}}+\frac{2t\sqrt m}{2^{n_s/2}}+\frac{m^2}{2^{n_s+1}}.
\end{multline*}
Queries made after $\tau^\ast$ do not appear in the bound.
\end{theorem}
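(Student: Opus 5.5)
The proof runs as a hybrid argument with four steps, one for each term of $\epsilon_{\mathsf{lead}}$. In the order we carry them out: first make commitments perfectly binding, then make the secrets hidden from the adversary before $\tau^\ast$, then reprogram the seed oracle on the candidate reveal lists so that seeds are fresh uniform strings, and finally discard seed collisions among candidates. Once seeds are independent uniform, the score and order values $\RO(\textsc{score}\,\|\,e\,\|\,j\,\|\,\mathit{ID}_i\,\|\,\cS^{(e)})$ and $\RO(\textsc{order}\,\|\cdots)$ under distinct seeds are independent uniform tables. These are the tables of \cref{def:ideal}, provided that no party queried $\RO$ at those seeds before $\tau^\ast$; that condition is handled in the third step.

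\emph{Step 1 (binding).} A commitment $a_i=\Hash_{1536}(\textsc{commit}\,\|\,\mathit{ID}_i\,\|\,e\,\|\,r_i)$ is ambiguous only if two openings $(\mathit{ID},e,r)\ne(\mathit{ID}',e',r')$ collide. We bound this statistically rather than by query counting. There are $2^{n_p+n_r}$ inputs, and a union bound over pairs gives at most $\binom{2^{n_p+n_r}}{2}2^{-n_c}\le 2^{2n_p+2n_r}/2^{n_c+1}$. We will try to tighten this to the stated $2^{n_p+2n_r}/2^{n_c+1}$ by pairing only inputs that share identity and epoch. Such a collision lets an identity open $a_i$ two ways. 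Collisions across identities are harmless, because the reveal fixes $\mathit{ID}$ and the verifier recomputes the commitment. Conditioned on no collision, every commitment has a unique opening, so each reveal set $b$ is a subset of the identities whose secrets are already determined. \emph{Step 2 (hiding).} Replace $a_{i^\ast}$, and the commitments of the other honest tail identities, by uniform strings. Then apply \cref{lem:bbbv} with $Z=r_{i^\ast}$ and $S_Z$ the inputs containing $r_{i^\ast}$, so $p\le2^{-n_r}$. Up to $\tau^\ast$ the run is independent of $r_{i^\ast}$ apart from $a_{i^\ast}$, so the trace distance is at most $2t\,2^{-n_r/2}$. Mild adaptivity is used here: the validator is honest before $\tau^\ast$ and nothing about $r_{i^\ast}$ leaks.

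\emph{Step 3 (seed reprogramming, the main obstacle).} The candidate reveal lists $X_b$ are all determined at $\tau^\ast$, one for each subset of the $c\le\bar c$ late adversarial reveals, so there are at most $m$ of them. Each contains the uniform $r_{i^\ast}$. Let $S_Z$ be the set of the $m$ seed inputs $\textsc{seed}\,\|\,e\,\|\,X_b$, so any fixed input lies in $S_Z$ with probability at most $m2^{-n_r}$. We reprogram $\Hash_{512}$ on $S_Z$ to fresh uniform values. By \cref{lem:bbbv} the cost is $2t\sqrt{m2^{-n_r}}$. The stated term instead has $2^{n_s/2}$ in the denominator, so we would first try the following route. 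Pre-$\tau^\ast$ queries cannot depend on $r_{i^\ast}$ after Step 2. The hidden variable is then the seed value itself, and we bound the chance that a query of $\RO$ before $\tau^\ast$ hits one of the $m$ fresh seeds, which gives $2t\sqrt{m2^{-n_s}}$. The care needed is to check that the state before each query is independent of $Z$, as \cref{lem:bbbv} requires; this holds because everything is fixed before $\tau^\ast$. Queries after $\tau^\ast$ are consistent with the ideal experiment, which publishes all tables to the adversary. This is why they do not appear in the bound.

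\emph{Step 4 (distinct seeds).} The $m$ fresh seeds collide with probability at most $\binom m2 2^{-n_s}\le m^2/2^{n_s+1}$. Conditioned on no collision, the tables are independent. Every function $F$ of the leaders and of the adversary's output is $[0,1]$-valued, so each hybrid changes $\mathbb E[F]$ by at most the corresponding trace distance or bad-event probability. Summing the four steps gives $\epsilon_{\mathsf{lead}}$. The hypotheses $\neg\mathsf{Fail}$ and the conditions of \cref{lem:incl} ensure that every honest reveal is included. The hypothesis $c\le\bar c$ ensures that the resulting reveal set is among the $m$ enumerated ones.
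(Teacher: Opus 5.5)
Your binding and seed-collision steps match the paper, and your hiding step has the right shape. Step~3, however, has a genuine gap: nothing in the proposal makes the seeds fresh within the stated bound. You first charge $2t\sqrt{m2^{-n_r}}$ for reprogramming $\Hash_{512}$ at the candidate inputs $\textsc{seed}\,\|\,e\,\|\,X_b$. You then notice that this term is absent from $\epsilon_{\mathsf{lead}}$, and switch to reprogramming the score and order oracle $\RO$ at inputs whose seed field is one of the $m$ ``fresh seeds''. These are different reprogrammings, and the second presupposes the first. \cref{lem:bbbv} with $Z$ the seed values needs the run before $\tau^\ast$ to be independent of $Z$. In the real oracle, though, each seed is $\Hash_{512}(X_b)$, a value of an oracle that every party queries, possibly in superposition, before $\tau^\ast$. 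The facts that ``everything is fixed before $\tau^\ast$'' and that pre-$\tau^\ast$ queries do not depend on $r_{i^\ast}$ are the hypothesis of a reprogramming bound, not a proof that the values at $X_b$ are untouched. As written, you either carry an extra term of order $2t/2^{n_r/2}$, or you leave the seeds' freshness unproved. (Your factor $\sqrt m$ there is unnecessary, since an input has only one entry for $\mathit{ID}_{i^\ast}$.)

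The missing idea is to make hiding and seed freshness a single hybrid.
\begin{itemize}
\item Let $S$ consist of the commitment input with fields $(\mathit{ID}_{i^\ast},e,r)$ together with \emph{all} seed inputs whose $\mathit{ID}_{i^\ast}$ entry is $r$, not only the $m$ candidates, which depend on the run. Then $\Pr_r[x\in S]\le2^{-n_r}$ for every $x$.
\item Sample $H$, an independent $W$ on seed inputs, and a uniform $a$. Let $G$ equal $a$ on the commitment input, $W$ on the seed inputs of $S$, and $H$ elsewhere. Since $G$ is uniform for every fixed $r$, running with $G$ throughout is exact.
\item Now answer queries before $\tau^\ast$ with $H$ and later ones with $G$. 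One application of \cref{lem:bbbv} with $Z=r$ costs $2t/2^{n_r/2}$. The run after $\tau^\ast$ is the same map in both hybrids, which is the precise reason later queries do not count.
\item Since $W$ is never queried before $\tau^\ast$, the seeds $W(X_b)$ are independent uniforms $Z_b$ at no further cost.
\item Only then is the set $S'$ of score and order inputs whose seed field equals some $Z_b$ determined by $Z=(Z_b)_b$ alone, with the pre-$\tau^\ast$ run independent of $Z$. Resampling $\RO$ on $S'$ both before and after $\tau^\ast$ preserves the distribution, so only the pre-$\tau^\ast$ answers differ, which gives $2t\sqrt m/2^{n_s/2}$.
\end{itemize}

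Separately, do not condition on injectivity in Step~1, because that makes the oracle non-uniform for the later reprogramming steps. Instead, accept only the lexicographically smallest opening. This yields an experiment identical to the real one except on the non-injectivity event, and the oracle stays uniform.
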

\begin{proof}[Proof sketch]
Four hybrids give the four terms: (1)~statistical binding of the wide commitments leaves late adversaries only reveal-or-withhold; (2)~\cref{lem:bbbv} with $Z=r_{i^\ast}$ hides $r_{i^\ast}$ before $\tau^\ast$; (3)~\cref{lem:bbbv} on the pre-sampled seeds $Z_b$ of the at most $m$ reveal sets makes their scores and order keys fresh; (4)~the $Z_b$ are distinct except with probability $m^2/2^{n_s+1}$, giving an $m$-ideal experiment. Before $\tau^\ast$, validator $i^\ast$ uses $r_{i^\ast}$ only through its commitment, which is uniform for every $r_{i^\ast}$, so the run before $\tau^\ast$ is independent of the reprogrammed inputs; the run after $\tau^\ast$ is the same map in all hybrids, so the trace distance at $\tau^\ast$ bounds the final distance and later queries are free (\supp{S4}).
\end{proof}

\noindent QPoS-S uses $n_r=n_s=512$, $n_c=1536$ and $64$-bit identity and epoch fields ($n_p=128$). For $t=2^{64}$ and $\bar c=L_t-1=24$, the four terms are $2^{-385}$, $2^{-191}$, $2^{-179}$ and $2^{-465}$, so $\epsilon_{\mathsf{lead}}\le 2^{-178.9}$; $\bar c=24$ is a worst case. Scores and order keys use $\Hash$ with $h=256$.

\begin{corollary}[Real protocol]\label{cor:real}
Let $m=2^{L_t-1}$ and $\epsilon_{\mathsf{LE}}=\epsilon_{\mathsf{lead}}+\epsilon_{\mathsf{disc}}$, where $\epsilon_{\mathsf{disc}}=(k+1)mN(N-1)2^{-h}$ applies \cref{lem:race} to the $k$ slot races and the order race of each candidate, and $N$ counts all registered identities. Against quantum adversaries with mildly adaptive corruption, under the conditions of \cref{lem:incl}, every bound of \cref{thm:schedule,prop:late,lem:incl} holds in the real protocol, for the first $E$ epochs, up to an additive $2E(\epsilon_{\mathsf{LE}}+\delta_{\mathsf{LE}})$. In particular, the adversary proposes in a fixed target slot with probability at most $\pi_{\mathsf{cap}}+2E(\epsilon_{\mathsf{LE}}+\delta_{\mathsf{LE}})$, and an epoch has no guaranteed slot among $n$ consecutive slots placed by the adversary with probability at most $(k-n+1)\mu p_B^{n}+2E(\epsilon_{\mathsf{LE}}+\delta_{\mathsf{LE}})$.
\end{corollary}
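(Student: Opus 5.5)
The plan is a hybrid argument over epochs. Each real epoch is replaced by its $m$-ideal counterpart from \cref{thm:fresh}, and the discretization error of \cref{lem:race} is charged to every race the ideal model treats as continuous. The error in epoch $e$ is then bounded on the event $\mathsf{OK}_{e-1}$, and the complement of $\mathsf{OK}$ is handled through the telescoping bound $\Pr[\mathsf{OK}_{e-1}\wedge\neg\mathsf{OK}_e]\le\delta_{\mathsf{LE}}$ of \cref{prop:late}.

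\textbf{Step 1: the hybrid chain.} Define hybrids $H_0,\dots,H_E$. $H_0$ is the real protocol. In $H_e$, epochs $1,\dots,e$ use $m$-ideal tables with continuous scores, and later epochs are real. Each $H_e$ is obtained by applying \cref{thm:fresh} to epoch $e$ with $\bar c=L_t-1$, so $m=2^{L_t-1}$. That theorem has two hypotheses:
\begin{itemize}
\item Its bound holds only if $\mathsf{Fail}$ does not occur and $c\le\bar c$. On $\neg\mathsf{Fail}$ the tail contains an honest identity, so $c\le L_t-1$ automatically.
\item The event $\mathsf{Fail}$ depends only on the order and the previous schedule, which are fixed before $\tau^\ast$.
\end{itemize}
So I can condition on $\mathsf{OK}_{e-1}$, which is determined by earlier epochs. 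I then handle $\mathsf{Fail}$ in epoch $e$ itself by stopping the experiment at the first failing epoch: replace any $[0,1]$-valued $F$ with $F\cdot\mathbf 1_{\mathsf{OK}}$ plus a correction. The probability mass where $\mathsf{OK}$ first breaks in epoch $e$ is at most $\delta_{\mathsf{LE}}$ by \cref{prop:late}, computed in the ideal hybrid. Translating it back to the real experiment costs one more $\epsilon_{\mathsf{LE}}$, since the indicator is itself a $[0,1]$-valued function of the leaders and the adversary's output. This double counting is the source of the factor $2$ in $2E(\epsilon_{\mathsf{LE}}+\delta_{\mathsf{LE}})$.

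\textbf{Step 2: discretization.} Inside each hybrid step I pass from $h$-bit scores to continuous scores. For each of the $m$ candidate tables there are $k$ slot races and one order race. By \cref{lem:race}, each race's winner (and the ranking, by the same pairwise-tie argument) changes with probability at most $N(N-1)2^{-h}$. A union bound over races and candidates gives $\epsilon_{\mathsf{disc}}=(k+1)mN(N-1)2^{-h}$. Taking the union over all candidates, rather than over the chosen one, is what makes the bound valid after the adversary selects the seed.

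\textbf{Step 3: transfer.} Summing over $E$ epochs, every event bounded in \cref{thm:schedule,prop:late,lem:incl} changes probability by at most $2E(\epsilon_{\mathsf{LE}}+\delta_{\mathsf{LE}})$ between real and ideal. These events are the target-slot capture, the absence of a guaranteed slot, and inclusion failure, and each is a $[0,1]$-valued function of the leaders and the adversary's output. The two stated instances then follow.
\begin{itemize}
\item \emph{Capture.} Use $\pi_{\mathsf{cap}}$ from \cref{prop:late}. Its own $e\delta_{\mathsf{LE}}$ term is absorbed, given that $e\le E$.
\item \emph{Windows.} \Cref{thm:schedule}(c) gives $(k-n+1)mp_B^n$ for fixed $m$. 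I average over $m=2^{c_e}$ using $\mathbb E[2^{c_e}\mathbf 1_{\mathsf{OK}_e}]\le\mu$, which yields $(k-n+1)\mu p_B^n$.
\end{itemize}

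\textbf{Main obstacle.} The main obstacle is Step 1's interleaving of conditioning and hybrids. \Cref{thm:fresh} counts queries made before $\tau^\ast$ of epoch $e$, and these include all earlier epochs. The ideal experiment must also be coupled consistently across epochs, because epoch $e$'s order keys come from epoch $e-1$'s table. I would first try to show that applying \cref{thm:fresh} in epoch order treats the earlier-epoch ideal tables simply as part of the adversary's independent randomness. That would keep each step's error at $\epsilon_{\mathsf{LE}}$, without extra factors from re-conditioning on $\mathsf{OK}_{e-1}$.
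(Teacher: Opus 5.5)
Your proposal is correct and follows the paper's proof. Like the paper, you run an epoch-by-epoch hybrid that applies \cref{thm:fresh} with $\bar c=L_t-1$ and \cref{lem:race} over all candidates, charge premise failures on $\neg\mathsf{OK}_e$ via \cref{prop:late} applied to the already idealized earlier epochs, and get the window bound by averaging \cref{thm:schedule}(c) with $\mathbb E[m_e\mathbf 1_{\mathsf{OK}_e}]\le\mu$.

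Only the bookkeeping differs. The paper charges $\epsilon_{\mathsf{LE}}+\delta_{\mathsf{LE}}$ per step plus $\Pr[\neg\mathsf{OK}_E]\le E\delta_{\mathsf{LE}}$, and it never translates the failure mass back to the real experiment, since that mass is already evaluated in the hybrid. Your accounting instead gives $E(2\epsilon_{\mathsf{LE}}+\delta_{\mathsf{LE}})$, and its slack of $E\delta_{\mathsf{LE}}$ is what actually absorbs the $e\delta_{\mathsf{LE}}$ in the capture bound, so quote that sharper figure in Step~3 rather than $2E(\epsilon_{\mathsf{LE}}+\delta_{\mathsf{LE}})$.
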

\begin{proof}[Proof sketch]
Idealizing epochs one at a time costs $\epsilon_{\mathsf{LE}}+\delta_{\mathsf{LE}}$ each, plus $\Pr[\neg\mathsf{OK}_E]\le E\delta_{\mathsf{LE}}$ (\supp{S4}).
\end{proof}

\noindent\textbf{Consequences.} By \cref{cor:real}, an honest transaction is included within $n$ slots of any adversarially chosen start, except with probability $\zeta+2E(\epsilon_{\mathsf{LE}}+\delta_{\mathsf{LE}})$, whenever $n\ge\ln(k\mu/\zeta)/(-\ln p_B)$, as used by \cref{prop:binding}. For $\alpha=0.3$, $k=2000$ and $\zeta=2^{-40}$, $n=31$ suffices without denial of service and $n=56$ with $B=1$; for $E=2^{20}$, $2E(\epsilon_{\mathsf{LE}}+\delta_{\mathsf{LE}})<2^{-19}$ at $\alpha=0.3$ and $<2^{-36}$ at $\alpha=0.2$, with or without $B=1$. Finality additionally needs the finality gadget's liveness, and a window that straddles an epoch boundary adds the bounds of both epochs. No PQ-Vault role is exclusive to a slot: anyone can aggregate certificates, and attestors, not leaders, certify checkpoints.

\noindent\textbf{Long-range attacks.} With $\Vault$ as the weak-subjectivity source, a fork before the latest anchored checkpoint is rejected under \cref{asm:l1}, and a fork after it needs a certificate for the still-bonded attestor set, which \cref{thm:qrom,thm:opf} bound (\supp{S4}).

\section{Accepting Certificates on L1}\label{sec:opf}

Direct L1 verification ($\Theta(\kr)$ ML-DSA verifications on calldata) costs $1.8\times10^{8}$ to $1.1\times10^{9}$ gas, above the per-transaction cap (\cref{sec:gas}), so certificates must be compressed.

\subsection{A Non-PQ SNARK Cannot Be the Acceptance Path}
Our earlier design~\cite{myeprint} used a Groth16~\cite{groth16} proof over certificates; this is unsound once L1 \emph{accepts} checkpoints on its strength.

\begin{proposition}\label{prop:snark}
If $\Vault$ finalizes a checkpoint whenever a Groth16 proof $\Pi$ of ``all certificates verify'' is valid, then a QPT adversary can finalize an arbitrary checkpoint, and with it arbitrary withdrawals, with probability $1-\negl$.
\end{proposition}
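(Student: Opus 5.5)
The plan is to reduce the statement to the quantum break of the discrete logarithm problem in the pairing groups on which Groth16 is built, and then to the well-known trapdoor (simulation) property of Groth16. Groth16 verification is a pairing-product equation over a structured reference string $\mathsf{crs}$ whose group elements encode powers and linear combinations of secret scalars $(\alpha,\beta,\gamma,\delta,x)$, the ``toxic waste''. The vault's verifier holds $\mathsf{crs}$ (or the verification key), which is public on L1, so by \cref{asm:l1} the adversary sees it. In the first step the adversary runs Shor's algorithm on the group elements $[x]_1,[\alpha]_1,[\beta]_2,[\gamma]_2,[\delta]_2$ (the verification key together with the powers-of-$x$ part of the proving key) and recovers the trapdoor $\mathsf{td}=(\alpha,\beta,\gamma,\delta,x)$ with probability $1-\negl$, since discrete logarithms in the BN254 groups used by Ethereum's precompiles are computable in quantum polynomial time.

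In the second step I would invoke the perfect zero-knowledge simulator of Groth16: given $\mathsf{td}$ and \emph{any} public input $\phi$, pick uniform $A,B$ and set $C=(AB-\alpha\beta-\sum_i\phi_i(\beta u_i(x)+\alpha v_i(x)+w_i(x)))/\delta$. This triple satisfies the verification equation $e(A,B)=e(\alpha,\beta)\,e(\sum\phi_i[\cdot]_1,\gamma)\,e(C,\delta)$ identically, regardless of whether $\phi$ lies in the language. The adversary therefore chooses a header whose withdrawals root $\omega_e$ commits to a leaf paying the entire escrow to itself, forms the public input that the vault derives from $M_e$ (and $\Ratt$, etc.), and outputs the simulated proof $\Pi$. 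Note that the statement ``all certificates verify'' is false for this $M_e$, since no honest attestor signed it, yet verification succeeds with probability $1$ given $\mathsf{td}$.

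In the third step I would finish on L1: by hypothesis the vault finalizes on a valid $\Pi$, so $\omega_e$ becomes final and $\mathsf{withdraw}$ with a Merkle path to the adversarial leaf pays out. Because L1 validity depends only on the contract code (\cref{asm:l1}) and no PQ component is consulted on this path, nothing blocks it, exactly as in \cref{prop:impossible}. The overall success probability is the success probability of Shor's algorithm, $1-\negl$.

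The main obstacle is making step one precise: the key is extracting enough trapdoor from public elements. If only the verification key were public, $\gamma,\delta,\alpha,\beta$ would still suffice as long as $x$ can be taken from $[x]_1$ in the proving key; I would argue that the proving key is public by construction (provers must use it). Alternatively, even without $x$, knowing $\delta$ and $\gamma$ allows a forgery: set $A=[a]_1$, $B=[b]_2$ and solve for $C$ using the discrete logs of the verification-key elements only, since the equation then involves only $\alpha\beta$, the $\gamma$-scaled input terms (whose logs are computable with Shor), and $\delta$. I would present this verification-key-only version to avoid the dependence on the proving key.
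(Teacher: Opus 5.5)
Your proposal is correct and follows the paper's proof. Shor's algorithm recovers the CRS trapdoor from its public group elements, the Groth16 zero-knowledge simulator then produces an accepting proof for the false statement about an adversarially chosen $M_e$ (and hence $\omega_e$), and by hypothesis the vault finalizes it; your verification-key-only variant, which solves for $C$ using only discrete logarithms of verification-key elements, is a valid sharpening of the paper's remark that the other trapdoor elements are ``computed the same way''.
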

\begin{proof}[Proof sketch]
Shor's algorithm recovers the CRS trapdoor, which lets the simulator prove false statements. The full proof is in \supp{S5}.
\end{proof}

\noindent We therefore use an optimistic protocol with a PQ dispute path, whose happy path costs almost nothing; a transparent hash-based proof system is the alternative.

\subsection{Optimistic Acceptance with Single-Reveal Fraud Proofs}
Apart from the range check on $\Ws$ (done at proposal), $\mathsf{Vf}$ (\cref{alg:cert}) is a \emph{conjunction of per-coin checks}. An L1 attestor leaf holds $(\eta_i,w_i)$, where $\eta_i$ is the \texttt{EXTCODEHASH} of the contract storing $i$'s expanded ML-DSA key (\cref{sec:gas}). For each coin index $j$, the \emph{record} $r_j=(j,i_j,\eta_{i_j},w_{i_j},\sigma_{i_j},L_{i_j},\pi^{\mathsf{sig}}_{j},\pi^{\mathsf{att}}_{j})$ holds the covering leaf and its two paths, and $\mathsf{Local}(j,r_j)=1$ iff
(i)~$i_j<N$, $w_{i_j}>0$ and $L_{i_j}\le c_j<L_{i_j}+w_{i_j}\le \Ws$, with $c_j$ computed from the proposal's $M_e$, $\Rsig$, $\Ws$ and its snapshot of $\Ratt$, $\Wp$ and $N$;
(ii)~$\pi^{\mathsf{sig}}_j$ and $\pi^{\mathsf{att}}_j$ have length $d$ and open $(\sigma_{i_j},L_{i_j})$ under $\Rsig$ and $(\eta_{i_j},w_{i_j})$ under $\Ratt$, both at position $i_j$; and
(iii)~$\sigma_{i_j}$ has the scheme's length and verifies on $M_e$ under the key stored in the contract with code hash $\eta_{i_j}$.
The relayer commits to the records in a Merkle tree $C_{\cert}$ of depth $\lceil\log_2\kr\rceil$, with $r_j$ at index $j$ (leaf encoding in \supp{S5}).

\begin{lemma}[Local decomposition]\label{lem:local}
Except for a collision of $\Hash$, $\mathsf{Local}(j,r_j)=1$ for all $j<\kr$ if and only if the distinct positions $P=\{i_j\}$, their leaves, and the multiproofs assembled from the single paths form a certificate that \cref{alg:cert} accepts, instantiated with the vault's leaf encodings.
\end{lemma}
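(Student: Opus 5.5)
We prove the two directions separately, matching the checks of \cref{alg:cert} one by one against clauses (i)--(iii) of $\mathsf{Local}$. Throughout, the proposal fixes $M_e$, $\Rsig$, $\Ws$ and $\kr$, and its snapshot fixes $\Ratt$, $\Wp$ and $N$. Hence both procedures compute the same coins $c_j$ from \cref{eq:coin}. The range check $(1+\delta)\Wp\le\Ws\le W$ and the equality of $\kr$ with \cref{eq:rule} are enforced by $\mathsf{propose}$, so they hold on both sides, and we leave them aside.

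\emph{($\Rightarrow$)} Assume $\mathsf{Local}(j,r_j)=1$ for every $j<\kr$, and set $P=\{i_j\}$.

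First, well-definedness. Two records with $i_j=i_{j'}$ both open position $i_j$ under $\Rsig$ and under $\Ratt$ with length-$d$ paths. If their leaves differed, recomputing the two paths up to the common root would yield a collision of $\Hash$: domain separation between leaves and internal nodes rules out leaf/node confusion, and fixed-length leaf encodings rule out ambiguous parsing. So, except for a collision, each $i\in P$ has a unique leaf $(\sigma_i,L_i)$ and a unique leaf $(\eta_i,w_i)$.

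Next, the checks of \cref{alg:cert}:
\begin{itemize}
\item $P\subseteq[N]$ and $1\le|P|\le\kr$: immediate from (i) and $|\{i_j\}|\le\kr$.
\item Interval checks: these are (i).
\item Coverage: coin $c_j$ is covered by $i_j$.
\item No unused reveal: every $i\in P$ equals some $i_j$ and so covers $c_j$.
\item Signature checks: these are (iii), with the key read through $\eta_i$, as the vault encoding specifies.
\item Multiproofs: the sibling nodes a compact multiproof needs are among the nodes of the single paths. Every node in these paths hashes consistently to the root, so the assembled multiproof opens all of $P$.
\end{itemize}

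\emph{($\Leftarrow$)} Assume \cref{alg:cert} accepts a certificate with reveal set $P$. By the coverage check, each coin $c_j$ has some $i\in P$ with $L_i\le c_j<L_i+w_i$; let $i_j$ be such an $i$, taking the smallest when several qualify. Then:
\begin{itemize}
\item Clause (i) follows from the interval checks and coverage.
\item Clause (iii) follows from the signature checks.
\item For clause (ii), extract each single path from the verified multiproof. The node values on the path from $i_j$ to the root are determined by the opened leaves and the supplied siblings, and they hash to $\Rsig$ or $\Ratt$. Since the tree is padded to depth $d$, each path has length $d$.
\end{itemize}

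The main obstacle is the converse of multiproof assembly, namely showing that the extracted single paths agree with one another without further hypotheses. A verified multiproof determines a unique value at every node of the spanning subtree, so extraction and recomposition are mutually inverse there. This is exactly where the collision exception is needed. I would state it as a small claim: with collision-freeness, a multiproof verifies if and only if every single path it induces verifies. I would prove the claim by induction on the height of the spanning subtree, and the two directions above then reduce to it.
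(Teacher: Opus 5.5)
Your proposal is correct and follows the paper's proof. Collision-freeness makes records naming the same position carry equal leaves and lets the single paths supply the multiproof siblings; the checks of \cref{alg:cert} are matched with (i)--(iii), with the range check on $\Ws$ delegated to $\mathsf{propose}$; and the converse builds, for each coin, a record of a covering revealed position that passes $\mathsf{Local}$. One small correction to your last paragraph: the collision exception is needed when you assemble a multiproof from independently supplied single paths, since two valid paths through a common node must compute the same value there (otherwise the first top-down divergence below the shared root is a collision), whereas extracting single paths from a verified multiproof is automatic and needs no hypothesis.
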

\begin{proof}[Proof sketch]
Without collisions, records naming one position carry equal leaves, and the checks of \cref{alg:cert} are exactly (i)--(iii) over all $j$. The full proof is in \supp{S5}.
\end{proof}

\noindent\textbf{Protocol.} $\mathsf{propose}$ (\cref{sec:vault}) creates a bonded proposal at $\tau_0$, with records in blobs (not relied on for safety). Until the \emph{challenge deadline} $\tau_0+T_{\mathsf{ch}}$ (extended to $\tau_{\mathsf{post}}+\Delta_{\mathsf{post}}$ after an open-all posting completed at $\tau_{\mathsf{post}}$), anyone may $\mathsf{challenge}(j)$ each index once, with bond $b_C$. A $\mathsf{respond}$ within $\Delta_{\mathsf{resp}}$ carries one record: if it is unauthenticated under $C_{\cert}$ or names a wrong key address, the call reverts; otherwise it rejects the proposal if $\mathsf{Local}$ fails and closes the challenge if not. $\mathsf{challengeAll}$ (once, no bond) forces all records onto calldata within $\Delta_{\mathsf{open}}$; $\mathsf{timeout}$ and $\mathsf{timeoutAll}$ reject on an unanswered challenge or incomplete posting. $\mathsf{finalize}$, with no open challenge or pending posting, finalizes if the epoch is $e^\ast+1$, else marks it \emph{stale}. Guards are $<$ the deadline for challenges and open-all, $\le$ for responses and postings, $\ge$ for $\mathsf{finalize}$ and $>$ for timeouts, so no timestamp admits both an action and the transition its deadline unlocks. Bonds and rewards ($b_C$, $b_R$) and the full state machine are in \supp{S5}.

\noindent\textbf{Timing.} With $\Delta_{\mathsf{chk}}$ bounding a watcher's time to check all records and $\Delta_{\mathsf{prep}}$ the relayer's time to answer, we require
\[
T_{\mathsf{ch}}> \Delta_{\mathsf{chk}}+\Delta_{\mathsf{inc}},\qquad \Delta_{\mathsf{post}}> \Delta_{\mathsf{chk}}+\Delta_{\mathsf{inc}},
\]
and $\Delta_{\mathsf{open}},\Delta_{\mathsf{resp}}\ge\Delta_{\mathsf{prep}}+\Delta_{\mathsf{inc}}$. Each pending proposal is then eligible for a final transition by $T_{\mathsf{fin}}=T_{\mathsf{ch}}+\Delta_{\mathsf{open}}+\Delta_{\mathsf{post}}+\Delta_{\mathsf{resp}}$ after its inclusion. With $T_{\mathsf{ch}}=7$ days, $\Delta_{\mathsf{open}}=\Delta_{\mathsf{resp}}=1$ day and $\Delta_{\mathsf{post}}=2$ days (as in optimistic rollups~\cite{kalodner2018arbitrum}), $T_{\mathsf{fin}}\le 11$ days.

\begin{theorem}[Optimistic acceptance]\label{thm:opf}
Assume \cref{asm:l1,asm:pq,asm:slow,asm:watcher} and the timing conditions above. Then the following hold. (Safety) The probability that $\Vault$ finalizes a proposal whose $M_e$ meets the winning condition of \cref{def:game}, with the attestor list committed by the proposal's snapshot of $\Ratt$, is at most the bound of \cref{thm:qrom} plus $\delta_{\mathsf{col}}=(12t^3+2q_C)/2^{h_C}$, where $q_C=2\kr(\lceil\log_2\kr\rceil+2d+7)$ and $h_C$ is the smallest output width of $C_{\cert}$, $\Rsig$ and $\Ratt$ ($h_C=256$ in the deployed vault). Since $\mathsf{Col}$ ranges over the whole oracle, $\delta_{\mathsf{col}}$ also excludes two key contracts with the same code hash $\eta$, and, in \cref{thm:e2e}, two headers with the same $M_e$ or two withdrawal trees with the same root. (Liveness) An honest relayer's proposal is never rejected except with probability $\delta_{\mathsf{col}}$; it is eligible for finalization at $\tau_0+T_{\mathsf{ch}}$ if no dispute is raised, and by $\tau_0+T_{\mathsf{fin}}$ otherwise; if the relayer then submits $\mathsf{finalize}$, it is finalized within a further $\Delta_{\mathsf{inc}}$. (Cost) Each frivolous coin challenge costs its sender $b_C$, and an honest watcher spends at most one open-all transaction fee and one bond $b_C$ per proposal; if its challenge is the first on a failing record, the bond is returned with $b_R$.
\end{theorem}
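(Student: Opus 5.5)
The plan is to prove the three parts separately, with safety as the core. Safety is a reduction to \cref{thm:qrom} through \cref{lem:local}, conditioned on the collision event $\mathsf{Col}$ over the whole oracle.

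\textbf{Safety.} Fix a proposal that $\Vault$ finalizes. I will show that, outside $\mathsf{Col}$, every record $r_j$ committed under $C_{\cert}$ satisfies $\mathsf{Local}(j,r_j)=1$. Suppose instead that some record fails. By \cref{asm:watcher}, an honest watcher either finds the failing index $j$ within $\Delta_{\mathsf{chk}}$, or finds the records unavailable and calls $\mathsf{challengeAll}$. The timing conditions $T_{\mathsf{ch}}>\Delta_{\mathsf{chk}}+\Delta_{\mathsf{inc}}$ and $\Delta_{\mathsf{post}}>\Delta_{\mathsf{chk}}+\Delta_{\mathsf{inc}}$, together with \cref{asm:l1}, ensure that its $\mathsf{challenge}(j)$ is included strictly before the deadline. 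The guard conventions ($<$ versus $\ge$) then ensure that $\mathsf{finalize}$ cannot fire while the challenge is open. The challenge ends in one of two ways:
\begin{itemize}
\item a timeout, which rejects the proposal; or
\item a $\mathsf{respond}$ that passes authentication under $C_{\cert}$.
\end{itemize}
In the second case, a passing response carries a record $r'_j\ne r_j$ opening at index $j$ under the same root only if there is a Merkle collision in $C_{\cert}$. Hence the response carries $r_j$ itself, $\mathsf{Local}$ fails, and the proposal is rejected.

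So a finalized proposal has all $\mathsf{Local}$ checks passing. \Cref{lem:local} converts these records into a certificate that \cref{alg:cert} accepts on $M_e$, under the proposal's snapshot of $\Ratt$, $\Wp$ and $N$. The vault recomputes $M_e$ itself, and without a collision the key contract with code hash $\eta_{i_j}$ is unique. The adversary together with the relayer therefore forms a $\mathsf{CERT\text{-}FORGE}$ adversary that outputs this certificate, and the probability of the winning condition is bounded by \cref{thm:qrom}.

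It remains to bound $\mathsf{Col}$. I will apply \cref{lem:extract} again with $I(\mathsf{Col},t)<t/2^{h_C}$, giving the $12t^3/2^{h_C}$ term after the square-root step of part~(ii). The $q_C$ hash evaluations that L1 performs when checking responses and postings give the $2q_C/2^{h_C}$ term. The count is two records' worth of paths: $\lceil\log_2\kr\rceil$ hashes for $C_{\cert}$, $2d$ hashes for the two trees, plus constant leaf and coin hashes, over at most $2\kr$ records across the challenges and the open-all posting. I expect this step to be the main obstacle. The worry is not the counting but making the extraction argument cover oracle values that L1 reads during the dispute, which are interleaved with the adversary's queries. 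I would handle it by treating the whole experiment, including the watcher's and L1's queries, as one algorithm that makes $t$ queries, as \cref{thm:qrom} does.

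\textbf{Liveness.} An honest relayer posts records that satisfy $\mathsf{Local}$, since honest signatures verify and its intervals are correct. Every response therefore closes its challenge unless there is a collision. The relayer answers within $\Delta_{\mathsf{resp}}$ and $\Delta_{\mathsf{open}}$ because $\Delta_{\mathsf{open}},\Delta_{\mathsf{resp}}\ge\Delta_{\mathsf{prep}}+\Delta_{\mathsf{inc}}$. To bound the time to finalization, I sum the worst-case phases:
\[
T_{\mathsf{ch}}\;\to\;\text{open-all by }\Delta_{\mathsf{open}}\;\to\;\text{extended deadline }\Delta_{\mathsf{post}}\;\to\;\text{last response }\Delta_{\mathsf{resp}}.
\]
This gives $T_{\mathsf{fin}}$. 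Challenges must come before the deadline, so no new dispute can start later, and $\mathsf{finalize}$ is included within a further $\Delta_{\mathsf{inc}}$.

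\textbf{Cost.} This part is bookkeeping over the state machine. A frivolous challenge is closed by a valid response, so its bond is forfeited. The honest watcher needs at most one open-all and one challenge: it challenges only the first failing record, which rejects the proposal, and on rejection it receives $b_C$ back plus $b_R$.
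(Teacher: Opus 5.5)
Your proposal is correct and follows the paper's proof: you exclude collisions via \cref{lem:extract} to obtain $\delta_{\mathsf{col}}$, and you use \cref{asm:watcher} with both strict timing conditions so that every failing index is challenged before the (possibly extended) deadline, after which only a rejecting response or a timeout can follow; you then apply \cref{lem:local} and \cref{thm:qrom}, and read liveness and cost off the state machine. The paper states explicitly two points you leave implicit, neither of which changes the argument: finalization forces any open-all posting to complete (otherwise $\mathsf{timeoutAll}$ rejects), and the failing index may already have been challenged by another party, which leads to the same rejection.
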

\begin{proof}[Proof sketch]
Excluding collisions (\cref{lem:extract}), by \cref{asm:watcher} and the strict timing conditions, an honest watcher learns all committed records (from blobs or via open-all) and challenges any failing record before the deadline; the only non-reverting response rejects the proposal, and an unanswered challenge rejects it at $\mathsf{timeout}$. A finalized proposal thus has only passing records, an accepted certificate by \cref{lem:local}, bounded by \cref{thm:qrom}. Liveness: honest records pass and answers meet the inclusive deadlines. The full proof is in \supp{S5}.
\end{proof}

\subsection{End-to-End Custody}

\begin{theorem}[Custody safety]\label{thm:e2e}
Assume \cref{asm:l1,asm:pq,asm:slow,asm:watcher}, the timing conditions above, and, in every epoch $e$, corrupt attestor stake below $W^e/3$ with $\Wp^e=\lfloor W^e/3\rfloor$. Assume also the honest-signing rule of \cref{sec:design}, and let $\mathsf{Fin}$ be the event that two honest validators consider conflicting checkpoints of the same epoch final. Then over $E$ epochs, for every QPT adversary making $t$ oracle queries in total,
\begin{multline*}
\Pr[\text{G1 violated}]\le E\Big(12t^2\varepsilon^\ast+\frac{48t^3+2q_V}{2^h}+\delta_{\mathsf{col}}\Big)\\+2E\Adv^{\mathsf{uf}}_{N_{\mathsf{tot}}}+\Adv^{\mathsf{uf}}_{U}+\Pr[\mathsf{Fin}],
\end{multline*}
where $\varepsilon^\ast\le 2^{-\lambda}/(12t^2)$ under \cref{eq:rule} when $t\le q(1+2^{-20})$ (\cref{sec:rule}), $N_{\mathsf{tot}}$ is the number of honest attestor keys generated over the whole experiment, and $U$ is the number of user keys.
\end{theorem}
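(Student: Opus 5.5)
We prove the bound by induction over finalized checkpoints, charging each way G1 can fail to one term of the statement. Call a finalized checkpoint $M_e$ \emph{good} if it meets the honest-signing rule with respect to the unique honest-final chain. The inductive invariant $\mathcal I_e$ is: every checkpoint $M_{e'}$ with $e'\le e$ finalized by $\Vault$ is good, and $\Ratt$, $W$, $N$ stored after finalizing epoch $e$ are exactly the honest attestor set of epoch $e+1$. The base case $\mathcal I_{-1}$ holds because $\Ratt^0$ is fixed at deployment. Given $\mathcal I_{e-1}$, the snapshot used by any proposal for epoch $e$ commits to an honest attestor list whose corrupt stake is below $W^e/3$. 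By unbonding, retired keys are excluded, so the corrupt set $C$ of \cref{def:game} satisfies $w(C)\le\Wp^e$.

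For the induction step we embed each epoch into one $\mathsf{CERT\text{-}FORGE}$ instance. The reduction simulates the whole system, including sidechain, users and other epochs, obtains the epoch-$e$ honest keys from the challenger, and routes honest validators' signatures of the messages they consider final through the signing oracle. If some proposal with $e=e^\ast+1$ finalizes a non-good $M_e$, we distinguish two cases. In the first, no honest validator signed $M_e$: the honest-signing rule is what honest signing means, so $S_{M_e}=\emptyset$ and $w(C\cup S_{M_e})\le\Wp$, which is exactly the winning condition. \Cref{thm:opf} (Safety) bounds this by the \cref{thm:qrom} bound plus $\delta_{\mathsf{col}}$. In the second, some honest validator signed $M_e$, and $M_e$ still deviates. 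Since the vault recomputes $M_e$ itself and $\delta_{\mathsf{col}}$ excludes two headers with the same $M_e$, the deviation is in the header content. Then either the signer considered a conflicting checkpoint final, which is $\mathsf{Fin}$, counted once over the whole run, or the committed state is invalid. The latter is impossible for an honest signer unless a user's ML-DSA signature was forged.

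The $2E\Adv^{\mathsf{uf}}_{N_{\mathsf{tot}}}$ term comes from the $2\Adv^{\mathsf{uf}}_K$ of \cref{thm:qrom} in each of $E$ epochs, with $K\le N_{\mathsf{tot}}$. The $E$ factors come from a union bound over epochs, with $t$ counting all queries of the full experiment in each embedding. Finally, given $\mathcal I_E$, every finalized $\omega_e$ is exactly the set of withdrawals executed under the valid execution rules. Each leaf therefore carries a valid $sk_S$ signature of the owner, unless the adversary forged a user key; this is charged once as $\Adv^{\mathsf{uf}}_U$ by guessing-free embedding of all $U$ user keys in the multi-user game. The nullifier $\xi$ with its globally unique nonce and the collision exclusion (two withdrawal trees with the same root) give at-most-once payment. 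The Merkle opening under $\omega_e$ is covered by $\mathsf{Col}$ in $\delta_{\mathsf{col}}$. After $\tau_Q$ no ECDSA check lies on this path (\cref{cor:vault}), so G1 holds.

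The main obstacle is the embedding step. A single $\mathsf{CERT\text{-}FORGE}$ instance must fix its attestor list from an adaptively evolving system, and the corruption timing must match $C$ at halting. The danger is that corruptions after unbonding push $w(C)$ above $\Wp$. I would handle this by letting the reduction abort on keys outside the current set, and by relying on $T_{\mathsf{fin}}$-delayed unbonding, so that those keys never appear in a snapshot the vault still accepts. A second concern is that the global $\mathsf{Col}$ and database arguments across epochs must not be double-counted. I would accept this as slack, since the statement already sums the per-epoch terms.
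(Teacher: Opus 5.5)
Your proposal follows the paper's proof essentially step for step: an induction over finalized checkpoints whose invariant keeps the vault's snapshot honest, and a case split on whether an honest attestor signed $M_e$ (\cref{thm:opf} if none did; otherwise the honest-signing rule, with $\mathsf{Fin}$ and the $\delta_{\mathsf{col}}$ exclusion of equal headers). It then takes a union bound over the $E$ epochs and ends with the $\Adv^{\mathsf{uf}}_U$ and nullifier argument for paid leaves. The paper resolves the obstacle you flag without any abort rule: it truncates the experiment at the finalization of the epoch-$e$ proposal, so $C$ is the corrupt set at a moment when the epoch-$e$ attestors are still bonded, and it places all $N_{\mathsf{tot}}$ honest keys in the multi-user game ($K=N_{\mathsf{tot}}$), which avoids having to know at key generation which keys will sit in the epoch-$e$ list.
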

\begin{proof}[Proof sketch]
By induction over checkpoints we keep the invariant $I_e$: finalized $M_1,\dots,M_e$ are honest, the vault's $(\Ratt,W,N)$ is the honest one of epoch $e+1$, and each payment pays a leaf of a finalized $\omega_{e'}$ at most once. A proposal for epoch $e$ uses the honest snapshot by $I_{e-1}$; if no honest attestor signed $M_e$, \cref{thm:opf} bounds its finalization, and otherwise the honest-signing rule makes $M_e$ honest outside $\mathsf{Fin}$. Nullifiers, balance conservation and user-signature unforgeability bound payments by the escrow. The full proof is in \supp{S5}.
\end{proof}

\noindent The union bound over epochs is why the reveal rule needs $\lambda+\log_2E$ for a lifetime target (\cref{sec:rule}); $\mathsf{Fin}$ is the safety of the sidechain's finality gadget, which we do not re-analyze (\supp{S5}).

\noindent\textbf{Liveness (G4).} If honest transactions are included on the sidechain within $\Delta_{\mathsf{sc}}$ (except with probability $\zeta+2E(\epsilon_{\mathsf{LE}}+\delta_{\mathsf{LE}})$, \cref{cor:real}) and some honest party relays each checkpoint, an honest withdrawal is paid within $\Delta_{\mathsf{sc}}+\Delta_{\mathsf{cp}}+\Delta_{\mathsf{inc}}$, where $\Delta_{\mathsf{cp}}$ includes at most $2(T_{\mathsf{fin}}+\Delta_{\mathsf{inc}})$ for finalization, including the predecessor's wait when the checkpoint interval is at least $T_{\mathsf{fin}}+\Delta_{\mathsf{inc}}$ (\cref{sec:flow}, \cref{thm:opf,asm:l1}), if an honest relayer calls $\mathsf{finalize}$ once eligible.

\section{Implementation and Evaluation}\label{sec:eval}

\noindent\textbf{Artifacts and setup.} \texttt{pqcert} implements \cref{alg:cert} in Rust with the QROM rule (\cref{eq:rule}) and Keccak-256 Merkle multiproofs, over the round-3 Dilithium implementation used by the prototype of~\cite{NIROPoK2025} (labeled Dilithium; not interoperable with FIPS~204, which a second backend cross-checks). We also built a Python verifier, a network harness, Monte Carlo simulators and the Solidity vault of \cref{sec:gas}, all public~\cite{pqvault_artifact} (\supp{S6}). The verifier rejected all negative tests in every run. Timings are single-threaded medians on a 2-vCPU cloud VM (\emph{M1}). Unless stated otherwise, $\lambda=128$, $\log_2 q=64$, $\kappa_Q=4$, $\delta=0.1$, $\Wp=\lfloor W/3\rfloor$ (equivalently $\lfloor 0.3333\,W\rfloor$ in the size runs, \supp{S6}), stake is uniform, log-normal ($\sigma=1.5$) or Zipf ($s=1$), and validators sign in random order until at least $85\%$ of the stake has signed (under Zipf, up to $99\%$ at $N=2^8$).

\subsection{Certificate Size and Cost}
\Cref{fig:size} and \cref{tab:cert} report certificates with real round-3 Dilithium2 signatures.
\emph{(1)~Compactness.} At fixed $\lambda$, $q$ and $\Ws/\Wp$, $\kr$ does not depend on $N$, and the size $O(\kr(|pk|+|\sigma|)+h\kr\log N)$ grows logarithmically in $N$ while $\Ws/\Wp$ stays bounded away from $1$: from $0.36$\,MB at $N=2^8$ to $0.80$\,MB at $N=2^{16}$ (log-normal, QROM). Shipping all signatures costs $208$\,MB at $N=2^{16}$, $260\times$ more, a ratio growing as $\Theta(N/\log N)$.
\emph{(2)~Cost of quantum security.} The QROM rule raises $\kr$ by $35\%$ ($143$ to $193$), but coins collide on heavy attestors, so distinct reveals grow less: the size overhead is $19$--$34\%$ across all configurations ($23$--$34\%$ for log-normal stake), and verification time grows by $20$--$58\%$ for log-normal stake.
\emph{(3)~Stake skew helps.} Under Zipf stake, coins concentrate on large validators and the certificate shrinks to $0.57$\,MB at $N=2^{16}$.
At $N=2^{16}$, verification takes $18$\,ms on one core and building $0.47$\,s, dominated by hashing the signature tree. As \cref{tab:kr} predicts, the proven threshold dominates cost: at $N=4096$, raising $\Wp$ from $W/3$ to $W/2$ and $2W/3$ grows the certificate from $0.69$ to $1.06$ and $1.95$\,MB. \Cref{tab:h384} reports $384$-bit commitments (\cref{sec:cert}); its totals also reflect slightly different reveal sets and a separate timing run, and with the reveal set fixed the wider hashes alone add $3.1$--$5.7\%$.

\begin{figure}[t]
\centering
\includegraphics[width=0.85\columnwidth]{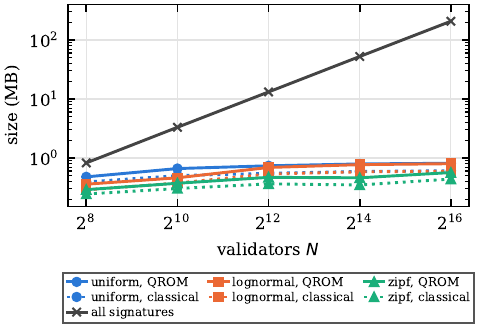}
\caption{Certificate size vs.\ number of validators (round-3 Dilithium2, $\Wp=\lfloor0.3333\,W\rfloor$, $85\%$ signed). Colors: stake distribution; solid: QROM rule; dotted: classical rule; black: all signatures.}
\label{fig:size}
\end{figure}

\begin{table}[t]
\centering
\caption{Certificates on M1 (round-3 Dilithium2, log-normal stake, QROM rule). ``Distinct'': distinct reveals; ``Gain'': size of all signatures divided by certificate size.}
\label{tab:cert}
\scriptsize
\setlength{\tabcolsep}{3.5pt}
\begin{tabular}{@{}rcccccc@{}}
\toprule
$N$ & $\kr$ & Distinct & Size (MB) & Gain & Build (ms) & Verify (ms)\\
\midrule
$2^{8}$  & 192 & 94  & 0.36 & $2.3\times$ & 2.4 & 8.7\\
$2^{10}$ & 192 & 118 & 0.46 & $7.2\times$ & 7.9 & 11.3\\
$2^{12}$ & 193 & 173 & 0.69 & $19\times$ & 29.4 & 16.3\\
$2^{14}$ & 193 & 188 & 0.77 & $68\times$ & 129 & 21.6\\
$2^{16}$ & 193 & 189 & 0.80 & $260\times$ & 474 & 18.4\\
\midrule
\multicolumn{7}{@{}l}{\emph{Classical rule, $2^{16}$:} $\kr=143$, 140 distinct, 0.60\,MB, verify 13.7\,ms}\\
\bottomrule
\end{tabular}
\end{table}

\begin{table}[t]
\centering
\caption{Keccak-384 vs.\ Keccak-256 commitments (round-3 Dilithium2, log-normal, QROM, M1).}
\label{tab:h384}
\scriptsize
\setlength{\tabcolsep}{3.5pt}
\begin{tabular}{@{}rccccc@{}}
\toprule
 & \multicolumn{3}{c}{Certificate size (MB)} & \multicolumn{2}{c}{Verify (ms)}\\
\cmidrule(lr){2-4}\cmidrule(l){5-6}
$N$ & $h=256$ & $h=384$ & Increase & $h=256$ & $h=384$\\
\midrule
$2^{12}$ & 0.692 & 0.734 & $+6.1\%$ & 19.0 & 20.4\\
$2^{14}$ & 0.772 & 0.801 & $+3.8\%$ & 20.6 & 21.7\\
$2^{16}$ & 0.801 & 0.857 & $+7.0\%$ & 21.2 & 20.7\\
\bottomrule
\end{tabular}
\end{table}

\noindent\textbf{FIPS 204 cross-check and light clients.} On a third machine, \emph{M3} (2 vCPUs, Intel Xeon, 2.1\,GHz), we ran the harness with round-3 Dilithium2 and with FIPS~204 ML-DSA-44 for $N=2^8$--$2^{16}$ (log-normal, $\Wp=\lfloor W/3\rfloor$). $\kr$ is identical; sizes differ by $-9.5\%$ to $+2.3\%$ (the latter for the classical rule at $N=2^{16}$) with no consistent sign, because different signatures change which positions the coins select; FIPS~204 verification is $2$--$33\%$ slower. At $N=2^{16}$ (QROM), the certificates are $0.80$ and $0.81$\,MB and verify in $18.4$ and $21.1$\,ms. The Python verifier (FIPS~204) checks a QROM certificate at $N=4096$ in $25$, $44$ and $57$\,ms on M1 for ML-DSA-44/65/87, and in $18$, $27$ and $44$\,ms on a laptop VM (\emph{M2}). L1, not light clients, is the bottleneck (\cref{sec:gas}).

\subsection{Baselines: Falcon and Hash-Based Signatures}\label{sec:baselines}
The signature scheme sets most of the cost. \Cref{tab:base} compares the Dilithium levels with Falcon-512 and with the synchronized hash-based signatures (generalized XMSS with target-sum encoding) that Drake \emph{et al.}~\cite{drake2025hashsig} propose for post-quantum Ethereum, measured with their implementation~\cite{hashsig_impl} at lifetime $2^{32}$. The hash-based rows combine their measured size and verification time with the reveal structure of the Dilithium2 (r3) certificate ($173$ distinct reveals, $1{,}328$ multiproof hashes), as coin selection is scheme-independent; the other rows are complete \texttt{pqcert} certificates.

\begin{table}[t]
\centering
\caption{Signature-scheme baselines on M1 ($N=4096$, log-normal, QROM, $3{,}524$ signers). ``All sigs'': size of shipping every signature and key. Rows marked (r3) use round-3 Dilithium (\cref{sec:eval}). $^\ast$Composed from measured components (see text).}
\label{tab:base}
\scriptsize
\setlength{\tabcolsep}{2.4pt}
\resizebox{\columnwidth}{!}{%
\begin{tabular}{@{}lccccc@{}}
\toprule
Scheme & pk / sig (B) & sign / verify (ms) & Cert (MB) & Cert verify (ms) & All sigs (MB)\\
\midrule
Dilithium2 (r3) & 1312 / 2420 & 0.42 / 0.076 & 0.69 & 16.3 & 13.2\\
Dilithium3 (r3) & 1952 / 3293 & 0.75 / 0.12 & 0.94 & 26.4 & 18.5\\
Dilithium5 (r3) & 2592 / 4595 & 0.74 / 0.22 & 1.27 & 41.4 & 25.3\\
Falcon-512 & 897 / 666 & 0.20 / 0.036 & 0.31 & 8.7 & 5.5\\
XMSS-TS, hashing-opt.~\cite{drake2025hashsig} & 52 / 3116 & 0.77 / 0.22 & 0.59$^\ast$ & 40$^\ast$ & 11.2\\
XMSS-TS, size-opt.~\cite{drake2025hashsig} & 52 / 2092 & 1.55 / 0.38 & 0.42$^\ast$ & 69$^\ast$ & 7.6\\
\bottomrule
\end{tabular}}
\end{table}

Compact certificates beat shipping all signatures by $18$--$20\times$ at $N=4096$. Falcon-512 is the best fit: its certificate is $2.2\times$ smaller than with Dilithium2 (r3) ($0.40$\,MB at $N=2^{16}$, verified in $11$\,ms), at the price of floating-point signing. The hash-based signatures of~\cite{drake2025hashsig} give smaller certificates than ML-DSA thanks to $52$-byte keys, but verify $2.5$--$4.2\times$ slower here and are \emph{synchronized}: a key signs at most once per epoch. Their hash-based succinct aggregation, which we did not measure, gives constant-size proofs but needs a SNARK prover. Compact certificates lose for very small sets (below about $2^6$ signers, extrapolating \cref{tab:cert}) and at high thresholds such as $\Wp=2W/3$ (\cref{tab:kr}). On L1 the verifier contract dominates every per-reveal cost: Falcon verification is reported at $1.5$--$3.9$M gas~\cite{zknox_falcon}, versus $1.3$--$8.2$M for an ML-DSA-44 dispute response here (\cref{tab:gas}), which affects only the dispute path.

\subsection{Validating the Grinding Term}
The attack of \cref{thm:attack} relies on each fresh root independently covering all coins with probability $\beta^{\kr}\ge(w(C)/\Ws)^{\kr}$, where $w(C)$ is the validly covered weight. With a SHA3-256 coin function of the same form, an adversary covering $w(C)/\Ws\in\{1/2,1/3\}$ re-randomizes its root until all $\kr$ coins are covered. Over $120$--$300$ runs per point, the mean number of trials matches $(\Ws/w(C))^{\kr}$ within two standard errors for $\kr=2$--$12$ and $\kr=2$--$8$, respectively (e.g., $4{,}252$ trials at $\kr=12$ against $4{,}096$ predicted). \Cref{fig:kr}(b) applies the attack to classically parameterized certificates: it caps their security at about $81$ bits for $q=2^{64}$ and $65$ bits for $q=2^{80}$, against the intended $128$, whereas under the QROM rule \cref{thm:qrom} guarantees $\lambda$ bits for the grinding term.

\begin{figure*}[t]
\centering
\includegraphics[width=\textwidth]{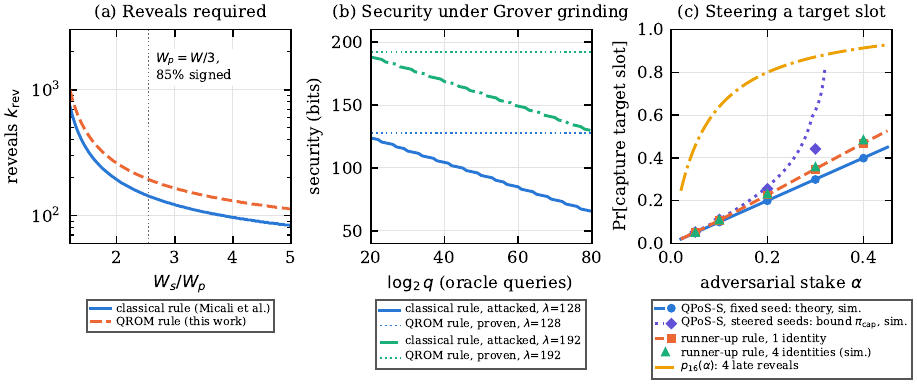}
\caption{(a)~Reveals required by the classical and QROM rules. (b)~Security of \emph{classically} parameterized certificates under the attack of \cref{thm:attack} ($\Ws/\Wp=2.55$, $N=2^{16}$); dotted: level $\lambda$ guaranteed by \cref{thm:qrom} under the QROM rule. (c)~Target-slot capture under QPoS-S, with a fixed seed and with seeds steered across epochs (dotted: bound $\pi_{\mathsf{cap}}$ of \cref{prop:late} for $B=0$; diamonds: the simulated adversary of \cref{sec:eval-leader}, which maximizes $c$ rather than capture), and under the runner-up rule (lines: theory; other markers: $4\times10^{5}$ simulated trials each).}
\label{fig:kr}
\end{figure*}

\subsection{Networked Certification}
An aggregator collects real FIPS~204 ML-DSA-44 signatures over TCP from validators in two worker processes, builds the certificate and sends it to $8$ light clients, under emulated WAN conditions (five regions at $1$--$95$\,ms one-way latency, $10\%$ of validators offline, collection until $80\%$ of the stake; full setup in \supp{S6}). Over five rounds, the median end-to-end time, from header broadcast until all light clients have verified, is $321$, $288$, $408$ and $647$\,ms for $N=128$, $512$, $1024$ and $2048$, and no round exceeded $716$\,ms. Beyond the round trip to the farthest needed region, collection time grows with sequential signature verification at the aggregator ($0.13$\,ms per signature, $221$\,ms for the ${\approx}1{,}720$ signatures at $N=2048$), which parallelizes trivially, and construction takes at most $22$\,ms. Certification is far from limiting withdrawal latency, which $T_{\mathsf{ch}}$ dominates.

\subsection{Cost on Ethereum: Measured Contracts}\label{sec:gas}
We implemented the vault and dispute state machine of \cref{sec:design,sec:opf} in Solidity (\texttt{PQVault.sol}, about $300$ lines) on the ZKNOX ML-DSA verifiers~\cite{zknox_dilithium}, and measured transaction-level gas (EIP-7623~\cite{eip7623}) with Foundry for the NIST-conformant and the packed ZKNOX verifier (\cref{tab:gas}). The test vectors are a real certificate over $N=2^{16}$ positions with $872$ FIPS~204 signatures, which both verifiers accept; $17$ tests per verifier confirm the state machine (\supp{S6}).

\begin{table}[t]
\centering
\caption{Measured L1 gas of \texttt{PQVault} (Foundry; $\kr=193$, $130$ distinct reveals, $d=16$). Storage written earlier in a test is warm. All operations fit in one transaction except direct verification, which exceeds the $2^{24}$-gas cap~\cite{eip7825} (parentheses: minimum number of transactions), and the open-all posting, split for the $128$\,KB mempool limit. Details in \supp{S6}.}
\label{tab:gas}
\scriptsize
\setlength{\tabcolsep}{3pt}
\resizebox{\columnwidth}{!}{%
\begin{tabular}{@{}lrr@{}}
\toprule
Operation & \makecell[r]{Gas, NIST\\verifier} & \makecell[r]{Gas, packed\\verifier}\\
\midrule
Register key (once, $22.4$\,KB ZKNOX format) & $6.8\times10^{6}$ & $6.8\times10^{6}$\\
Deposit & $3.5\times10^{4}$ & $3.5\times10^{4}$\\
Propose checkpoint (computes $M_e$, checks $\kr$) & $3.3\times10^{5}$ & $3.3\times10^{5}$\\
Finalize checkpoint & $9.0\times10^{4}$ & $8.9\times10^{4}$\\
Withdraw (depth-20 proof) & $1.0\times10^{5}$ & $1.0\times10^{5}$\\
\midrule
Coin challenge & $5.3\times10^{4}$ & $5.4\times10^{4}$\\
Respond (one ML-DSA verification) & $8.2\times10^{6}$ & $1.3\times10^{6}$\\
Open-all request & $2.4\times10^{4}$ & $2.4\times10^{4}$\\
Post all $193$ records ($7$ txs, $0.80$\,MB) & $3.0\times10^{7}$ & $3.0\times10^{7}$\\
\midrule
Direct verification, all $193$ coins & \makecell[r]{$1.1\times10^{9}$\\(${\ge}\,65$ txs)} & \makecell[r]{$1.8\times10^{8}$\\(${\ge}\,11$ txs)}\\
\bottomrule
\end{tabular}}
\end{table}

The optimistic path (propose plus finalize) costs about $4.2\times10^{5}$ gas per checkpoint ($4.1\times10^{5}$ in the happy-path test). Answering a coin challenge costs $8.2\times10^{6}$ (NIST) or $1.3\times10^{6}$ gas (packed; $1.4\times10^{6}$ with the challenge), paid by the party in the wrong. Direct verification costs $2{,}600\times$ and $430\times$ the optimistic path, and its $0.81$\,MB of calldata exceeds the common $128$\,KB mempool limit; extrapolated to the $189$ distinct reveals at $N=2^{16}$ (\cref{tab:cert}), it would cost $1.6\times10^{9}$ and $2.6\times10^{8}$ gas (\supp{S6}). Even the draft EIP-8051 precompile~\cite{eip8051} would leave direct verification bound by calldata ($20{,}512$-byte expanded keys). An open-all posting ($803{,}580$ bytes in $7$ transactions) costs $3.0\times10^{7}$ gas, set almost entirely by the calldata floor; execution adds only $1.6$--$2.0\times10^{6}$.

\subsection{Leader Election}\label{sec:eval-leader}
\noindent\textbf{Fairness and Sybil invariance.} We evaluated the score of \cref{eq:score} with SHA3-256 over a per-validator value (a hash of its index), the slot and a fixed seed, for $200$ validators with log-normal stake and $40{,}000$ slots. Win frequencies match stake fractions ($\chi^2=190.8$ with $199$ degrees of freedom, $p=0.65$). Sampling the race of \cref{lem:race}, an adversary with $\alpha=0.2$ split into $m\in\{1,2,5,10,50\}$ identities over $2\times10^5$ slots keeps a share between $0.198$ and $0.203$ (standard error $0.0009$), as \cref{lem:race} predicts.

\noindent\textbf{Steering.} The remaining experiments sample the exponential-race model of \cref{lem:race} directly; they check our derivations rather than an implementation. \Cref{fig:kr}(c) compares target-slot capture. Under the runner-up rule, the simulation (adversary sees both candidate seeds) matches case~(a) of \cref{cor:qrl}, $0.347$ vs.\ $0.348$ at $\alpha=0.3$, and $4$ identities raise it to $0.357$, so Sybil invariance fails. With a fixed QPoS-S seed the probability equals $\alpha$ within sampling error; with seeds steered across epochs it stays below $\pi_{\mathsf{cap}}$ (\cref{prop:late}), within sampling error at $\alpha=0.05$, where the bound is nearly tight. Choosing among $2^c$ epoch seeds ($k=2000$) gains $0.004$--$0.038$ in epoch share for $c\le 12$ (an illustrative range; \cref{prop:late} makes $c\ge4$ rare), always below the bound $\sqrt{c\ln 2/(2k)}$ of \cref{thm:schedule} ($0.013$--$0.046$).

\noindent\textbf{Reveal order and fallback proposers.} A further experiment (\texttt{reveal\_order\_sim.py}) checks \cref{prop:late} and \cref{thm:schedule}, with $300$ honest validators of log-normal stake in its first two parts. More adversarial identities raise $\Pr[c\ge x]$ towards, but not above, $\alpha^x$: at $\alpha=0.3$, $100$ identities give $0.299$, $0.088$, $0.026$ and $0.0077$ for $x=1,\dots,4$ ($10^5$ orders each), against $0.3$, $0.09$, $0.027$ and $0.0081$; a single identity never has $c\ge2$. Over $10^5$ consecutive epochs in which the adversary ($20$ identities) always picks the candidate maximizing the next $c$, $\mathbb E[2^c]$ is $1.14$, $1.42$ and $2.19$ for $\alpha=0.1$, $0.2$ and $0.3$ (bound $\mu$: $1.14$, $1.5$ and $4$), and $c$ never reached $L_t=25$; its target-slot capture rate, $0.112$, $0.255$ and $0.442$, is an example rather than a worst case (bound $\pi_{\mathsf{cap}}$: $0.112$, $0.265$ and $0.622$). The third part ($30$ adversarial and $70$ equal-stake honest identities, $2\times10^5$ slots) finds slots not guaranteed with frequency $0.300$ and $0.511$ at $\alpha=0.3$ for $B=0$ and $B=1$, against $p_B=0.300$ and $0.514$.

\section{Discussion and Conclusion}\label{sec:discussion}

\noindent\textbf{Scope of \cref{asm:l1}.} A CRQC that breaks ECDSA also breaks Ethereum's BLS consensus signatures, so \cref{asm:l1} is not implied by our threat model: it assumes that L1 consensus is not attacked, or is restored by a state-preserving hard fork~\cite{buterin2024emergency}. \Cref{asm:slow} is needed only for transactions that expose an ECDSA key (deposits, initial bindings, bonds of relayers and watchers); PQ smart accounts would remove it (\supp{S7}).

\noindent\textbf{Data availability and watchers.} \Cref{thm:opf} assumes one honest online watcher; withholding blob data cannot cause a false acceptance (\cref{sec:opf}). For a gas-price ceiling $p^{\max}$, $b_C\ge 8.2\times10^{6}p^{\max}$ covers a response and $b_R\ge b_C$ plus a reward. Open-all requests carry no bond, so a watcher spends at most one request fee and one bond $b_C$ per proposal, while relayers must recover the $3.0\times10^{7}$-gas posting through fees, on which G4 relies. Any depositor can watch at about $20$\,ms per checkpoint (\supp{S7}).

\noindent\textbf{Public leader schedule.} The public schedule enables \cref{thm:fresh} but exposes leaders and lets late revealers see candidate schedules. With $L=3$ fallback proposers, an adversary suppressing $B$ honest proposers per slot affects only slots that are not guaranteed, so inclusion within $31$ to $56$ slots holds with overwhelming probability at $\alpha=0.3$ (\cref{thm:schedule,cor:real}). The stake-weighted reveal order makes late reveals rare ($\Pr[c\ge x]\le\mu\alpha^x$, \cref{prop:late}); target-slot capture is at most $0.622$ at $\alpha=0.3$ (runner-up attack: $0.348$; owning the last four reveals: $0.997$) and at most $0.265$ for $\alpha\le0.2$. Instant adaptive corruption, which could seize the last honest revealer, is outside our model, and hidden schedules via a PQ VRF or secret leader election are left open (\supp{S7}).

\noindent\textbf{Limitations.} The sidechain is analyzed by simulation only, the network experiment emulates WAN latency on shared cores, and gas was measured in a local Foundry EVM. If the corrupt third abstains and $85\%$ of honest stake signs, $\Ws/\Wp=1.7$ and $\kr=340$, which enlarges certificates but leaves the optimistic L1 path unchanged. Our leader-election bounds assume a standard BFT finality gadget, which we did not re-analyze, that finalizes each block of the reveal window within its slot. The prototype does not implement the binding registry or the canary. Any EUF-CMA scheme fits; Falcon-512 halves certificate size but is hard to sign in constant time (\supp{S7}).

\noindent\textbf{Conclusion.} Protecting Ethereum assets against quantum adversaries without changing L1 is possible only for escrowed assets, and only if every predicate on the release path is PQ. PQ-Vault realizes this with quantum-secure compact certificates, steering-resistant stake-weighted PoS and single-reveal fraud proofs. These guarantees cost $19$--$34\%$ more certificate size, certification in $0.65$\,s at $2048$ validators, and about $4.2\times10^5$ gas per checkpoint on L1.

\setcounter{section}{0}
\renewcommand{\thesection}{S\arabic{section}}
\noindent\textbf{Appendices.} The appendices contain the full proofs of all results proved in the paper, the complete specification of the dispute protocol, and additional details and discussion. Appendix~S$x$ supplements Section~$x+3$; each result is restated before its proof.
\section{System and Threat Model: Proofs}\label{S-model}

\subsection{On \cref{asm:l1}}
\Cref{asm:l1} is unavoidable: if Ethereum's own consensus is subverted, no contract on Ethereum can protect anything. Our goal is to protect assets against the failure of \emph{account-level} cryptography (\cref{sec:discussion}).

\subsection{Proofs of \cref{prop:impossible} and \cref{cor:vault}}

\begin{restate}{Proposition~\ref{prop:impossible} (Custody impossibility)}
Let $\Pi$ be any protocol that does not modify L1 validity rules. Suppose that at some time $t\ge\tau_Q$ an asset $a$ is still spendable under the L1 predicate $\mathsf{ECDSA.Vf}(pk_M,\cdot)=1$ with $pk_M$ public. Then a QPT adversary can produce an L1-valid transaction spending $a$ that L1 cannot distinguish from one signed by the owner. Nothing produced by $\Pi$ can make this transaction invalid.
\end{restate}
\begin{proof}
$\mathcal{A}$ runs Shor's algorithm on $pk_M$ to obtain $sk_M$ and signs a transaction that transfers $a$. L1 validity depends only on the predicate and on the account state, and $sk_M$ produces exactly the signatures the owner would produce. Since $\Pi$ does not change L1 validity rules, no message, proof, or state of $\Pi$ can invalidate the transaction. Whether the adversary or the owner moves $a$ first is then decided by transaction ordering alone, which the protocol does not control.
\end{proof}

\begin{restate}{Corollary~\ref{cor:vault}}
Consider, on unmodified L1, an asset that must remain \emph{live}, i.e., movable by its owner, and \emph{safe} against $\mathcal{A}$ after $\tau_Q$. Every predicate that can release it after $\tau_Q$ must then rest on PQ assumptions and on \cref{asm:l1} alone; in particular, no ECDSA key whose public key is exposed may authorize its release, or change or redirect such a predicate without a PQ-authorized veto. On an account-based L1 this requires holding the asset under a contract predicate $\Phi$.
\end{restate}
\begin{proof}
An ECDSA key that can release the asset, directly or by changing the predicate without a veto, falls under \cref{prop:impossible} once its public key is exposed. An EOA is controlled by its ECDSA key, so only contract code can implement a different predicate.
\end{proof}

\noindent \Cref{prop:impossible} rules out designs in which validators ``check a PQ proof before including'' an L1 transaction. The sidechain design of~\cite{myeprint} implicitly assumed such a check.

\noindent The corollary does not cover addresses whose public key has never been exposed. Attacking their address commitment is a different problem, and their first spend exposes the key (\cref{asm:slow}). Nor does it count an asset made permanently unspendable as protected: that would be safe but not live.
\section{PQ-Vault Design: Details and Proof}\label{S-design}

\subsection{Why the Earlier Zero-Knowledge Binding Is Replaced}
The sidechain design of~\cite{myeprint} bound $pk_M$ to $pk_S$ with a zero-knowledge proof that $pk_M=\mathsf{KGen}(F(\mathit{Seed}))$. That proof requires secp256k1 scalar multiplication inside a transparent circuit, which is costly. It also does not help after $\tau_Q$: the adversary learns $sk_M$, and the L1 predicate that matters still accepts $sk_M$ (\cref{prop:impossible}). We replace it with a binding that is authenticated \emph{classically, while classical authentication is still meaningful}, and that becomes immutable afterwards.

\subsection{Binding Mechanism in Full}\label{S-binding}
The binding has three parts. In the \emph{bind} step, $\mathit{Addr}_M$ calls $\mathsf{bind}(\Hash(pk_S))$; if no binding exists, it takes effect immediately, and the L1 transaction sender authenticates $\mathit{Addr}_M$. A later \emph{rebind} from $\mathit{Addr}_M$ creates a pending rebinding with a fresh per-address counter $n$ and activation time $\tau_{\mathsf{act}}=\tau_{\mathsf{req}}+D$. At most one rebinding per address may be pending, so the owner never has to cancel more than one at a time; $\mathsf{applyRebind}$ succeeds only in a block with timestamp strictly greater than $\tau_{\mathsf{act}}$, and only if counter $n$ is not cancelled. Until then, the current PQ identity can \emph{veto} it with an ML-DSA signature on $\textsf{cancel}(\Vault,\mathit{Addr}_M,n)$, submitted to the sidechain. The cancellation reaches $\Vault$ through a certified checkpoint, whose binding-operations root $\beta_e$ is part of $M_e$; no ML-DSA verification on L1 is needed. A finalized root does not by itself change the registry: anyone, normally the owner, then calls $\mathsf{applyCancel}(\mathit{Addr}_M,n,e,\pi)$, which checks that $\pi$ opens the operation $(\textsf{cancel},\mathit{Addr}_M,n)$ under the finalized $\beta_e$ and marks counter $n$ cancelled. A cancellation is effective if this call is included at a timestamp at most $\tau_{\mathsf{act}}$; it consumes $n$, and an operation for any other counter has no effect.

\noindent\textbf{Quantum canary.} The canary makes the registry immutable once quantum attacks are demonstrated. $\Vault$ holds a bounty locked to an address $\mathit{Addr}^\star$ derived from a nothing-up-my-sleeve secp256k1 point whose discrete logarithm no one knows. The canary fires on a valid ECDSA signature for $\mathit{Addr}^\star$, checked with \texttt{ecrecover}, on the \emph{fixed} digest $z^\star=\Hash(\textsc{canary}\,\|\,\mathit{chainid}\,\|\,\Vault)$, and then permanently disables all ECDSA-authorized registry operations. The digest must be fixed: for a digest chosen by the submitter, a valid signature exists without the key (pick $u,v\ne0$, let $R=uG+vQ$, $r=x(R)$, $s=rv^{-1}$ and $z=us$). A signature on $z^\star$ is a forgery of ECDSA on a predetermined message, which for a key with unknown discrete logarithm signals a cryptanalytic or quantum break.

\subsection{Further Details of the Vault and Sidechain}
\noindent\textbf{Vault.} The sidechain credits each deposit event once, after it is final on L1. Several proposals for the same epoch may be pending, and a rejected epoch can be proposed again; the dispute functions are specified in \cref{sec:opf}. When $\mathsf{finalize}$ succeeds, other proposals for epoch $e$ become stale. The sidechain assigns each withdrawal a globally unique $\mathit{nonce}$, so $\xi$ is a nullifier: an authorized withdrawal that appears in several checkpoints is paid at most once. Recipients that are not fresh addresses or PQ smart accounts would expose the withdrawn funds again (\cref{prop:impossible}).

\noindent\textbf{Sidechain.} $\Hash_n$ denotes $n$ output bits of cSHAKE256 with a PQ-Vault customization string, so that its inputs are separated from Keccak-256 and from ML-DSA's internal SHAKE256. Honest validators send $r_i$ to no one before their phase or sub-window starts. A validator whose reveal for epoch $e$ is missing is suspended from epoch $e+3$ on, the first epoch whose registrations are still open, until it registers again; it is not slashed, because a censored honest reveal looks the same. Every node computes the schedule of the whole epoch once the seed is final. Proposer eligibility needs no proof beyond the block signature. The $1536$-bit commitments make each commitment open to at most one secret, statistically in the QROM, and the $512$-bit secrets and seed keep the quantum hiding errors small (\cref{thm:fresh}). They add $192$ bytes to a registration that already carries a $2{,}420$-byte signature.

\subsection{Proof of \cref{prop:binding}}

\begin{restate}{Proposition~\ref{prop:binding} (Binding integrity)}
Suppose $\mathsf{Bind}[\mathit{Addr}_M]=\Hash(pk_S)$ was set by $\mathit{Addr}_M$'s owner, and let $\Delta_{\mathsf{v}}=\Delta_{\mathsf{obs}}+\Delta_{\mathsf{sc}}+\Delta_{\mathsf{cp}}+\Delta_{\mathsf{apply}}$, where $\Delta_{\mathsf{obs}}$ bounds the owner's time to observe a pending rebinding and sign, $\Delta_{\mathsf{sc}}$ bounds sidechain inclusion of its transaction, $\Delta_{\mathsf{cp}}$ bounds the time from inclusion to finalization of the carrying checkpoint, including disputes and the $\mathsf{finalize}$ call (\cref{thm:opf}), and $\Delta_{\mathsf{apply}}$ bounds preparing and including $\mathsf{applyCancel}$. Suppose $D>\Delta_{\mathsf{v}}$, the owner is online at least once in every window of length $D-\Delta_{\mathsf{v}}$, these bounds hold, and \cref{asm:l1,asm:pq,asm:slow,asm:watcher} and the timing conditions of \cref{thm:opf} hold. By \cref{cor:real} and \cref{thm:opf}, $\Delta_{\mathsf{sc}}$ and $\Delta_{\mathsf{cp}}$ fail only with the probabilities given there. Then a QPT adversary can change the binding only by forging an ML-DSA signature, by breaking certificate soundness, or before the owner's first binding transaction is included.
\end{restate}
\begin{proof}
Before the owner's first binding transaction is included, no binding exists, and this case is excluded by the statement. Afterwards, any change to $\mathsf{Bind}[\mathit{Addr}_M]$ goes through a pending rebinding that lasts $D$. Because the owner is online in every window of length $D-\Delta_{\mathsf{v}}$, it is online at some time at which at least $\Delta_{\mathsf{v}}$ of the delay remains. It then signs a cancellation of counter $n$ with $sk_S$, which is included on the sidechain, finalized on L1, and applied by $\mathsf{applyCancel}$ no later than $\tau_{\mathsf{act}}$. Since $\mathsf{applyRebind}$ needs a timestamp strictly greater than $\tau_{\mathsf{act}}$, the rebinding can never be applied. An adversary can prevent this outcome in only two ways. It can produce an ML-DSA-signed operation under $pk_S$ that the owner never signed; operations signed for earlier counters have no effect on $n$, so this is a signature forgery. Or it can have the vault finalize a checkpoint that omits or reverses binding operations that are final on the honest sidechain. Such a checkpoint is not signed by any honest validator, and its finalization is bounded by \cref{thm:opf} (with \cref{thm:qrom} on the direct path). After the canary fires, the vault rejects every ECDSA-authorized registry operation, and the binding can no longer change at all.
\end{proof}

\subsection{Legacy EOAs and Recovery Hard Forks}
The binding gives users with funds still in legacy EOAs a pre-registered PQ identity, which is useful if Ethereum later adopts a recovery hard fork~\cite{buterin2024emergency}. We emphasize that such funds are not protected by PQ-Vault itself (\cref{prop:impossible}). The registry is not on the release path of escrowed funds, which deposits credit to $\beta$ directly, and the canary is defense in depth: no theorem of the main paper relies on either.

\subsection{Transparent Proof Layer}
If the transparent proof layer is enabled, a block carries a proof of correct state transition, which lets light validators check blocks without re-execution. Certificate soundness does not depend on that layer. The leader has no exclusive power over certification, since any party can aggregate signatures into a certificate (\cref{alg:cert}).

\subsection{Trust in the Attestor Set: Further Discussion}
With $\Wp=\lfloor W/3\rfloor$, a coalition holding more than a third of the attestor stake could certify an arbitrary $M_e$ without any honest signer; $\Wp$ trades certificate cost against the fraction of stake that must be corrupted. Comparing the cost of corrupting a third of the stake with the escrowed value is only an economic motivation for the assumption that corrupt stake stays below one third of every accepted attestor set. It would need further assumptions on utilities, bribery and enforceability, and an adversary may value disruption above the escrow. Signers of a fraudulent certificate are identifiable, but slashing them would require stake held on L1, which our prototype does not implement.
\section{Compact Certificates: Proofs and Details}\label{S-cert}

\subsection{Proof of Theorem~\ref{thm:qrom}}\label{S-cert-qrom}
The additive term accounts for the 128-bit modular reduction in \cref{eq:coin}: for any set $B\subseteq[\Ws]$, at most $|B|\lceil2^{128}/\Ws\rceil$ of the $2^{128}$ values reduce into $B$. The verifier takes $W$ and $N$ from its own state and accepts only $(1+\delta)\Wp\le\Ws\le W$ (\cref{alg:cert}), which bounds $\kr$ and makes the additive term negligible. The game fixes $W\le W_{\max}=2^{64}-1$ (stakes are 64-bit integers) and $\Wp$, which the verifier takes from its state.

\begin{restate}{Theorem~\ref{thm:qrom} (QROM soundness of compact certificates)}
Let $d=\lceil\log_2N\rceil$, let $q_V=\kr(2d+3)$ with $\kr=\kr((1+\delta)\Wp)$, let $t=q+q_V+2^{d+1}$, and let $h$ be the smallest output width of the oracles and trees involved. For every adversary making $q$ quantum queries to $\RO$ and any number of signing queries,
\[
\Adv^{\mathsf{CERT\text{-}FORGE}}_{\mathcal A}\le 12t^2\varepsilon^\ast+\frac{48t^3+2q_V}{2^{h}}+2\Adv^{\mathsf{uf}}_{K}.
\]
\end{restate}

\noindent The certificate is the BCS transformation~\cite{bcs2016} of a one-round public-coin IOP whose prover commits to the leaf vector and whose verifier draws the $\kr$ coins, and the proof follows the analysis of~\cite[Sec.~2.9]{chiesa2019qrom}. Two features need care. The verifier's predicate involves signatures, which $\mathcal A$ obtains adaptively; and \cref{eq:coin} derives each coin from a separate query.

\begin{proof}
\emph{Step 1 (one oracle value per commitment).} Let $k_{\max}=\kr((1+\delta)\Wp)$. Replace the oracle on coin inputs $(\textsc{dom},j,x)$ with $j<k_{\max}$, where $x=(\Rsig,\Ratt,\Ws,\Wp,N,M)$, by $\RO_c$ with $hk_{\max}$-bit outputs: block $j$ of $\RO_c(x)$ answers $\RO(\textsc{dom}\|j\|x)$, and the coin is its first 128 bits reduced modulo $\Ws$. Distinct blocks are independent and uniform, so the oracle's distribution is unchanged. The domain-separated pair $(\RO,\RO_c)$ is one oracle with inputs of two kinds, and $\mathsf{Col}$ refers to its $h$-bit Merkle outputs. A quantum query, possibly in superposition over both kinds of inputs and over $j$, is simulated by one query to the combined oracle: the output register is swapped into block $j$, the other blocks are set to $\lvert+\rangle$, which XOR leaves unchanged, and the swap is undone. Query counts are therefore preserved. The verifier's queries are bounded by $q_V$, since $|P|\le\kr$ (\cref{alg:cert}) and it computes at most $\kr$ coins and at most $2|P|(d+1)$ leaf and node hashes. Building $\Ratt$ costs at most $2^{d+1}$ queries, and ML-DSA's internal hashing is independent of $\RO$ (\cref{sec:prelim}). Hence $t$ bounds all queries to $\RO$ and $\RO_c$ in the experiment.

\emph{Step 2 (a fixed database property).} Let $D$ be a database without collisions. Consider each coin entry of $D$ with input $x=(R,\Ratt,\Ws,\Wp,N,M)$ whose $\Wp$ field is the game's $\Wp$ and whose $\Ws$ lies in $[(1+\delta)\Wp,W]$; the verifier reads no other coin entries, and read the attestor leaves $(pk_i,w_i)$ below $\Ratt$ and the leaves $(\sigma_i,L_i)$ below $R$ from $D$, with $\bot$ for leaves not in $D$. Let $V$ be the set of positions $i<N$ at which both leaves are present, $\mldsa.\mathsf{Vf}(pk_i,M,\sigma_i)=1$, $w_i>0$ and $L_i+w_i\le\Ws$, and let $B=\bigcup_{i\in V}[L_i,L_i+w_i)$. Call the entry \emph{light} if $w(V)\le\Wp$ and \emph{hit} if all $\kr(\Ws)$ coins lie in $B$. Let $P$ be the property that $D$ contains a light, hit coin entry. $P$ is a fixed set of databases: it depends neither on the signing history nor on $C$, because signature validity is a fixed predicate once ML-DSA's independent oracle is fixed, and we condition on that oracle.

We bound $I(P\,|\,\overline{\mathsf{Col}},t)$. Adding a coin entry for input $x$ leaves every tree in $D$ unchanged and creates one new entry. If that entry is light, then $|B|\le w(V)\le\Wp$, and each coin lies in $B$ with probability at most $|B|\lceil2^{128}/\Ws\rceil/2^{128}\le\Wp/\Ws+\Ws/2^{128}$; the $\kr$ blocks are independent, so the entry is hit with probability at most $\varepsilon(\Ws)\le\varepsilon^\ast$. Adding a hash entry changes a tree below an existing root, and thereby $V$ or $B$ for some coin entry, only if its output equals one of the at most $2t$ child or root values occurring in inputs of $D$, which has probability less than $2t/2^h$. Hence $I(P\,|\,\overline{\mathsf{Col}},t)\le\varepsilon^\ast+2t/2^h$. With $I(\mathsf{Col},t)<t/2^h$, \cref{lem:extract}(i) gives $\Pr[D\in P]\le 6t^2\varepsilon^\ast+18t^3/2^h$ and $\Pr[D\in\mathsf{Col}]\le 6t^3/2^h$.

\emph{Step 3 (the database game).} Suppose that the verifier accepts $(M^\ast,\cert^\ast)$ when every oracle value it reads is taken from $D$, that $D\notin\mathsf{Col}$, and that $w(C\cup S_{M^\ast})\le\Wp$. The coin entry for $\cert^\ast$ is in $D$, and because $D$ has no collisions the revealed leaves equal the leaves that $D$ defines below $\Rsig$ and $\Ratt$. Every revealed position verifies and therefore lies in $V$, and every coin is covered, so the entry is hit. If it is light, $D\in P$. Otherwise $w(V)>\Wp\ge w(C\cup S_{M^\ast})$, so some $i\in V$ is an uncorrupted honest position that never signed $M^\ast$, while $D$ holds a valid signature $\sigma_i$ on $M^\ast$ under its key. A reduction runs $\mathcal A$ with an efficient compressed-oracle simulation, forwards signing and corruption queries to the multi-user EUF-CMA challenger for the $K$ honest keys, measures $D$ at the end, and outputs $(M^\ast,\sigma_i)$. Hence the database game is won with probability at most $\omega_D\le 6t^2\varepsilon^\ast+24t^3/2^h+\Adv^{\mathsf{uf}}_K$.

\emph{Step 4 (back to the real oracle).} The final check reads at most $q_V$ oracle values, each of at least $h$ bits. By \cref{lem:extract}(ii), $\Adv\le(\sqrt{\omega_D}+\sqrt{q_V/2^h})^2\le 2\omega_D+2q_V/2^h$, which is the claimed bound.
\end{proof}

\subsection{Remarks on Theorem~\ref{thm:qrom}}\label{S-cert-remarks}
The proof needs no entropy property of honest signatures and no statement about \emph{when} a leaf was committed: a valid leaf at an honest position is either charged to $S_{M^\ast}$ or is a forgery extracted after $\mathcal A$ halts. Registered keys, including copies of honest keys, lie in $C$. Quantifying over all coin entries covers the adaptive choice of $\Ws$, $M^\ast$ and commitment. With $\Wp\ge W/4$, $(1+\delta)\Wp\le\Ws\le W<2^{64}$ (stakes are 64-bit integers in the contract) and $\delta=0.1$, we have $\kr<2^{11}$ and $\Ws^2/(\Wp2^{128})<2^{-62}$, so the reduction term changes $\varepsilon(\Ws)$ by a factor of at most $1+2^{-51}$.

\subsection{Proof of Theorem~\ref{thm:attack}}\label{S-cert-attack}
\begin{restate}{Theorem~\ref{thm:attack} (Quantum grinding attack)}
Let $d=\lceil\log_2 N\rceil$ and $c_{\mathsf{ev}}=d+1+\kr$. Suppose the adversary corrupts $C$ with $a=w(C)\le\Wp$, receives no honest signatures, and some position outside $C$ or some padding position exists. Fix any $\Ws$ with $(1+\delta)\Wp\le\Ws\le W$, let $\beta=|\{y<2^{128}: y\bmod\Ws<a\}|/2^{128}$ and $\varepsilon_C=\beta^{\kr}$; then $\varepsilon_C\ge(a/\Ws)^{\kr}$. If $q\ge 2^{d+1}+c_{\mathsf{ev}}$, there is a quantum adversary making at most $q$ queries that wins $\mathsf{CERT\text{-}FORGE}$ with probability at least
\[
\frac{4}{\pi^2}(2T_G+1)^2\varepsilon_C\Big[1-(2T_G+1)^2\big(\varepsilon_C+(d+2)2^{-h}\big)\Big],
\]
where $T_G=\lfloor(q-2^{d+1}-c_{\mathsf{ev}})/(2c_{\mathsf{ev}})\rfloor$.
\end{restate}

\begin{proof}
Write $2^{128}=m\Ws+r$ with $0\le r<\Ws$. Exactly $am+\min(a,r)\ge a2^{128}/\Ws$ values reduce into $[0,a)$, which gives $\beta\ge a/\Ws$. The adversary signs the claimed $M$ with every key in $C$ and packs their intervals into $[0,a)$. The verifier checks only revealed intervals, so the claimed $\Ws$ need not equal the weight of the committed leaves. One position outside $C$ carries an $h$-bit salt $z$, and all other leaves are fixed. Building the tree once costs at most $2^{d+1}$ queries. For each $z$, recomputing $\Rsig(z)$ along one root path and evaluating the $\kr$ coins costs $c_{\mathsf{ev}}$ queries. Let $G(z)=1$ iff all coins are below $a$; a marked $z$ yields a certificate that reveals only positions in $C$, all of which verify.

Let $\nu$ be the fraction of marked salts. Fix the oracle values on Merkle inputs, which determine $\Rsig(\cdot)$. Coin inputs are domain-separated from Merkle inputs, so the indicators for distinct roots are independent with mean $\varepsilon_C$, and salts with equal roots share their indicator. Two distinct salts give equal roots with probability at most $(d+1)2^{-h}$, because the first equal hash on their paths is a collision. Hence $\mathbb E[\nu]=\varepsilon_C$ and $\mathbb E[\nu^2]\le\varepsilon_C^2+\varepsilon_C(d+2)2^{-h}$. Grover's algorithm with $T_G$ iterations uses $2T_Gc_{\mathsf{ev}}$ queries (computation and uncomputation) and $c_{\mathsf{ev}}$ more to recompute the certificate classically. For a fixed oracle, it outputs a marked salt with probability $\sin^2((2T_G+1)\theta)$, where $\sin^2\theta=\nu$. If $\nu\le\nu_{\max}=(2T_G+1)^{-2}$, then $(2T_G+1)\theta\le\pi/2$ (since $\arcsin x\le\pi x/2$), and $\sin y\ge 2y/\pi$ and $\theta\ge\sqrt\nu$ give success at least $(4/\pi^2)(2T_G+1)^2\nu$. Averaging over the oracle, the success probability is at least $(4/\pi^2)(2T_G+1)^2\big(\mathbb E[\nu]-\mathbb E[\nu\,\mathbf 1_{\nu>\nu_{\max}}]\big)$, and $\mathbb E[\nu\,\mathbf 1_{\nu>\nu_{\max}}]\le\mathbb E[\nu^2]/\nu_{\max}$ gives the bound.
\end{proof}

\noindent For the grinding term, the best attack's success probability in the example of \cref{cor:loss} lies between $2^{-81.1}$ and $2^{-61.5}$; only the factor $3\pi^2c_{\mathsf{ev}}^2$ remains open. The attack is a query-model statement.

\subsection{Deployed Parameters: Algorand State Proofs}\label{S-cert-algorand}
Algorand state proofs~\cite{algorand_stateproofs}, to our knowledge the only deployed compact certificates, use consensus parameters~\cite{goalgorand_consensus} with a strength target of $256$, documented as ``$k+q$ (for pre-quantum security) or $k+2q$ (for post-quantum security)'', and a proven weight of $30\%$ of the top $1024$ accounts. Coins come from one SHAKE256 stream with rejection sampling, without the reduction term of \cref{eq:eps}. Read as $\lambda+2\log_2 q$ with $\log_2 q=64$, this target matches our rule up to the constant $\kappa_Q$. If its hash functions are modeled as random oracles (an assumption for its SumHash-based commitments), the grinding term of \cref{thm:qrom} certifies about $256-128-\log_2 12\approx 124$ bits against quantum provers making $2^{64}$ queries. We found no vulnerability: our theorem justifies its heuristic $k+2q$ choice for the grinding term, with a gap of $3.6$ bits. Systems that follow the classical $k+q$ rule of~\cite{micali2020compact}, as the prototype of~\cite{NIROPoK2025} did, fall under \cref{cor:loss}.

\subsection{The Exact Parameter Rule}\label{S-cert-rule}
The rule~(\ref{eq:rule}) controls the \emph{grinding} term only. For an arbitrary total query budget $T$ and a per-epoch grinding budget $\eta_g$, the exact sufficient rule is
\[
\kr(\Ws)\ge\Big\lceil\frac{\log_2(12T^2/\eta_g)}{-\log_2(\Wp/\Ws+\Ws/2^{128})}\Big\rceil ,
\]
which \cref{eq:rule} implies for $\eta_g=2^{-\lambda}$ in the regime $T\le q(1+2^{-20})$ and for $\Wp\ge W/4$, as deployed, because $12(1+2^{-20})^2(1+2^{-51})<16$ (\cref{S-cert-remarks}). $T$ counts all queries to the hash oracle on the L1 path ($256$- or $384$-bit): coins, certificate trees, $C_{\cert}$, checkpoint headers, withdrawal trees and code hashes. Scores and order keys use domain-separated inputs, which we model as independent oracles.

\subsection{Output Width and the Commitment Term}\label{S-cert-budget}
The other terms of \cref{thm:qrom} need separate budgets. In \cref{thm:qrom}, $h$ is the \emph{smallest} output width of the oracles and trees involved, because $\mathsf{Col}$ and the attach events range over all of them; widening only $\Rsig$ does not change $h$ while $\Ratt$ stays at 256 bits. The commitment term has the form of the generic collision bound~\cite{bht1998}. It is an upper bound, not an attack. Generic quantum collision search is no cheaper than classical $2^{128}$ search in realistic cost models~\cite{bernstein2009collisions,nistcfp2016}, but that does not make the query-model term $2^{-128}$. \Cref{tab:budget} combines all terms of \cref{thm:e2e} for two instantiations. The deployed vault uses Keccak-256 for every tree and for $C_{\cert}$. A uniformly widened variant uses 384-bit nodes everywhere (Keccak-384 off chain, two domain-separated Keccak-256 outputs per node on L1; not implemented). Our \texttt{pqcert} implementation supports Keccak-384 certificate trees, which enlarge Dilithium2 (r3) certificates by $4$--$7\%$ at unchanged verification time at $N=2^{12}$--$2^{16}$ (\cref{tab:h384}).
\section{Leader Election: Proofs and Details}\label{S-consensus}

This section contains the full proofs of the results of \cref{sec:consensus} and material moved from that section.

\subsection{Exponential Race and Steering}\label{S-cons-race}

\begin{restate}{Lemma~\ref{lem:race} (Exponential race)}
Let $U_1,\dots,U_n$ be independent and uniform on $(0,1]$, and let $s_i=-\log_2 U_i/w_i$. Then the $s_i$ are independent with $s_i\sim\mathrm{Exp}(w_i\ln 2)$. Consequently, (i)~$\Pr[i\text{ has the lowest score}]=w_i/W$, and (ii)~splitting stake $w$ into identities $w',w''$ with $w'+w''=w$ leaves the coalition's winning probability unchanged (Sybil invariance). With $h$-bit values $u_i=\lceil U_i2^h\rceil/2^h$ and ties broken by the smaller identity, the winner differs from the continuous winner with probability at most $N(N-1)2^{-h}$.
\end{restate}
\begin{proof}
If $U$ is uniform on $(0,1]$, then $-\ln U\sim\mathrm{Exp}(1)$ and hence $-\log_2 U/w_i\sim\mathrm{Exp}(w_i\ln 2)$. For independent exponentials with rates $r_i$, the minimum is attained at $i$ with probability $r_i/\sum_l r_l$, which gives (i); the minimum of $\mathrm{Exp}(r')$ and $\mathrm{Exp}(r'')$ is $\mathrm{Exp}(r'+r'')$, which gives (ii). For the discretization, if the continuous winner $i$ loses to $j$ after rounding, then $U_j<U_i^{w_j/w_i}\le u_i^{w_j/w_i}\le u_j<U_j+2^{-h}$. So $U_j$ lies in an interval of length $2^{-h}$ below $U_i^{w_j/w_i}$, which has probability at most $2^{-h}$ for each of the $N(N-1)$ ordered pairs.
\end{proof}

\noindent\Cref{lem:race} is the weighted-sampling identity of~\cite{efraimidis2006} applied to the score of~\cite{QRL}. It is a statement about \emph{unselected} uniforms. If a party picks one of several inputs after seeing the oracle's outputs, the selected output is not uniform: keeping the larger of two independent uniforms gives a value with CDF $u^2$, although both inputs were distinct.

\begin{restate}{Proposition~\ref{prop:steer} ($m$-way steering)}
Suppose that, before slot $T$, the adversary can choose $\cS_T$ among $m$ candidates; that under each candidate its own best score is a fresh $\mathrm{Exp}(\alpha)$ variable (rates in units of $\ln 2$), independent across candidates; and that the honest best score is $\mathrm{Exp}(1-\alpha)$ and unobserved when it chooses. Picking the candidate that minimizes its own best score is optimal and leads slot $T$ with probability $p_m(\alpha)=m\alpha/(1+(m-1)\alpha)$.
\end{restate}
\begin{proof}
For a fixed candidate with adversarial best score $X=x$, the adversary wins with probability $p=e^{-(1-\alpha)x}$, which decreases in $x$; minimizing $x$ is therefore optimal given its information. Writing $b=\alpha/(1-\alpha)$, we get $\Pr[p\le v]=\Pr[X\ge-\ln v/(1-\alpha)]=v^{b}$, that is, $p\sim\mathrm{Beta}(b,1)$. The best of $m$ independent candidates yields $\max_l p_l$, whose CDF is $v^{mb}$ and whose mean is $mb/(mb+1)=p_m(\alpha)$.
\end{proof}

\begin{restate}{Corollary~\ref{cor:qrl} (Steering under the runner-up rule)}
Consider proposer-dependent seeds in the race without threshold of~\cite{QRL}, and a single-identity adversary that uses only its decision at slot $T-1$. (a)~If it can observe the honest runner-up's reveal before its publish-or-skip decision becomes irrevocable, for example as a rushing adversary when honest proposals are broadcast at the start of the slot, it captures slot $T$ with probability $\alpha+\alpha^2(1-\alpha)/(1+\alpha)$. (b)~If that reveal becomes known only afterwards, it captures slot $T$ with probability $\alpha\big(\alpha+(1-\alpha)\alpha^{1/(1-\alpha)}\big)+(1-\alpha)\alpha$. Both exceed $\alpha$. Splitting into more identities raises the probability that the adversary holds the top $L\ge 2$ ranks, and hence its option count; Sybil invariance therefore fails.
\end{restate}
\begin{proof}
The adversary leads slot $T-1$ with probability $\alpha$ (\cref{lem:race}); otherwise the seed of slot $T$ is fresh and its capture probability is $\alpha$. In case~(a), it knows both candidate seeds, its own and the runner-up's, and \cref{prop:steer} with $m=2$ gives $p_2(\alpha)=2\alpha/(1+\alpha)$; the total is $\alpha p_2(\alpha)+(1-\alpha)\alpha$. In case~(b), skipping yields an unknown fresh seed with capture probability $\alpha$, while publishing yields probability $p\sim\mathrm{Beta}(b,1)$ as in \cref{prop:steer}, known to the adversary; the best choice achieves $\mathbb E[\max(p,\alpha)]=\alpha+\int_\alpha^1(1-v^b)\,dv=\alpha+(1-\alpha)\alpha^{1/(1-\alpha)}$. Since $\mathbb E[\max(p,\alpha)]>\mathbb E[p]=\alpha$, both totals exceed $\alpha$. If the adversary holds the top $L$ ranks of slot $T-1$, it can choose among up to $L+1$ seeds, and $p_{L+1}(\alpha)$ increases with $L$. Splitting a fixed stake leaves its probability of holding rank~1 unchanged (\cref{lem:race}) but raises its probability of holding ranks~1 and~2 together.
\end{proof}

\subsection{Ideal Model}\label{S-cons-ideal}

\begin{restate}{Theorem~\ref{thm:schedule} (Leader election in the ideal model)}
In any $m$-ideal experiment with continuous scores, the following hold. (a)~Given the reveal set, the proposer ranking of each slot is a fixed public function of the table. No adversarial action within the epoch (skipping, withholding proposals, choosing among its identities) changes any ranking, except by removing the adversary's own identities. If $m=1$, slot leaders are independent across slots and distributed as in \cref{lem:race}. (b)~The adversary chooses the table among at most $m\le 2^c$ candidates by withholding a subset of its late reveals. A slot is not guaranteed with probability at most $p_B$ under each candidate, and the adversary can propose only in such slots. It proposes in a fixed target slot with probability at most $1-(1-p_B)^m$, and its expected fraction of the $k$ slots of the epoch is at most $p_B+\sqrt{c\ln 2/(2k)}$. (c)~The probability that none of $n$ specified slots is guaranteed is at most $\min\{1,mp_B^{n}\}$. For a window of $n$ consecutive slots whose start the adversary may choose after seeing the schedule, it is at most $\min\{1,(k-n+1)mp_B^{n}\}$.
\end{restate}
\begin{proof}
For part (a), the ranking of slot $j$ sorts row $j$ of the table, which no action within the epoch changes. Removing its own identities cannot help the adversary. For $m=1$, all scores are independent and uniform, and \cref{lem:race} gives the leader distribution and independence across slots.

We next bound one slot under one candidate table. Given the first $r-1$ ranks, rank $r$ is adversarial with probability $(\alpha-a)/(1-a-g)$, where $a$ and $g$ are the adversarial and honest stake already ranked. This is at most $\alpha/(1-g)$, since $(\alpha-a)(1-g)\le\alpha(1-a-g)$ is equivalent to $a(\alpha+g-1)\le0$. For $r\le B+1$, at most $B$ honest identities have been ranked, so $g\le Bw_{\max}$ and the probability is at most $\alpha_B$. Hence the first $B+1$ ranks are all honest with probability at least $(1-\alpha_B)^{B+1}$, and a slot is not guaranteed with probability at most $p_B$. In a guaranteed slot, at most $B$ of the first $B+1\le L+1$ ranks are suppressed, and the best-ranked surviving one is honest and proposes.

For part (b), these per-candidate bounds hold independently across candidates, so the adversary proposes in the target slot under at least one candidate with probability at most $1-(1-p_B)^m$; no selection rule does better. For the share, let $\bar X_l=\frac1k\sum_j X_{l,j}$, where $X_{l,j}$ indicates that slot $j$ is not guaranteed under candidate $l$. Under each candidate, the $X_{l,j}$ are independent with mean at most $p_B$. By Hoeffding's lemma, each $\bar X_l$ minus its mean is sub-Gaussian with variance proxy $1/(4k)$. The maximal inequality for sub-Gaussian variables, which requires no independence between candidates, gives $\mathbb{E}[\max_{l\le m}\bar X_l]\le p_B+\sqrt{\ln m/(2k)}$, and $\ln m\le c\ln 2$.

For part (c), the event that none of the $n$ slots is guaranteed under the selected table is contained in the union of the same events under the $m$ candidates. Under each candidate, slots are independent, and each fails with probability at most $p_B$. For an adversarially placed window, we take a further union bound over the $k-n+1$ possible starts.
\end{proof}

\begin{restate}{Lemma~\ref{lem:incl} (Reveals are included)}
Suppose that an honest proposer includes every reveal it received $\Delta_{\mathcal S}$ before its slot, that messages are delivered within $\Delta_{\mathcal S}$ and a slot lasts at least $(L+2)\Delta_{\mathcal S}$, and that the finality gadget finalizes each block of the reveal window before the next slot starts. If a phase or sub-window of $\ell$ slots has a guaranteed slot after its first slot, every honest reveal of that phase or sub-window is included. Under each candidate table of the epoch that contains the window, this fails with probability at most $p_B^{\ell-1}$.
\end{restate}
\begin{proof}
An honest validator sends its reveal when its phase or sub-window starts, so every honest proposer of the second and later slots of the window has received it in time, and the proposer of a guaranteed slot is honest. The positions of the windows are fixed by the order, and under each candidate table the slots are independent, each not guaranteed with probability at most $p_B$ by \cref{thm:schedule}.
\end{proof}

\begin{restate}{Proposition~\ref{prop:late} (Late reveals are rare)}
Consider consecutive $m$-ideal epochs, each with respect to its own $c$, starting from an epoch with $c=0$. Let $\mu=(1-2\alpha)/(1-3\alpha)$, and let $\mathsf{OK}_e$ be the event that $\mathsf{Fail}$ occurs in none of the epochs up to $e$. Then, for every epoch $e$ and every $x\ge1$,
\begin{gather*}
\Pr[c_e\ge x\wedge\mathsf{OK}_e]\le\mu\alpha^x,\qquad\mathbb E[2^{c_e}\mathbf 1_{\mathsf{OK}_e}]\le\mu,\\
\Pr[\mathsf{OK}_{e-1}\wedge\neg\mathsf{OK}_e]\le\delta_{\mathsf{LE}},
\end{gather*}
where $\delta_{\mathsf{LE}}=\mu(\alpha^{L_t}+L_tp_B^{\ell_s-1}+p_B^{\ell_0-1})$. The adversary proposes in a fixed target slot of epoch $e$ with probability at most $\pi_{\mathsf{cap}}+e\delta_{\mathsf{LE}}$, where $\pi_{\mathsf{cap}}=p_B+\sum_{x\ge1}\min\{1,\mu\alpha^x\}\big((1-p_B)^{2^{x-1}}-(1-p_B)^{2^x}\big)$.
\end{restate}
\begin{proof}
Write $m_e=2^{c_e}$. The registrations for epoch $e$ and, by mild adaptivity, the corrupt set acting during its reveal window, which lies in epoch $e-1$ and was therefore requested by epoch $e-4$, are fixed before the candidate tables of epoch $e-1$ are sampled at that epoch's time $\tau^\ast$, in epoch $e-2$. Under each candidate table of epoch $e-1$, the order of epoch $e$ is a stake-weighted sample without replacement. As in the proof of \cref{thm:schedule} with $g=0$, its first $x$ ranks are all adversarial with probability at most $\alpha^x$. In particular, the tail contains no honest identity with probability at most $\alpha^{L_t}$, and by \cref{lem:incl} some of the $L_t$ sub-windows, which lie in epoch $e-1$ at positions fixed by the order, has no guaranteed slot after its first slot with probability at most $L_tp_B^{\ell_s-1}$; for the common phase, the bound is $p_B^{\ell_0-1}$. On $\mathsf{OK}_{e-1}$, the adversary selects one of $m_{e-1}$ candidates, and the candidate tables are independent of $m_{e-1}$. Summing over candidates gives $\Pr[c_e\ge x\wedge\mathsf{OK}_e]\le\mathbb E[m_{e-1}\mathbf 1_{\mathsf{OK}_{e-1}}]\alpha^x$ and $\Pr[\mathsf{OK}_{e-1}\wedge\neg\mathsf{OK}_e]\le\mathbb E[m_{e-1}\mathbf 1_{\mathsf{OK}_{e-1}}](\alpha^{L_t}+L_tp_B^{\ell_s-1}+p_B^{\ell_0-1})$. On $\mathsf{OK}_e$, no honest reveal is missing, so suspensions remove only adversarial identities and do not raise $\alpha$. Since $2^{c}=1+\sum_{x=1}^{c}2^{x-1}$,
\begin{align*}
\mathbb E[2^{c_e}\mathbf 1_{\mathsf{OK}_e}]&\le1+\sum_{x\ge1}2^{x-1}\Pr[c_e\ge x\wedge\mathsf{OK}_e]\\
&\le 1+\mathbb E[m_{e-1}\mathbf 1_{\mathsf{OK}_{e-1}}]\frac{\alpha}{1-2\alpha},
\end{align*}
and induction from $m=1$ gives the bound $\mu$, the fixed point of $y\mapsto 1+y\alpha/(1-2\alpha)$. The capture bound averages \cref{thm:schedule}(b) over $c_e$ on $\mathsf{OK}_e$, and adds $\Pr[\neg\mathsf{OK}_e]\le e\delta_{\mathsf{LE}}$.
\end{proof}

\subsection{From the Ideal Model to the Protocol}\label{S-cons-real}

\begin{restate}{Lemma~\ref{lem:bbbv} (Reprogramming hidden inputs)}
Let a quantum process make $q$ queries to an oracle $O$, and let its run with $O$ be determined by randomness independent of a random variable $Z$, so that, with that randomness fixed, its state before each query does not depend on $Z$. Let $S_Z$ be a set of inputs, determined by $Z$, with $\Pr_Z[x\in S_Z]\le p$ for every input $x$, and let $O'$ agree with $O$ outside $S_Z$. Then the final states of the process with $O$ and with $O'$ have expected trace distance at most $2q\sqrt{p}$.
\end{restate}
\begin{proof}
Purify the process, including all parties and their classical interaction, and defer all measurements; queries made in parallel are counted separately. Fix all randomness except $Z$. Let $\lvert\phi_i\rangle$ be the state before the $i$-th query when the oracle is $O$, and let $\Pi$ project the query register onto $S_Z$. Replacing the oracle by $O'$ one query at a time, and using that later operations preserve norms, gives $\|\lvert\psi^{O}\rangle-\lvert\psi^{O'}\rangle\|\le\sum_{i=1}^{q}\|(U_O-U_{O'})\lvert\phi_i\rangle\|\le 2\sum_{i}\|\Pi\lvert\phi_i\rangle\|$, because $U_O$ and $U_{O'}$ act identically on inputs outside $S_Z$. Since $\lvert\phi_i\rangle$ does not depend on $Z$, $\mathbb E_Z\|\Pi\lvert\phi_i\rangle\|\le(\mathbb E_Z\|\Pi\lvert\phi_i\rangle\|^2)^{1/2}=(\sum_x\|\Pi_x\lvert\phi_i\rangle\|^2\Pr[x\in S_Z])^{1/2}\le\sqrt p$. The trace distance of pure states is at most their Euclidean distance.
\end{proof}

\begin{restate}{Theorem~\ref{thm:fresh} (Ideal to real, QROM)}
Model the commitment and seed functions of QPoS-S, cSHAKE256 with distinct tags, as a quantum random oracle independent of the score oracle $\RO$ and of ML-DSA's internal hashing. Let commitments, secrets and seeds have $n_c$, $n_r$ and $n_s$ bits, and let identities and epochs be fixed-length fields of $n_p$ bits in total. Assume mildly adaptive corruption (\cref{sec:model}), the conditions of \cref{lem:incl}, that $\mathsf{Fail}$ does not occur, and that $c\le\bar c$; on $\neg\mathsf{Fail}$, $\bar c=L_t-1$ suffices. Let $m=2^{\bar c}$. Let $t$ count all oracle queries made by all parties before $\tau^\ast$. Then there is an $m$-ideal experiment such that, for every $[0,1]$-valued function $F$ of the epoch's leaders and the adversary's final output,
\begin{multline*}
\big|\mathbb E_{\mathsf{real}}[F]-\mathbb E_{\mathsf{ideal}}[F]\big|\le\epsilon_{\mathsf{lead}}\\=\frac{2^{n_p+2n_r}}{2^{n_c+1}}+\frac{2t}{2^{n_r/2}}+\frac{2t\sqrt m}{2^{n_s/2}}+\frac{m^2}{2^{n_s+1}}.
\end{multline*}
Queries made after $\tau^\ast$ do not appear in the bound.
\end{restate}
\begin{proof}
Write $\RO_c$ for the commitment and seed oracle. Index the possible reveal sets by bitmasks $b\in\{0,1\}^{\bar c}$ over the late adversarial identities in reveal order, and let $X_b$ denote the seed input of reveal set $b$.

\emph{Step 0 (binding).} Modify the protocol so that a reveal is accepted only if it is the lexicographically smallest opening of its commitment. The modified and real experiments are identical unless, for some identity-and-epoch prefix $p$, the map $r\mapsto\RO_c(\textsc{commit}\,\|\,p\,\|\,r)$ is not injective on $\{0,1\}^{n_r}$. This event depends only on the uniform oracle and has probability at most $2^{n_p}\cdot2^{2n_r}/2^{n_c+1}$. In the modified protocol, each adversarial identity can only reveal its unique admissible opening or withhold it. The admissibility check for an identity $\mathit{ID}_j\ne\mathit{ID}_{i^\ast}$ reads $\RO_c$ only on commitment inputs with that prefix, which lie outside the set $S$ of Step~1; in Steps~1 and~2 these values are part of the fixed randomness. The reveal list, and hence each $X_b$, is thus a function of $b$, of the commitments and honest secrets, and of $\RO_c$ on commitment inputs. Registrations, which also fix the stakes, close before the reveal window opens, so all commitments and weights are fixed before $\tau^\ast$. Every reveal before the sub-window of $i^\ast$ is final at $\tau^\ast$, so the adversary's only remaining choice is which late identities reveal. The reveal order is a function of the previous seed, which is public before $\tau^\ast$. By mild adaptivity, the corrupt set during the reveal window was requested before that seed was known, so $i^\ast$ is honest until after $\tau^\ast$ and is part of the fixed randomness as well.

\emph{Step 1 (hiding the last honest secret).} Let $r=r_{i^\ast}$. Let $S$ be the set of commitment inputs with fields $(\mathit{ID}_{i^\ast},e,r)$, together with the seed inputs whose entry for $\mathit{ID}_{i^\ast}$ equals $r$. Fixed-length fields and the strictly ordered encoding of the reveal list give $\Pr_r[x\in S]\le 2^{-n_r}$ for every input $x$. Sample a function $H$ on commitment and seed inputs, an independent function $W$ on seed inputs, and a uniform value $a$. Let $G$ map the commitment input of $i^\ast$ to $a$, agree with $W$ on the seed inputs in $S$, and agree with $H$ elsewhere. For every fixed $r$, $G$ is a uniform oracle, so the experiment of Step~0, in which $\RO_c=G$ throughout and $a_{i^\ast}=a$, is distributed as before.

Let $\mathsf{Hyb}_A$ answer commitment and seed queries before $\tau^\ast$ with $H$ and later queries with $G$. Validator $i^\ast$ uses $r$ before $\tau^\ast$ only through $a_{i^\ast}=a$, so we keep $r$ outside the process. Given $(H,W,a)$ and the other randomness, the run before $\tau^\ast$ in $\mathsf{Hyb}_A$ therefore does not depend on $r$. \Cref{lem:bbbv} with $Z=r$ bounds the trace distance of the states at $\tau^\ast$ by $2t\,2^{-n_r/2}$. The execution after $\tau^\ast$ is the same map in both experiments, so the bound carries over to the final states, however many queries follow.

\emph{Step 2 (fresh scores).} Call $b$ \emph{realizable} if its bits name existing late identities that all have an admissible opening. In $\mathsf{Hyb}_A$, the points $X_b$ of realizable $b$ are distinct seed inputs in $S$; for other $b$, the value $Z_b$ below is an unused dummy. They are determined by $r$ and by the run before $\tau^\ast$, which uses $H$ and $a$ but not $W$. Their values $W(X_b)$ are therefore independent uniforms, independent of $(H,a,r)$ and of that run. Hence $\mathsf{Hyb}_A$ is unchanged in distribution if we sample independent uniform $Z_b$ in advance and set $W(X_b)=Z_b$, so that the seed of reveal set $b$ is $\sigma_b=Z_b$. Let $S'$ be the set of score and next-epoch order inputs whose seed field equals some $Z_b$. $S'$ is a function of $Z=(Z_b)_b$ alone, and $\Pr_Z[x\in S']\le m2^{-n_s}$ for every input $x$. Let $\mathsf{Hyb}_B$ be $\mathsf{Hyb}_A$ with fresh uniform values $V'$ on $S'$ for queries after $\tau^\ast$. Because $S'$ depends only on $Z$, which is independent of the score oracle $\RO$, replacing the values of $\RO$ on $S'$ by $V'$ both before and after $\tau^\ast$ leaves $\mathsf{Hyb}_A$ unchanged in distribution. The result differs from $\mathsf{Hyb}_B$ only in the answers on $S'$ before $\tau^\ast$. The run of $\mathsf{Hyb}_B$ before $\tau^\ast$ does not depend on $Z$, so \cref{lem:bbbv} bounds the distance by $2t\sqrt m\,2^{-n_s/2}$.

\emph{Step 3 (the ideal experiment).} In $\mathsf{Hyb}_B$, the seeds $Z_b$ are distinct except with probability $m^2/2^{n_s+1}$. The score table and next-epoch order keys of reveal set $b$ are then $V'$ on the inputs of $S'$ with seed $Z_b$, a family of independent uniform tables, independent of everything before $\tau^\ast$ and accessible to the adversary afterwards. $\mathsf{Hyb}_B$ is therefore $m$-ideal, and the triangle inequality gives the bound, because trace distance bounds the difference of expectations of $[0,1]$-valued functions.
\end{proof}

\begin{restate}{Corollary~\ref{cor:real} (Real protocol)}
Let $m=2^{L_t-1}$ and $\epsilon_{\mathsf{LE}}=\epsilon_{\mathsf{lead}}+\epsilon_{\mathsf{disc}}$, where $\epsilon_{\mathsf{disc}}=(k+1)mN(N-1)2^{-h}$ applies \cref{lem:race} to the $k$ slot races and the order race of each candidate, and $N$ counts all registered identities. Against quantum adversaries with mildly adaptive corruption, under the conditions of \cref{lem:incl}, every bound of \cref{thm:schedule,prop:late,lem:incl} holds in the real protocol, for the first $E$ epochs, up to an additive $2E(\epsilon_{\mathsf{LE}}+\delta_{\mathsf{LE}})$. In particular, the adversary proposes in a fixed target slot with probability at most $\pi_{\mathsf{cap}}+2E(\epsilon_{\mathsf{LE}}+\delta_{\mathsf{LE}})$, and an epoch has no guaranteed slot among $n$ consecutive slots placed by the adversary with probability at most $(k-n+1)\mu p_B^{n}+2E(\epsilon_{\mathsf{LE}}+\delta_{\mathsf{LE}})$.
\end{restate}
\begin{proof}
Replace the epochs by $m$-ideal ones one at a time. Each step applies \cref{thm:fresh}, whose function $F$ may depend on the entire later run, and \cref{lem:race}. Its premises fail only on $\neg\mathsf{OK}_e$, which has probability at most $\delta_{\mathsf{LE}}$ given $\mathsf{OK}_{e-1}$ by \cref{prop:late}, applied to the already idealized earlier epochs. Each step therefore costs at most $\epsilon_{\mathsf{LE}}+\delta_{\mathsf{LE}}$, and $\Pr[\neg\mathsf{OK}_E]\le E\delta_{\mathsf{LE}}$ adds the rest. The bounds then hold for consecutive $m$-ideal epochs, and the window bound averages \cref{thm:schedule}(c) over $m_e$ with $\mathbb E[m_e\mathbf 1_{\mathsf{OK}_e}]\le\mu$.
\end{proof}

\subsection{Additional Remarks}\label{S-cons-remarks}

\paragraph{Proposer-dependent seeds} In the hash-chain protocol of~\cite{QRL} and in the sidechain design of~\cite{myeprint}, each validator's score is computed from its own secret hash-chain value, so scores are private until their owners propose, and the seed is updated as $\cS_{j+1}=\Hash(\cS_j\,\|\,\text{reveal of the \emph{actual} proposer of }j)$, and a skipped leader is replaced by the runner-up. A coalition can then influence $\cS_{j+1}$. If it leads slot $j$, it can publish its own block or skip and let the runner-up publish. If it holds the top $L$ positions, it can make any of $L+1$ proposers publish. Because the adversary knows its own future chain values, it can compute its \emph{own} best score in the next slot for each candidate seed. Honest scores remain unpredictable.

\paragraph{Target slots} Target slots matter because roles are attached to slot indices, for example the $N_{\mathsf{cp}}$-th block that fixes a checkpoint, or the slot whose block includes or censors a dispute-relevant transaction.

\paragraph{Fallback proposers} The fallback proposers matter only under denial of service. Without them ($L=0$), suppressing the leader of every slot leaves the sidechain without honest blocks; with $L\ge B$, a slot loses its honest proposer only if it is not guaranteed. A non-guaranteed slot may still have an honest proposer, but an adversarial proposer of better rank can take the slot and censor it.

\paragraph{Reveal order} The order of reveals keeps $c$ small. Because rank~$1$ of the order is drawn in proportion to stake, Sybil splitting does not help the adversary obtain the last positions, and the adversary can influence the order only through its choice among the previous epoch's candidates.

\paragraph{Parameters of \cref{thm:fresh}} QPoS-S uses $n_r=n_s=512$, $n_c=1536$ and $64$-bit identity and epoch fields ($n_p=128$). For $t=2^{64}$ and $\bar c=L_t-1=24$, the four terms are $2^{-385}$, $2^{-191}$, $2^{-179}$ and $2^{-465}$, so $\epsilon_{\mathsf{lead}}\le 2^{-178.9}$. The value $\bar c=24$ is a worst case; by \cref{prop:late}, $c$ is usually $0$ or $1$. Scores and order keys use $\Hash$ with $h=256$. The wide secrets and seeds keep the two hiding terms small. The wide commitments make binding statistical in the QROM, so the proof needs no argument about computationally binding commitments against quantum adversaries. In practice, SHAKE-based binding is capped at the function's 256-bit security claim.

\subsection{Consequences in Detail}\label{S-cons-consequences}

Part~(c) does not multiply per-slot probabilities, which is invalid after seed selection. Since the schedule is public, the adversary can time an action, such as the rebinding of \cref{prop:binding}, to precede a run of non-guaranteed slots. By \cref{cor:real}, an honest transaction is still included within $n$ slots of any start chosen after the schedule is public, except with probability $\zeta+2E(\epsilon_{\mathsf{LE}}+\delta_{\mathsf{LE}})$, whenever $n\ge\ln(k\mu/\zeta)/(-\ln p_B)$. This high-probability guarantee is the sidechain liveness that \cref{prop:binding} uses. Turning inclusion into finality additionally needs the liveness of the chosen finality gadget, and a window that straddles an epoch boundary adds the bounds of both epochs. For $\alpha=0.3$ and $k=2000$, $\zeta=2^{-40}$ needs $n=31$ slots without denial of service and $n=56$ with $B=1$. For $E=2^{20}$ epochs, $2E(\epsilon_{\mathsf{LE}}+\delta_{\mathsf{LE}})<2^{-19}$ at $\alpha=0.3$ and $<2^{-36}$ at $\alpha=0.2$, with or without $B=1$. Larger $L_t$ or $\ell_s$ lowers $\delta_{\mathsf{LE}}$ further at the cost of a longer reveal window. The window of $1{,}352$ slots occupies about two thirds of a $2{,}000$-slot epoch; reveals are small transactions carried by ordinary blocks, so this costs block space, not slots. A validator whose reveal is missing is suspended, but the adversary withholds only when the honest seed does not already suit it, so suspension is not a deterrent in our analysis. PQ-Vault attaches no exclusive role to any single slot: certificates can be aggregated by anyone, and checkpoints are certified by attestors rather than by a leader.

\subsection{Long-Range Attacks}\label{S-cons-longrange}

A posterior corruption of retired keys allows the adversary to build an alternative history~\cite{gazi2018stakebleeding,deirmentzoglou2019survey}. Signature unforgeability, on which the analysis of~\cite{myeprint} relied, does not prevent this. In PQ-Vault, $\Vault$ accepts only checkpoints for the epoch after its latest finalized checkpoint, verified against the attestor commitment $\Ratt^{e}$ fixed by that checkpoint. Bootstrapping nodes use the vault log as their weak-subjectivity source. Under \cref{asm:l1}, an alternative history that forks before the latest anchored checkpoint is rejected outright. A fork after it requires a certificate for the current attestor set, whose members are still bonded by our unbonding condition; forging one is bounded by \cref{thm:qrom,thm:opf}.
\section{L1 Acceptance: Dispute Protocol and Proofs}\label{S-opf}

This section supplements \cref{sec:opf}. It gives the full specification of the optimistic dispute protocol, the detailed timing discussion, and the full proofs of \cref{prop:snark,lem:local,thm:opf,thm:e2e}.

\subsection{Record Commitment}
The relayer commits to the records in a Merkle tree $C_{\cert}$ of depth $\lceil\log_2\kr\rceil$, with $r_j$ at index $j$. Its leaf hashes a fixed-length encoding of $r_j$ that includes the hashes of $\sigma_{i_j}$ and of both paths.

\subsection{State Machine}\label{S-opf-sm}
A proposal is created by $\mathsf{propose}$ (\cref{sec:vault}) at time $\tau_0$ and is \emph{pending} until it is finalized, rejected or stale. The relayer publishes the records as blob data. Safety never relies on blobs: a watcher that cannot fetch the records or match them to $C_{\cert}$ uses the open-all challenge. Its \emph{challenge deadline} is $\tau_0+T_{\mathsf{ch}}$, extended to $\max\{\tau_0+T_{\mathsf{ch}},\tau_{\mathsf{post}}+\Delta_{\mathsf{post}}\}$ if an open-all posting completed at $\tau_{\mathsf{post}}$. Times are block timestamps at inclusion. A challenge or open-all request is accepted only at timestamps strictly before its deadline, and a response or posting at timestamps up to and including its deadline; $\mathsf{finalize}$ is accepted from the challenge deadline on, and $\mathsf{timeout}$ and $\mathsf{timeoutAll}$ only strictly after their deadlines. Hence no timestamp admits both an action and the transition that its deadline unlocks. The transitions are as follows. (a)~$\mathsf{challenge}(j)$, with bond $b_C$, before the challenge deadline and at most once per index, opens a coin challenge with response deadline $\tau+\Delta_{\mathsf{resp}}$. (b)~$\mathsf{respond}(j,r,\mathit{addr})$ by any party up to and including that deadline first requires that $r$ authenticate under $C_{\cert}$ at index $j$, and otherwise \emph{reverts}. If the committed data fail (i) or (ii), the proposal is \emph{rejected}. Next, the call reverts unless the code hash of $\mathit{addr}$ equals $\eta_{i_j}$. Finally, a failing signature rejects the proposal, and a valid one closes the challenge. (c)~$\mathsf{timeout}(j)$ after an unanswered response deadline rejects the proposal. (d)~$\mathsf{challengeAll}$, at most once per proposal and before the challenge deadline, requires all records to be posted as calldata within $\Delta_{\mathsf{open}}$. $\mathsf{postRecords}$ accepts records that authenticate under $C_{\cert}$ without evaluating $\mathsf{Local}$, and sets $\tau_{\mathsf{post}}$ once all $\kr$ are posted. If posting is incomplete at its deadline, $\mathsf{timeoutAll}$ rejects the proposal. (e)~$\mathsf{finalize}$ requires that the challenge deadline has been reached, that no coin challenge is open, and that no posting is pending. The proposal is then finalized if its epoch is $e^\ast+1$, and stale otherwise.
A proposal is thus rejected only on an authenticated record that fails $\mathsf{Local}$ or on an expired deadline; a malformed record, wrong path or wrong key address from a third party only reverts. On rejection, the challenger is credited $b_R$ and its bond; a correct answer credits $b_C$ to the relayer. Payments are pulled by their recipients, so a recipient that refuses transfers cannot block a transition. An open-all request carries no bond, so a withholding relayer cannot charge honest watchers; in exchange, anyone can force each proposal to post its records once, at a cost to the relayer of $3.0\times10^7$ gas (\cref{sec:gas}). Unavailable data cannot be attributed to either party, so one of them must bear this cost; we protect watchers.

\subsection{Timing}\label{S-opf-timing}
Let $\Delta_{\mathsf{chk}}$ bound the time an honest watcher needs to observe a proposal or a completed posting, fetch the records and evaluate $\mathsf{Local}$ on all of them, and let $\Delta_{\mathsf{prep}}$ bound the relayer's time to prepare a response or posting. All $\Delta$'s bound differences of inclusion timestamps. We require the strict inequalities
\[
T_{\mathsf{ch}}> \Delta_{\mathsf{chk}}+\Delta_{\mathsf{inc}},\qquad \Delta_{\mathsf{post}}> \Delta_{\mathsf{chk}}+\Delta_{\mathsf{inc}},
\]
which match the strict guards on challenges, and $\Delta_{\mathsf{open}},\Delta_{\mathsf{resp}}\ge\Delta_{\mathsf{prep}}+\Delta_{\mathsf{inc}}$, which match the inclusive guards on postings and responses; the relayer submits all posting transactions (seven in our measurements) concurrently. Every pending proposal is then \emph{eligible} for a final transition ($\mathsf{finalize}$, $\mathsf{timeout}$ or $\mathsf{timeoutAll}$) by $T_{\mathsf{fin}}=T_{\mathsf{ch}}+\Delta_{\mathsf{open}}+\Delta_{\mathsf{post}}+\Delta_{\mathsf{resp}}$ after its inclusion. A transition executes only when some party submits it. If an honest party submits it once it is eligible, it is included by $T_{\mathsf{fin}}+\Delta_{\mathsf{inc}}$. Safety never depends on these calls, since a proposal with an open challenge or pending posting cannot be finalized. A stale proposal has no effect on the vault's state from the moment another proposal for its epoch is finalized; only its bond refund needs a call. The worst case is an open-all request just before $\tau_0+T_{\mathsf{ch}}$, followed by a coin challenge just before the extended deadline. Our contract uses $T_{\mathsf{ch}}=7$ days, $\Delta_{\mathsf{open}}=\Delta_{\mathsf{resp}}=1$ day and $\Delta_{\mathsf{post}}=2$ days, in line with optimistic rollups~\cite{kalodner2018arbitrum}, so $T_{\mathsf{fin}}\le 11$ days. Each coin index can be challenged at most once, so at most $\kr$ responses are ever needed per proposal.

\subsection{Proofs}

\begin{restate}{Proposition~\ref{prop:snark}}
If $\Vault$ finalizes a checkpoint whenever a Groth16 proof $\Pi$ of ``all certificates verify'' is valid, then a QPT adversary can finalize an arbitrary checkpoint, and with it arbitrary withdrawals, with probability $1-\negl$.
\end{restate}
\begin{proof}
The common reference string contains $[\tau^i]_1$ and $[\tau]_2$ for a trapdoor $\tau$. Shor's algorithm computes $\tau$ from $[\tau]_1$. With $\tau$ (and the other trapdoor elements, computed the same way), the Groth16 simulator outputs a valid proof for any statement, true or false. The vault accepts it.
\end{proof}

\begin{restate}{Lemma~\ref{lem:local} (Local decomposition)}
Except for a collision of $\Hash$, $\mathsf{Local}(j,r_j)=1$ for all $j<\kr$ if and only if the distinct positions $P=\{i_j\}$, their leaves, and the multiproofs assembled from the single paths form a certificate that \cref{alg:cert} accepts, instantiated with the vault's leaf encodings.
\end{restate}
\begin{proof}
If two records name the same position, both of their leaves open there under $\Rsig$ and $\Ratt$, so they are equal unless the paths contain a collision. Hence $P$ is well defined with $|P|\le\kr$, every position covers its coin, and the single paths contain every sibling that the multiproofs need. The interval, coverage, signature and opening checks of \cref{alg:cert} are exactly (i)--(iii) taken over all $j$, and the range check on $\Ws$ was performed at proposal time. Conversely, an accepted certificate yields, for each coin, a covering leaf whose record passes $\mathsf{Local}$.
\end{proof}

\begin{restate}{Theorem~\ref{thm:opf} (Optimistic acceptance)}
Assume \cref{asm:l1,asm:pq,asm:slow,asm:watcher} and the timing conditions above. Then the following hold. (Safety) The probability that $\Vault$ finalizes a proposal whose $M_e$ meets the winning condition of \cref{def:game}, with the attestor list committed by the proposal's snapshot of $\Ratt$, is at most the bound of \cref{thm:qrom} plus $\delta_{\mathsf{col}}=(12t^3+2q_C)/2^{h_C}$, where $q_C=2\kr(\lceil\log_2\kr\rceil+2d+7)$ and $h_C$ is the smallest output width of $C_{\cert}$, $\Rsig$ and $\Ratt$ ($h_C=256$ in the deployed vault). Since $\mathsf{Col}$ ranges over the whole oracle, $\delta_{\mathsf{col}}$ also excludes two key contracts with the same code hash $\eta$, and, in \cref{thm:e2e}, two headers with the same $M_e$ or two withdrawal trees with the same root. (Liveness) An honest relayer's proposal is never rejected except with probability $\delta_{\mathsf{col}}$; it is eligible for finalization at $\tau_0+T_{\mathsf{ch}}$ if no dispute is raised, and by $\tau_0+T_{\mathsf{fin}}$ otherwise; if the relayer then submits $\mathsf{finalize}$, it is finalized within a further $\Delta_{\mathsf{inc}}$. (Cost) Each frivolous coin challenge costs its sender $b_C$, and an honest watcher spends at most one open-all transaction fee and one bond $b_C$ per proposal; if its challenge is the first on a failing record, the bond is returned with $b_R$.
\end{restate}
\begin{proof}
Two different records that authenticate at the same index of $C_{\cert}$, or two different leaves that open at the same position of $\Rsig$ or $\Ratt$, contain a collision; by \cref{lem:extract} this has probability at most $\delta_{\mathsf{col}}$, and we exclude it. The \emph{committed records} $(r_j)_{j<\kr}$ are then the unique records that authenticate under $C_{\cert}$.

\emph{Safety.} Suppose the vault finalizes a proposal whose $M_e$ meets the winning condition. By \cref{asm:watcher}, an honest watcher observes the proposal and, within $\Delta_{\mathsf{chk}}$, either evaluates $\mathsf{Local}$ on blob records that authenticate under $C_{\cert}$ or finds them missing or inconsistent. In the latter case it calls $\mathsf{challengeAll}$, which is included by $\tau_0+\Delta_{\mathsf{chk}}+\Delta_{\mathsf{inc}}<\tau_0+T_{\mathsf{ch}}$ by \cref{asm:l1}, and hence accepted. Finalization requires that any open-all posting completed, since otherwise $\mathsf{timeoutAll}$ would have rejected the proposal, and then the watcher evaluates $\mathsf{Local}$ on the posted records within $\Delta_{\mathsf{chk}}$ of $\tau_{\mathsf{post}}$. Either way, the watcher learns all committed records before the challenge deadline. If one of them, $r_{j^\ast}$, fails $\mathsf{Local}$, index $j^\ast$ is challenged strictly before the deadline, by the watcher or by another party, since the watcher's challenge is included within $\Delta_{\mathsf{chk}}+\Delta_{\mathsf{inc}}$ of the proposal or of $\tau_{\mathsf{post}}$, which is less than $T_{\mathsf{ch}}$ or $\Delta_{\mathsf{post}}$. The only response that does not revert carries $r_{j^\ast}$ and rejects the proposal, and an unanswered challenge rejects it at $\mathsf{timeout}$; finalization waits for this outcome. Hence every committed record passes $\mathsf{Local}$, and by \cref{lem:local} they form a certificate that \cref{alg:cert} accepts for $M_e$. An adversary that achieves this wins $\mathsf{CERT\text{-}FORGE}$, which \cref{thm:qrom} bounds. The vault does not authenticate challengers, and \cref{asm:slow} protects only the watcher's bond.

\emph{Liveness and cost.} An honest relayer publishes records that authenticate and pass $\mathsf{Local}$. Every response it gives to a coin challenge is accepted, and responses by other parties either revert or close the challenge. It posts all records within $\Delta_{\mathsf{open}}$ if an open-all is requested. Its responses and postings are included by their inclusive deadlines because $\Delta_{\mathsf{resp}},\Delta_{\mathsf{open}}\ge\Delta_{\mathsf{prep}}+\Delta_{\mathsf{inc}}$. No rejection condition can then occur, and the proposal is eligible for finalization, or becomes stale if another proposal for its epoch finalized first, once its deadlines have passed, which happens by $\tau_0+T_{\mathsf{fin}}$. The cost statements follow from the bond rules and the once-per-proposal limit on open-all.
\end{proof}

\begin{restate}{Theorem~\ref{thm:e2e} (Custody safety)}
Assume \cref{asm:l1,asm:pq,asm:slow,asm:watcher}, the timing conditions above, and, in every epoch $e$, corrupt attestor stake below $W^e/3$ with $\Wp^e=\lfloor W^e/3\rfloor$. Assume also the honest-signing rule of \cref{sec:design}, and let $\mathsf{Fin}$ be the event that two honest validators consider conflicting checkpoints of the same epoch final. Then over $E$ epochs, for every QPT adversary making $t$ oracle queries in total,
\begin{multline*}
\Pr[\text{G1 violated}]\le E\Big(12t^2\varepsilon^\ast+\frac{48t^3+2q_V}{2^h}+\delta_{\mathsf{col}}\Big)\\+2E\Adv^{\mathsf{uf}}_{N_{\mathsf{tot}}}+\Adv^{\mathsf{uf}}_{U}+\Pr[\mathsf{Fin}],
\end{multline*}
where $\varepsilon^\ast\le 2^{-\lambda}/(12t^2)$ under \cref{eq:rule} when $t\le q(1+2^{-20})$ (\cref{sec:rule}), $N_{\mathsf{tot}}$ is the number of honest attestor keys generated over the whole experiment, and $U$ is the number of user keys.
\end{restate}
\begin{proof}
We show by induction on $e$ that, outside the bad events below, the following invariant $I_e$ holds after the vault finalizes its epoch-$e$ checkpoint: (a)~the finalized $M_1,\dots,M_e$ are the checkpoints of the honest finalized sidechain; (b)~the vault's $(\Ratt,W,N)$ is the honest $(\Ratt^{e+1},W^{e+1},N^{e+1})$; and (c)~every payment made by $\mathsf{withdraw}$ so far pays a leaf of some finalized $\omega_{e'}$, each leaf at most once. $I_0$ holds by deployment: $\Ratt^{1}$ is fixed in the constructor, and nothing has been paid.

Assume $I_{e-1}$, and consider the first proposal for epoch $e$ that the vault finalizes. Every proposal for epoch $e$ is opened while $e^\ast=e-1$, so its snapshot of $(\Ratt,W,\Wp,N)$ is the honest one of epoch $e$ by $I_{e-1}$(b), and its $M_e$ is computed by the vault over $M_{e-1}$ and the proposed header. The attestors of epoch $e$ are still bonded at that time by our unbonding condition, so their corrupt stake is below $W^e/3$ and hence at most $\Wp^e$. If no honest attestor signed $M_e$, the winning condition of \cref{def:game} holds for this list. Consider the experiment truncated at this finalization, in which $\mathcal A$ makes at most $t$ queries. By \cref{thm:opf}, on the direct path by \cref{thm:qrom}, this happens with probability at most $12t^2\varepsilon^\ast+(48t^3+2q_V)/2^h+\delta_{\mathsf{col}}+2\Adv^{\mathsf{uf}}_{N_{\mathsf{tot}}}$, where all honest attestor keys generated in the experiment form the multi-user forgery game ($K=N_{\mathsf{tot}}$). Otherwise some honest attestor signed $M_e$, so by the honest-signing rule $M_e$ is final in its view, extends $M_{e-1}$ and satisfies (i)--(iv). Outside $\mathsf{Fin}$, $M_e$ is the honest epoch-$e$ checkpoint. Since the vault computed $M_e$ from the header it stores, its $\omega_e$, $\Ratt^{e+1}$, $W^{e+1}$ and $N^{e+1}$ are the honest ones, which gives $I_e$(a) and (b). For (c), $\mathsf{withdraw}$ pays only a leaf that opens under a finalized $\omega_{e'}$, and it marks the nullifier spent before transferring, so a leaf is paid at most once even across checkpoints and reentrant calls. A union bound over the $E$ epochs gives the first term.

Under $I_E$, every paid leaf is a leaf of an honest checkpoint. By rules~(ii) and~(iii), it stems from a withdrawal executed on the honest sidechain, which debited its amount from its owner's balance and carries the owner's ML-DSA signature on the full operation, including recipient, amount, nonce and vault. Producing such a signature without the owner is a forgery against one of the $U$ user keys. Deposits are credited once, and transfers and withdrawals conserve balances, so the total paid never exceeds the total credited deposits, which never exceed the escrow. The binding registry is not on the release path. Hence G1 can fail only through the stated events.
\end{proof}

\subsection{Remarks}
The event $\mathsf{Fin}$ is the safety of the sidechain's finality gadget, which we do not re-analyze; execution validity is not part of it, because it is enforced by the honest-signing rule.
\section{Evaluation: Additional Details}\label{S-eval}

This section complements \cref{sec:eval} of the main paper.

\subsection{Artifacts}\label{S-eval-artifacts}
(i)~\texttt{pqcert}, a Rust crate implementing \cref{alg:cert} with the QROM rule (\cref{eq:rule}), domain-separated Keccak-256 Merkle trees with compact multiproofs, and 128-bit coin reduction. It uses the \texttt{crystals-dilithium}~1.0.0 crate used by the prototype of~\cite{NIROPoK2025}, which implements round-3 Dilithium rather than FIPS~204, and \texttt{pqcrypto-falcon}. Round-3 Dilithium2 has the same key and signature sizes as ML-DSA-44, and Dilithium3/5 signatures are $16$ and $32$ bytes shorter than ML-DSA-65/87; the algorithms are not interoperable. Results produced with this crate are labeled Dilithium in the main paper. A FIPS~204 backend (RustCrypto \texttt{ml-dsa}~0.1.1) cross-checks the headline results. (ii)~A Python reference implementation (\texttt{pycert}) with the same logic over SHA3-256, using the FIPS~204 ML-DSA bindings of \texttt{pqcrypto}. (iii)~A multi-process network harness. (iv)~Monte Carlo simulators for QPoS-S and for grinding. (v)~The Solidity vault of \cref{sec:gas}, tested with Foundry against FIPS~204 signatures. Code, raw results, plotting scripts and reproduction scripts are available at \url{https://github.com/hassankhodaiemehr/PQ-Vault}~\cite{pqvault_artifact}; all experiments use fixed seeds.

\subsection{Machines}\label{S-eval-machines}
\emph{M1}: a cloud VM with 2 vCPUs (Intel Xeon, 2.8\,GHz) and 8\,GB RAM. \emph{M2}: a laptop VM with 2 vCPUs on an Intel Core i9-13900HX and 3\,GB RAM, used as a heterogeneous light client. \emph{M3}: 2 vCPUs (Intel Xeon, 2.1\,GHz), used for the FIPS~204 cross-check. All timings are single-threaded medians.

\subsection{Harness Notes}\label{S-eval-harness}
\noindent\textbf{Negative tests.} Every benchmark run includes negative tests: a tampered signature, a shifted interval, a dropped reveal, and a claimed signed weight above $W$. The verifier rejected all of them in every run. The checks $|P|\le\kr$ and ``no unused reveal'' were added after the size runs; unit tests and the FIPS~204 cross-check exercise them.

\noindent\textbf{Proven threshold in the size runs.} The size benchmarks used $\Wp=\lfloor 0.3333\,W\rfloor$, which yields the same $\kr$ as $\lfloor W/3\rfloor$ in all $40$ reported configurations; the contract and network experiments use $\lfloor W/3\rfloor$. The harness now uses $\lfloor W/3\rfloor$ too; this changes the coins, so a rerun gives slightly different reveal sets (e.g., $81$ instead of $94$ distinct reveals at $N=2^8$), and all reported sizes are from the $0.3333$ runs.

\subsection{Network Experiment Setup}\label{S-eval-network}
An aggregator process and two worker processes, each hosting about $N/2$ validators, communicate over real TCP sockets, with one connection per validator and real FIPS~204 ML-DSA-44 signatures. The aggregator verifies every signature, builds the certificate, and sends it to $8$ light clients, which verify it. WAN conditions are emulated per message. Validators are placed in five regions with one-way latency to the aggregator of $1$, $35$, $40$, $60$ and $95$\,ms (with weights $0.30$, $0.15$, $0.30$, $0.05$, $0.20$), with $10\%$ log-normal jitter, $50$\,Mbps validator uplinks, and a $1$\,Gbps aggregator link. $10\%$ of the validators are offline. Because all validators share two vCPUs, signing is charged as the measured single-core signing time of round-3 Dilithium2 in \texttt{pqcert} ($0.42$\,ms), while the signatures themselves are real FIPS~204 signatures; all aggregator and light-client work runs for real. The aggregator stops collecting at $80\%$ of the stake.

\subsection{Contract Implementation and Tests}\label{S-eval-contract}
Gas was measured with Foundry~\cite{foundry} (forge~1.5.1, solc~0.8.30, Osaka EVM rules). Each attestor registers its expanded ML-DSA-44 key once as an SSTORE2 contract, and its attestor leaf commits to the \texttt{EXTCODEHASH} of that contract, which the vault checks during a dispute. The test vectors are a real certificate over an attestor list of $N=2^{16}$ positions ($d=16$), bound in $M_e$ as $N^{e+1}$. Of these, $1024$ carry positive weight and the rest have weight zero, which $\mathsf{Local}$ never accepts because it requires $w_{i_j}>0$; the direct and local verifiers use the same $N$ and $d$. The vectors include $872$ FIPS~204 signatures produced with the ZKNOX reference signer (in deterministic mode, for reproducible vectors), $\kr=193$ coins and $130$ distinct reveals. Both verifiers accept these FIPS~204 signatures. Before signing $\Ratt^{e+1}$, honest validators check that each registered expanded key is the expansion of the committed $pk_i$, and the on-chain check therefore equals $\mldsa.\mathsf{Vf}(pk_i,\cdot)$. The $17$ tests per verifier also confirm the state machine: a record that fails $\mathsf{Local}$ rejects the proposal; a tampered record, a wrong path and a wrong key address supplied by a third party revert without rejecting; finalization waits for open challenges, and a late unanswered challenge rejects; an open-all request is either satisfied or rejects; a rejected epoch can be proposed again; a substituted $\omega_e$ is rejected; the record count must equal $\kr$ exactly; and a withdrawal is paid only once. The packed verifier of the ZKNOX experimental branch uses an external Keccak-f[1600] helper contract.

\subsection{Gas Accounting and Extrapolation}\label{S-eval-gas}
\Cref{tab:gas} reports transaction-level gas under EIP-7623~\cite{eip7623}, $21{,}000+\max(G_{\mathsf{exec}}+4T,\,10T)$, where $G_{\mathsf{exec}}$ is the execution gas measured by the test and $T=n_0+4n_1$ counts the zero and nonzero calldata bytes. Gas is measured inside Foundry tests, where storage written earlier in a test is warm. All operations fit in one transaction except direct verification, which exceeds the $2^{24}$-gas cap~\cite{eip7825}, and the open-all posting, which is split to respect the $128$\,KB mempool limit. The key-registration total assumes nonzero calldata bytes. The two verifiers are the NIST-conformant verifier of the ZKNOX main branch and the packed verifier of its experimental branch. The propose and finalize rows of \cref{tab:gas} sum to $4.2\times10^{5}$, and the single happy-path test measures $4.1\times10^{5}$ with either verifier. For direct verification, dividing the exact costs ($1.08\times10^{9}$ and $1.79\times10^{8}$) by the $2^{24}$ cap gives $64.2$ and $10.6$, hence at least $65$ and $11$ transactions. At $N=2^{16}$ with $189$ distinct reveals (\cref{tab:cert}), the measured cost per distinct reveal ($8.3\times10^{6}$ NIST, $1.4\times10^{6}$ packed) extrapolates to $1.6\times10^{9}$ and $2.6\times10^{8}$ gas. An open-all posting uses a different serialization from the compact certificate: each of the $193$ records carries its signature and two single-leaf paths ($4{,}160$ bytes in ABI encoding), for $803{,}580$ bytes in $7$ transactions under the $128$\,KB limit. Its $3.0\times10^{7}$ gas is set almost entirely by the calldata floor; execution adds only $1.6$--$2.0\times10^{6}$.
\section{Extended Discussion}\label{S-discussion}

This section gives the full version of the discussion in \cref{sec:discussion} of the main paper.

\subsection{Scope of the L1 Assumption}\label{S-disc-l1}
PQ-Vault protects assets against the failure of \emph{account-level} cryptography. A CRQC that breaks ECDSA also breaks the BLS signatures of Ethereum's consensus, and \cref{asm:l1} is therefore not implied by our threat model: it assumes that L1 consensus is not attacked, or is restored by a hard fork that preserves contract state~\cite{buterin2024emergency}. Such an attack is publicly visible and would meet a social response, and after such a fork vault balances remain spendable only through PQ predicates, whereas EOAs need special recovery. For the same reason, validators credit deposits from full L1 nodes, not from BLS-based light clients. \Cref{asm:l1} also bounds censorship: $T_{\mathsf{ch}}$ must exceed the time for which block producers can be bribed to exclude a challenge, which motivates our seven-day window. \Cref{asm:slow} is needed only for transactions that expose an ECDSA key: deposits, initial bindings, and the bonds of relayers and watchers. Private submission channels reduce the exposure. PQ smart accounts for relayers and watchers would remove it.

\subsection{Data Availability and Watcher Incentives}\label{S-disc-da}
\Cref{thm:opf} assumes that one honest watcher is online. Withholding blob data cannot cause a false acceptance (\cref{sec:opf}). Bonds follow from \cref{tab:gas} for a gas-price ceiling $p^{\max}$: $b_C\ge 8.2\times10^{6}p^{\max}$ covers a response, and $b_R\ge b_C$ plus a reward. Open-all requests carry no bond, so withholding cannot drain honest watchers: per proposal, a watcher spends at most one request fee and one bond $b_C$ (\cref{asm:watcher}), which it recovers with a reward when it exposes a failing record. The price is that anyone can force each proposal to post its records once, which costs the relayer $3.0\times10^{7}$ gas against about $2.4\times10^{4}$ for the request; relayers must recover this cost through fees, and G4 relies on their doing so. The prototype implements the vault and dispute logic but not the binding registry or the canary, and its reveal rule targets $\lambda=128$ per checkpoint rather than a lifetime budget (\cref{sec:rule}). Bonds deter griefing, not theft; safety rests on \cref{asm:watcher}, and any depositor can watch at about $20$\,ms per checkpoint ($18$\,ms with round-3 Dilithium2, $21$\,ms with FIPS~204). A full incentive analysis, including relayer compensation, is left open. Blob (KZG) commitments must never be on the acceptance path (\cref{prop:snark}).

\subsection{Public Leader Schedule}\label{S-disc-leader}
QPoS-S fixes and publishes each epoch's leaders when the seed is final. This makes the quantum proof of \cref{thm:fresh} possible: freshness rests on one hidden honest secret, and binding is statistical in the QROM. It also reveals each epoch's leaders as soon as the epoch begins, and an adversary with late reveals sees all its candidate schedules. QPoS-S limits both effects. With $L=3$ fallback proposers per slot, an adversary that suppresses $B$ honest proposers per slot after seeing the schedule removes the honest proposer only from slots that are not guaranteed: with probability $0.3$ per slot at $\alpha=0.3$ without denial of service and $0.514$ with $B=1$, so $n=31$ and $n=56$ slots suffice for inclusion except with probability $2^{-40}+2E(\epsilon_{\mathsf{LE}}+\delta_{\mathsf{LE}})$ (\cref{thm:schedule,cor:real}). Without fallback proposers, a single suppressed leader per slot would halt the sidechain. The stake-weighted random reveal order, keyed by the previous seed, makes late adversarial reveals rare ($\Pr[c\ge x]\le\mu\alpha^x$, \cref{prop:late}). Without denial of service, the target-slot probability at $\alpha=0.3$ is then at most $0.622$, against $0.997$ for an adversary that owns the last four reveals, but still above the $0.348$ of the single-slot runner-up attack on hash-chain PoS; for $\alpha\le0.2$ it is at most $0.265$. Inclusion of honest reveals is proved (\cref{lem:incl}), given that the finality gadget finalizes each block of the reveal window within its slot. Instant adaptive corruption, which could seize the last honest revealer once the order is known, is outside our model; corruptions take effect three epochs after they are requested. Hidden schedules need a PQ verifiable random function or secret leader election, whose quantum analysis we leave open. Because PQ-Vault attaches no exclusive role to a slot, a censored or attacked leader delays the sidechain but not certification.

\subsection{Limitations of the Evaluation}\label{S-disc-limits}
The sidechain is analyzed by simulation only. The network experiment emulates WAN latency on shared cores rather than running on geo-distributed hosts, with five rounds per configuration, and we did not inject relayer or data-withholding faults end to end. If the corrupt third abstains and $85\%$ of honest stake signs, $\Ws/\Wp=1.7$ and $\kr=340$, which enlarges certificates but leaves the optimistic L1 path unchanged. Gas was measured in a local Foundry EVM, not on a public network, and the hash-based baseline was composed from measured components rather than run end to end. Our sidechain relies on a standard BFT finality gadget that we did not re-analyze, and the leader-election bounds assume that the gadget finalizes each block of the reveal window within its slot; inclusion of honest reveals is then proved (\cref{lem:incl}). All our bounds hold in the QROM with ML-DSA's internal hash modeled as an independent oracle. A fully succinct PQ acceptance path would prove certificate verification inside a transparent, hash-based proof system; the prototype circuit of~\cite{NIROPoK2025} for the Dilithium key relation ($8{,}192$ multiplication gates, $140$\,KB proof, $0.7$\,s proving) indicates that this is feasible but costly today. These are natural targets for future work, together with a formally verified vault contract. Our analysis applies to any EUF-CMA signature scheme. Falcon-512 halves certificate size (\cref{sec:baselines}) but needs floating-point Gaussian sampling, which is hard to implement in constant time, and SLH-DSA signatures are much larger. Combining certificates with synchronized hash-based multi-signatures~\cite{drake2025hashsig} is a promising direction.

\ifCLASSOPTIONcompsoc
\section*{Acknowledgments}
\else
\section*{Acknowledgment}
\fi
We thank Saeid Yazdinejad for his simulation work on the design of~\cite{myeprint}.

\bibliographystyle{IEEEtran}
\bibliography{refs}

\end{document}